  \providecommand\BibTeX{{%
    \normalfont B\kern-0.5em{\scshape i\kern-0.25em b}\kern-0.8em\TeX}}}
\algrenewcommand\algorithmicrequire{\textbf{Input:}}
\algrenewcommand\algorithmicensure{\textbf{Output:}}
\newcommand{\ef}{\textsf{EF}}
\newcommand{\sdef}{\textsf{SD-EF}}
\newcommand{\wef}{\textsf{WEF}}
\newcommand{\wsdef}{\textsf{WSD-EF}}
\newcommand{\prop}{\textsf{PROP}}
\newcommand{\sdprop}{\textsf{SD-PROP}}
\newcommand{\wprop}{\textsf{WPROP}}
\newcommand{\wsdprop}{\textsf{WSD-PROP}}
\newcommand{\efone}{\textsf{EF1}}
\newcommand{\sdefone}{\textsf{SD-EF1}}
\newcommand{\wefone}{\textsf{WEF1}}
\newcommand{\wsdefone}{\textsf{WSD-EF1}}
\newcommand{\propone}{\textsf{PROP1}}
\newcommand{\wpropone}{\textsf{WPROP1}}
\newcommand{\wsdpropone}{\textsf{WSD-PROP1}}
\newcommand{\wpropx}{\textsf{WPROPx}}
\newcommand{\wsdpropx}{\textsf{WSD-PROPx}}
\newcommand{\po}{\textsf{PO}}
\newcommand{\seq}{\textsf{SEQ}}
\begin{document}

\title[Proportional Allocations via Matchings]{Weighted Proportional Allocations of Indivisible Goods and Chores: Insights via Matchings}



\author{Vishwa Prakash H.V.}
\affiliation{
  \institution{Chennai Mathematical Institute}
  \city{Chennai}
  \country{India}}
\email{vishwa@cmi.ac.in}

\author{Prajakta Nimbhorkar}
\affiliation{
  \institution{Chennai Mathematical Institute}
  \city{Chennai}
  \country{India}}
\email{prajakta@cmi.ac.in}

\begin{abstract}
We study the fair allocation of indivisible goods and chores for agents with ordinal preferences and arbitrary entitlements. In both the cases of goods and chores, we show that there always exist allocations that are \textit{weighted necessarily proportional up to one item} (\wsdpropone{}), that is, allocations that are \wpropone{} under all additive valuations consistent with agents' ordinal preferences. We give a polynomial-time algorithm to find such allocations by reducing it to a problem of finding perfect matchings in a bipartite graph. We give a complete characterization of these allocations as extreme points of a perfect matching polytope. Using this polytope, we can optimize any linear objective function over all \wsdpropone{} allocations, for example, to find a min-cost \wsdpropone{} allocation of goods or most efficient \wsdpropone{} allocation of chores.  Additionally, we show the existence and computation of sequencible (\seq{}) \wsdpropone{} allocations by using rank-maximal perfect matching algorithms and show the incompatibility of Pareto optimality under all valuations and \wsdpropone{}. 

We also consider the Best-of-Both-Worlds (BoBW) fairness notion. By using our characterization, we give a polynomial-time algorithm to compute Ex-ante envy-free (\wsdef{}) and Ex-post \wsdpropone{} allocations for both goods and chores. 
\end{abstract}


\begin{CCSXML}
<ccs2012>
   <concept>
       <concept_id>10003752.10010070.10010099.10010100</concept_id>
       <concept_desc>Theory of computation~Algorithmic game theory</concept_desc>
       <concept_significance>500</concept_significance>
       </concept>
   <concept>
       <concept_id>10002950.10003624.10003633.10003642</concept_id>
       <concept_desc>Mathematics of computing~Matchings and factors</concept_desc>
       <concept_significance>500</concept_significance>
       </concept>
   <concept>
       <concept_id>10002950.10003624.10003625.10003628</concept_id>
       <concept_desc>Mathematics of computing~Combinatorial algorithms</concept_desc>
       <concept_significance>300</concept_significance>
       </concept>
   <concept>
       <concept_id>10003752.10003809.10003636</concept_id>
       <concept_desc>Theory of computation~Approximation algorithms analysis</concept_desc>
       <concept_significance>300</concept_significance>
       </concept>
 </ccs2012>
\end{CCSXML}

\ccsdesc[500]{Theory of computation~Algorithmic game theory}
\ccsdesc[500]{Mathematics of computing~Matchings and factors}
\ccsdesc[300]{Mathematics of computing~Combinatorial algorithms}
\ccsdesc[300]{Theory of computation~Approximation algorithms analysis}


\keywords{Fair Division, Matchings, Proportionality}





\pagestyle{fancy}
\fancyhead{}


\maketitle 


\section{Introduction}\label{sec:intro}

Discrete fair allocation is a fundamental problem at the intersection of economics and computer science with applications in various multi-agent settings. Here, we are required to allocate a set of indivisible items to agents based on their preferences, such that each item is allocated to exactly one agent. This setting is commonly referred to as the \emph{assignment problem} \cite{gardenfors1973assignment,manlove2013algorithmics,foley1966resource,wilson1977assignment,sd}. In this setting, there is a set \(A\) of \(n\) agents and a set \(B\) of \(m\) indivisible items with each agent \(a_i\in A\) expressing an ordinal preference ordering over the items in $B$, given by a permutation \(\pi_i(B)\) of the items in \(B\). In addition to their ordinal preferences, agents may also have private cardinal valuations that reflect the utility or disutility of each item, ensuring compatibility with their ordinal preferences. The goal is to allocate the items to agents in a \emph{fair} manner. In this paper, we focus on {\em additive valuations}, where the utility (or disutility) of a set of items is the sum of the utilitites (or disutilities) of individual items.

The set \(B\) can represent \emph{goods}, covering scenarios such as inheritance division, house allocation, allocation of public goods, among others. Alternatively, \(B\) could be a set of \emph{chores}, modeling situations like task allocation among employees or household chore distribution between couples and so on.

Among various notions of fairness studied in the literature, two prominent ones are \emph{Envy-freeness} (\ef{}) and \emph{Proportionality} (\prop{}). An allocation is said to be {\em envy-free} if no agent would prefer to have the bundle held by any of the others. On the other hand, proportionality requires that each agent receives a set of items whose value is at least (at most, for chores) her proportional share of the total value of all the items. Unfortunately, \prop{} or \ef{} allocations do not always exist and are NP-hard to compute \cite{lipton2004approximately,Bouveret2014,sd}. Hence relaxations of these notions have been proposed in literature. \prop{} is relaxed as Proportionality \emph{up to one item} (\propone{}) \cite{prop1popublic,prop1market,prop1chores,propx-doesnt-exist} and  \ef{} is relaxed as Envy-free up to one item (\efone{})\cite{budish2011combinatorial,lipton2004approximately}.

In practical scenarios, agents can have varying \emph{entitlements} in situations such as inheritance division, division of shares among investors and so on. To capture such cases, a more generalized version of these notions, namely the weighted envy-freeness \wef{} \cite{chakraborty2021weighted,propx-doesnt-exist} and weighted proportionality \wprop{}\cite{bobw2023_1,bobw2023_2} are considered.

Given only the ordinal preferences, these notions of fairness are further strengthened by considering the \emph{stochastic dominance} (SD) relation. An agent prefers one allocation over another with respect to the SD relation if she gets at least as much utility from the former allocation as the latter for all cardinal utilities consistent with the ordinal preferences. An allocation is said to be weighted \emph{necessarily} proportional (also known as weighted strong SD proportionality) (\wsdprop{}) if it remains \wprop{} under all cardinal utilities consistent with the ordinal preferences. Similarly, the notion of envy-freeness (\ef{}) can be extended to weighted necessarily envy-freeness (\wsdef{}). Clearly, just like \prop{} and \ef{}, \wsdprop{} and \wsdef{} allocations may not exist. In fact, as shown in \cite{divorce,sd}, in an \sdprop{} allocation, each agent must receive their most favorite item - which is not realizable when two or more agents have the same most favorite item. For agents with varying entitlements and under ordinal preferences, these notions can be relaxed to \wsdefone{} and \wsdpropone{} (see e.g. \cite{wu2023weighted,chakraborty2021weighted,bobw2023_1}).

Given the non-existence of \wprop{} allocations, a well studied notion of fairness is that of \wpropx{}. Although \wpropx{} allocations exists under cardinal valuations \cite{li2022almost}, an analogous notion for the ordinal instances - namely, \wsdpropx{} allocations - need not exist (See Example~\ref{subsec:no_wpropx}). This further motivates the study of \wsdpropone{} allocations. 

Another approach to tackle the non-existence of \ef{} and \prop{} allocations is via randomization. A promising notion of fairness which has gained popularity over the recent years, is the notion of \emph{"Best of Both Worlds Guarantees"} (BoBW) \cite{bobw2020_1,bobw2020_2,bobw2021_1,bobw2023_1,bobw2023_2}. The aim is to compute a randomized fair allocation which also guarantees an approximate fairness notion in the deterministic setting. Suppose \(\mathcal{P}\) and \(\mathcal{Q}\) are two notions of fairness. Given a set of \(k\) allocations \(A=\langle A_1, A_2,\cdots,A_k \rangle\) and a probability distribution \(p=\langle p_1,p_2,\cdots,p_k \rangle\) over \(A\), the pair \(\langle A,p \rangle\) is said to be \emph{ex-post} \(\mathcal{Q}\) fair if each allocation \(A_1,A_2,\cdots,A_k\) are \(\mathcal{Q}\) fair and is called \emph{ex-ante} \(\mathcal{P}\) fair if \(\mathcal{P}\) fairness is guaranteed in expectation.

In \cite{bobw2020_1,bobw2020_2}, Aziz and Freeman et al. proposed polynomial time algorithms - the PS-Lottery algorithm (also called as the \textsc{Eating} algorithms) to compute ex-ante \sdef{} and ex-post \sdefone{} allocations of goods. In their approach, the agents are asked to hypothetically \emph{eat} the goods to produce a fractional allocation which is later decomposed into integral allocations. This approach was later modified in \cite{bobw2023_1} where the agents eat the goods at varying speeds to compute allocations that are ex-ante \wsdef{} and ex-post \wsdpropone{} along with \emph{weighted transfer envy-free up to one good} \textsf{WEF(1,1)}.

In practice, multiple allocations satisfying \wsdpropone{} might exist, and some could be {\em better} than others. For instance, different allocations of goods might incur different shipping/transportation costs or agents might have varying efficiency or expertise for each chore, independent of their own disutility or preference. In such cases, it becomes essential to optimize over the set of all \wsdpropone{} allocations. In this work, we particularly address this problem, providing a unified way to deal with goods and chores.

\subsection{Our Contributions}\label{subsec:contribution}
In this paper, we investigate fair allocation problems for agents with ordinal preferences and unequal entitlements. We provide the following key contributions:
\begin{itemize}
    \item In Theorem~\ref{thm:WPROP1-chores-exists} and Theorem~\ref{thm:WPROP1-goods-exists}, we show that the problem of existence and computation of \wsdpropone{} allocations reduces to that of the existence and computation of a perfect matching in a bipartite graph. We give such a reduction for both goods and chores. This gives a straightforward, matching based, polynomial-time algorithm that seamlessly adapts to computation of a \wsdpropone{} allocation of both goods and chores. 
    
    
    \item Moreover, we show that every perfect matching in the graph constructed above corresponds to a \wsdpropone{} allocation, and vice versa. Thus, we give a complete characterization of \wsdpropone{} allocations for both goods and chores as extreme points of a perfect matching polytope. This enables optimization of any linear objective function over the set of all \wsdpropone{} allocations. (See Section~\ref{sec:opt-chores})
    
    \item We study the economic efficiencies that can be guaranteed along with \wsdpropone{}. We provide a counter-example to show that Pareto optimality\footnote{Here, given only ordinal rankings, we call an allocation cardinally \po{} if it is \po{} under all cardinal valuations compatible with the ordinal rankings.} (\po{}) is not compatible with \wsdpropone{}. 
    \item On the positive side, In Theorem~\ref{thm:seq-chores} and Theorem~\ref{thm:seq-goods}, we show that every allocation that corresponds to a {\em rank-maximal} perfect matching in our perfect matchings instance is sequencible. En-route to this result, we show that rank-maximal matchings and rank-maximal perfect matchings in any bipartite graph are sequencible. See Lemma~\ref{lemma:rank-maximal-perfect}. This may be of independent interest.
    
    \item We also consider the best of both world fairness notion, in the context of our characterization. Our characterization leads to a simple polynomial-time algorithm for computing ex-ante \wsdef{} ex-post \wsdpropone{} allocations for both goods and chores. (See Theorem~\ref{thm:bobw-chores}). To the best of our knowledge, prior to this work, this result was not known for chores.
    
\end{itemize}
In the main part, we state the results for chores, and the results for goods are given in Appendix~\ref{sec:goods}.
{\em Extensions: } Our characterization of \wsdpropone{} allocations in terms of matchings paves a way to use tools from fairness in matchings to further generalization of the allocation problem. For instance, items (and also agents) can belong to various categories depending on their attributes, and there can be upper and lower quotas on each category of items that can be allotted and also on each category of agents as to the number of items they get. Existing results from the literature on fairness in matchings (e.g. \cite{IsraeliGapYear,IndiaReservation,PLN22,SankarLNN21}) can then be used to determine the existence of \wsdpropone{} allocations satisfying the category quotas, and outputting one if it exists.

\subsection{Related Work}\label{subsec:related_work}
Over the past two decades, there has been a growing interest in the study of computation of discrete fair division \cite{bezakova2005allocating,bouveret2010fair,lipton2004approximately, Bouveret2014, sd, aziz2022fair, nguyen2013survey, amanatidis2022fair}. In \cite{budish2011combinatorial,lipton2004approximately}, Budish et al and Lipton et al show the existence of \efone{} allocations for goods. In \cite{prop1popublic,prop1market}, Conitzer et al and Baman et al, extensively studied \propone{} allocations for goods. While significant advancements have been made in the allocation of goods, progress in the case of chores has been notably slower, and our understanding of chore allocation remains relatively limited in comparison to goods allocation. In 2019, Br{\^{a}}nzei and Sandomirskiy in \cite{prop1chores} extended the notion of \propone{} to the case of chores and gave a polynomial time\footnote{The algorithm runs in polynomial time when either the number of agents or the number of chores is fixed} algorithm to compute them. In 2020, Bhaskar et al showed the existance and polynomial time computation of \efone{} allocation of chores \cite{bhaskar2020approximate}. 

For agents with varying entitlements, In 2021, Chakraborty et al \cite{chakraborty2021weighted} showed the existence and  computation of \wefone{}+\po{} allocation in pseudo-polynomial time. They showed that in the context of weighted allocations, \wefone{} does not imply \wpropone{} for goods, in contrast to the unweighted case. Li et al, in \cite{li2022almost} showed the existence and computation of \wpropx{} (which implies \wpropone{}) allocation of chores. Later in 2023, Ex-post \wpropone{} allocations along with Ex-ante \wef{} allocations for goods were studied in \cite{bobw2023_1,bobw2023_2}. 
In \cite{bobw2023_2}, Aziz et al. proposed the Weighted Max Nash lottery Algorithm which computes an Ex-ante \po{} and Ex-post \wpropone{} allocation, for agents with additive cardinal valuations. 

In the weighted setting with agents expressing ordinal preferences for goods, Pruhs et al. in \cite{divorce}  reduced the problem of \wsdprop{} allocation of goods to that of finding perfect matchings in a bipartite graph. This was later generalized to preference lists with ties by Aziz et al. in \cite{sd}. While \wsdprop{} allocations may not always exist, their method provides a polynomial-time algorithm to compute one when it exists. Our reductions are inspired from their work. In 2023, Wu et al. proposed the \emph{Reversed Weighted Picking Sequence} Algorithm \cite{wu2023weighted} which always computes a \wsdefone{}+\seq{} (thus \wsdpropone{}+\seq{}) allocation of chores in polynomial time. 
\section{Preliminaries}\label{sec:prelim}

Let \(A=\{a_1,a_2,\ldots,a_n\}\) denote the set of \(n\) agents and let \(B=\{b_1,b_2,\ldots,b_m\}\) be set of \(m\) indivisible items. Each agent \(a_i\) expressing ordinal preferences over the items  given by a permutation \(\pi_i\) of the items in \(B\). The item set \(B\) can either be a set of goods or a set of chores. Each agent \(a_i\in A\) is endowed with an \emph{entitlement} \(\alpha_i \in [0,1]\) such that \(\sum_{a_i\in A} \alpha_i = 1\). 

Given an agent \(a_i\in A\), we denote the ordinal preference of \(a_i\) as a rank function \(\pi_i:[m]\to B\). The \(j^{\text{th}}\) rank item is given by \(\pi_i(j)\) and the rank of an item \(b\) is given by \(\pi_i^{-1}(b)\). In the case of goods, \(\pi_i(j)\) represents the \(j^{\text{th}}\)-most favorite good and in the case of chores, \(\pi_i(j)\) is the \(j^{\text{th}}\)-least favorite chore.

An instance of the allocation problem under ordinal valuations is represented by a tuple \(\mathcal{I} = \langle A, B, \Pi, \mathcal{F} \rangle\), where \(A\) and \(B\) are the sets of agents and goods, respectively. \(\Pi = \{\pi_1, \pi_2, \ldots, \pi_n\}\) denotes the set of rank functions, and \(\mathcal{F} = \{\alpha_1, \alpha_2, \ldots, \alpha_n\}\) represents the entitlements of each agent.

\paragraph{Fractional and Randomized Allocations: } We adopt the definitions of fractional and randomized allocations as outlined in \cite{bobw2020_1,bobw2020_2, bobw2021_1, bobw2023_2}. A \emph{fractional} allocation of the items in \(B\) to the agents in \(A\) is given by a non-negative \(n\times m\) matrix \(X=[x_{i,j}] \in [0,1]^{n\times m}\) such that an entry \(x_{i,j}\) denotes the fraction of the item  \(b_j\) allocated to the agent \(a_i\); for each item \(b_j\in B\), \(\sum_{a_i\in A} x_{i,j}=1\). We denote \(X_i=\langle x_{i,1},x_{i,2},\ldots,x_{i,m}\rangle\) as the \emph{fractional bundle} of items that is assigned to agent \(a_i\). A fractional allocation is integral, if \(x_{i,j}\in\{0,1\}\) for all \(a_i\in A\) and \(b_j \in B\). For an integral allocation \(X\), we denote with \textit{bundle} \(X_i\) the set of items that is assigned to agent \(a_i\) and the allocation \(X\) can be characterized by the bundles of the agents, i.e. \(X=\langle X_1,X_2,\ldots,X_n\rangle\).

A \textit{randomized allocation} is a lottery over integral allocations. In particular, a randomized allocation \(R\) is determined by \(k\) pairs \(\{(p_1,Y^1),(p_2,Y^2),\ldots,(p_k,Y^k)\}\), where each of the integral allocations \(Y^j\) for \(j\in[k]\), is implemented with probability \(p_j > 0\) and \(\sum_{j\in[k]}  p_j = 1\). We say that such an integral allocation is in the support of the randomized allocation. Moreover, we say that a fractional allocation \(X\) implements a randomized allocation \(Y\), if the marginal probability of agent \(a_i\) receiving item \(b_j\) is \(x_{i,j}\).

\paragraph{Cardinal Valuations: } Each agent \(a_i\in A\) can have a private cardinal valuation function \(v_i:2^B\to\mathbb{R}_{\ge0}\). When \(B\) represents a set of goods, \(v_i\) is called as a utility function. When \(B\) represents a set of chores, \(v_i\) is called as a disutility function. The heaviest (least favorite) chore is assigned the highest disutility value. We consider agent's valuation function to be additive, that is \(\forall S\subseteq B, v_i(S) = \sum_{b\in S} v_i(b)\).

A valuation function \(v_i\) is said to be \(\pi_i\)-\textit{respecting}, if \(v_i\) is consistent with the ordinal ranking \(\pi_i\). That is, \(\forall b,b'\in B,\  \pi^{-1}_i(b)<\pi^{-1}_i(b') \implies v_i(b)\ge v_i(b') \).  We denote the set of all \(\pi_i\)-respecting valuations as \(\mathscr{U}(\pi_i)\).

\paragraph{The Interval Representation of Items:} Consider an agent \(a_i\in A\) with an entitlement \(\alpha_i\).  We arrange the items along a number line from \(0\) to \(m\), such that the \(j^{\text{th}}\) rank item $\pi_i(j)$ occupies the interval $[j-1,j]$ for $1\leq j \leq m$. We refer to the interval $[j-1,j]$ as the item $\pi_i(j)$ itself. Furthermore, given an interval \(I=[p,q]\subseteq [0,m]\), we refer to \(I\)  as a fractional bundle itself.  If \(\delta\)  fraction of an interval \([j-1,j]\) overlaps with the interval \([p,q]\), that is \(|[j-1,j]\cap[p,q]|=\delta\) , then \(\delta\) fraction of the item \(b=\pi_i(j)\) belongs to the fractional bundle \(I\).  For a cardinal valuation function \(v_i\) of agent \(a_i\), the value of the bundle \(I=[p,q]\) is calculated as \(v_i(I)=\sum_{j\in[m]} |[j-1,j]\cap[p,q]|.v_i(\pi_i(j))\). 

Now, let the $[0,m]$ interval be sub-divided into \(k_i=\lceil m\alpha_i\rceil\) many intervals of lengths $\frac{1}{\alpha_i}$, except possibly the last interval, which can be shorter. The the $\ell^{\text{th}}$ interval is given by $I^i_\ell=\left[\frac{\ell-1}{\alpha_i},\frac{\ell}{\alpha_i}\right]$ for $1\leq \ell \leq \lfloor m\alpha_i\rfloor$, and, if $m\alpha_i$ is not integral, then the last interval is $\left[\frac{\lfloor m\alpha_i\rfloor}{\alpha_i},m\right]$. 
\begin{definition}
    For an agent \(a_i\in A\), we define the set of intervals \(I^i=\{I^i_1, I^i_2,\ldots,I^i_{k_i}\}\) as the \emph{interval set} of \(a_i\). 
\end{definition}
Note that if $\alpha_i<1$, length of each interval $\frac{1}{\alpha_i}>1$. Thus each interval \(I^i_\ell\), except possibly the last one, contains a non-zero portion of at least two consecutive items.

\paragraph{Stochastic Dominance(SD):}  A standard way of comparing fractional/randomized allocations is through first-order stochastic dominance. This notion has been extensively studied previously in \cite{BOGOMOLNAIA2001295,sd}.  An agent \(a_i\) prefers one allocation over another with respect to the SD relation if she gets at least as much value (or at most - in the case of chores) from the former allocation as the latter under all \(\pi_i\)-respecting cardinal valuations.

Suppose \(X_i\) and \(Y_i\) denote the fractional bundles of goods that an agent \(a_i\) receives  in the allocations \(X=[x_{ij}]\) and \(Y=[y_{ij}]\) respectively. We say that an agent \(a_i\) SD prefers \(X_i\) to \(Y_i\), denoted by \(X \succsim^{\text{SD}}_i Y\) if the following holds:
\[
    \forall j^*\in[m], \sum_{j:\pi_i(j)\ge \pi(j^*) } x_{i,j}\  \ge \sum_{j:\pi_i(j)\ge \pi(j^*) } y_{i,j}
\]
When \(X_i\) and \(Y_i\) denote bundles of chores, we say \(X \succsim^{\text{SD}}_i Y\) if the following holds:
\[
    \forall j^*\in[m], \sum_{j:\pi_i(j)\ge \pi(j^*) } x_{i,j}\  \le \sum_{j:\pi_i(j)\ge \pi(j^*) } y_{i,j}
\]

\subsection{Fairness and Efficiency Notions}

We begin with \emph{ex-ante - ex-post} notions as defined in \cite{BOGOMOLNAIA2001295,bobw2020_1,bobw2020_2,bobw2021_1,bobw2023_1,bobw2023_2}. For any property \(\langle{P}\rangle\) defined for an allocation, we say that a randomized allocation \(R\) satisfies \(\langle{P}\rangle\) \emph{ex-ante} if the allocation \(X\) that implements \(R\) satisfies \(\langle{P}\rangle\). For any property \(\langle{Q}\rangle\) defined for an integral allocation, we say that a randomized allocation \(R\) satisfies \(\langle{Q}\rangle\) \emph{ex-post} if every integral allocation in its support satisfies \(\langle{Q}\rangle\).

We now define various notions of weighted fairness under ordinal valuations. We start with the classic notion of \emph{envy-freeness.} Consider an instance of the allocation problem under ordinal valuations \(\mathcal{I} = \langle A, B, \Pi, \mathcal{F} \rangle\). 
\begin{definition}[\wsdef{}]\label{def:wsd-ef}
    Let \(B\) be a set of chores. An allocation \(X=\langle X_1,X_2,\ldots,X_n\rangle\) of \(B\) is said to be \emph{weighted} SD \emph{envy free (\wsdef{})}, if for every pair of agents \(a_i,a_k\in A\), we have
\[
    \frac{v_i(X_i)}{\alpha_i} \le \frac{v_i(X_k)}{\alpha_k} \quad\forall{v_i}\in \mathscr{U}(\pi_i),\forall{v_k}\in \mathscr{U}(\pi_k)
\]
And if \(B\) is a set of goods, then \(X\) is \emph{\wsdef{}} \cite{BOGOMOLNAIA2001295}, if for every pair of agents \(a_i,a_k\in A\), we have
\[
    \frac{v_i(X_i)}{\alpha_i} \ge \frac{v_i(X_k)}{\alpha_k} \quad\forall{v_i}\in \mathscr{U}(\pi_i),\forall{v_k}\in \mathscr{U}(\pi_k)
\]
\end{definition}

We consider the following notions of relaxed proportionality defined for integral allocations under cardinal valuations.
\begin{definition}[\wpropone{} \cite{propx-doesnt-exist}]\label{def:wprop1}
    Let \(B\) be a set of chores. In an integral allocation \(X=\langle X_1,X_2,\ldots,X_n\rangle\), a bundle \(X_i\) is said to be \emph{weighted}  \emph{proportional up to one item (\wpropone{})} for an agent \(a_i\) with a valuation function \(v_i\), if we have: 
\[
    \exists b\in X_i,\quad v_i(X_i\setminus \{b\}) \le \alpha_i.v_i(B)     
\]
And if \(B\) is a set of goods, then a bundle \(X_i\) is said to be \emph{\wpropone{}} for an agent \(a_i\) if we have: 
\[
    \exists b\in B,\quad v_i(X_i\cup \{b\}) \ge \alpha_i.v_i(B)     
\]
The allocation \(X\) is said to be \emph{\wpropone{}} if for all \(i\in[n]\), bundle \(X_i\) is \emph{\wpropone{}} for agent \(a_i\).
\end{definition}

Although the notion of \wpropone{} is conventionally defined for integral allocations, for the sake of analysis, we extend this notion to fractional allocations as follows:
\begin{definition}[fractional \wpropone{}]\label{def:wprop1f}
Let \(B\) be a set of chores. A fractional bundle \(X_i=\langle x_{i,1},x_{i,2},\ldots,x_{i,m}\rangle\) is \emph{\wpropone{}} for an agent \(a_i\) with a valuation function \(v_i\), if \( \exists b=\langle \beta_1,\beta_2,\ldots,\beta_m\rangle\), where \(\lVert{b}\rVert_1 = 1\) and \(0\le\beta_j\le x_{i,j}\) \(\forall j\in[m]\), we have \(v_i(X_i-b)\le \alpha_i.v_i(B)\). 

In the case of goods, a fractional bundle \(X_i\) is \emph{\wpropone{}} for an agent \(a_i\), if \( \exists b=\langle \beta_1,\beta_2,\ldots,\beta_m\rangle\), where \(\lVert{b}\rVert_1 = 1\) and \(\beta_j\ge0\) and \(x_{i,j}+\beta_j\le 1\) \(\forall j\in[m]\), we have \(v_i(X_i+b)\ge \alpha_i.v_i(B)\).
The allocation \(X\) is \emph{\wpropone{}} if bundle \(X_i\) is \emph{\wpropone{}} for every agent \(a_i\in A\).
\end{definition}

We can extend these definitions to the case of ordinal valuations as follows:
\begin{definition}[\wsdpropone{}]\label{def:wsd-prop1}
    An allocation \(X\) is said to be \emph{\wsdpropone{}}, if \(X\) is \emph{ \wpropone{} }for every agent \(a_i\in A\) under all valuations \(v_i \in \mathscr{U}(\pi_i)\)
\end{definition}

It is straightforward that for an agent \(a_i\), if a bundle \(X\) is \wsdpropone{}, then every bundle \(Y\) s.t. \(Y\succsim_i^{\text{SD}}X\) is also \wsdpropone{}.

Along with the notions of fairness, we study the following economic efficiencies considered in literature. 
\begin{definition}[Pareto Optimailty (\po{})]\label{def:po}
    For agents with cardinal valuations, an allocation \(X\) is said to be \emph{Pareto Optimal (\po{})} if there is no allocation \(Y\) that Pareto dominates it.  In the case of chore, this condition is expressed as  \(v_i(X_i) \le v_i(Y_i)\) for all \(i\in[n] \) and $\exists j\in [n]$ such that $v_j(X_j)<v_j(Y_j)$, while for goods, it is expressed as  \(v_i(X_i) \ge v_i(Y_i)\) for all \(i\in[n] \) and $\exists j\in [n]$ such that $v_j(X_j)>v_j(Y_j)$. 
\end{definition}

In the absence of Pareto Optimal allocations, a weaker notion of efficiency known as \textit{sequencibility} (\seq{})  is often considered. A \emph{picking sequence} of \(n\) agents for \(m\) items is an \(m\)-length sequence \(\sigma=\langle a'_1,a'_2,\ldots,a'_m\rangle\) where \(a'_i\in A\) for \(i\in[m]\).  An allocation \(X\) is the result of the picking sequence \(\sigma\) if it is the output of the following procedure: Initially every bundle is empty; then, at time step \(t\), agent \(a'_t\) inserts in her bundle the most preferred item among the available ones. Once an item is selected, it is removed from the set of the available items.
\begin{definition}[Sequencibility (\seq{})]\label{def:seq}
    An allocation \(X\) is said to be \emph{sequencible (\seq{})} if \(X\) is the result of some picking sequence \(\sigma\). 
\end{definition}
It is known that \po{}  implies \seq{}, and when number of agents \(n=2\), then \po{} is same as \seq{} \cite{Brams2002DividingTI}.
 
\subsection{Matchings}
Given a bipartite graph \(G=(A\cup B,E)\), an \(A\) perfect matching \(M\) is a matching in \(G\) that saturates all the vertices in \(A\). When \(|A|=|B|\), \(A\)-perfect matching is same as perfect matching. Given a matching \(M\) and a matched vertex \(a\in A\), we denote by \(M(a)\) the matched partner of \(a\). 

\paragraph{Rank-Maximal Matchings \emph{\cite{irving2003greedy, irving2006rank}}:} Consider a bipartite graph \(G=(A\cup B,E)\), s.t \(|A|=n,|B|=m\), where each vertex \(a\) in \(A\) ranks its neighbours \(N(a)\) from \(1\) to \(|N(a)|\). For each edge \((a,b)\in E\), let \ \(rank(a,b)\in [m]\) denote the rank of \(b\) in \(a\)'s ranking. The graph \(G\) along with the ranking is denoted as \(G=(A\cup B,E_1,E_2,\ldots,E_m )\) where \(E_i=\{(a,b)\in E\mid rank(a,b)=i\}\), for all \(i\in[n]\). A matching \(M\) in \(G\) can be decomposed as \(M=M_1\cup M_2\cup\cdots\cup M_m\) where \(M_i=M\cap{E_i}\). We define signature of a matching \(M\) in \(G\) as an \(m\) length tuple \(\rho(M)=\langle |M_1|,|M_2|,\ldots,|M_m| \rangle\).
\begin{definition}[Rank-Maximal Matching]
    Given a bipartite graph \(G=(A\cup B,E_1,E_2,\ldots,E_m)\), A matching \(M\) in \(G\) with lexicographically highest signature \(\rho(M)\) is called as a rank-maximal matching.
\end{definition}
Note that all rank-maximal matchings have identical signature. Furthermore, A rank-maximal matching need not be maximum size matching.
\begin{definition}[Rank-Maximal Perfect Matching]
    Given a bipartite graph \(G=(A\cup B,E_1,E_2,\ldots,E_m )\), a perfect matching \(M\) in \(G\) with lexicographically highest signature \(\rho(M)\) among all perfect matchings in \(G\) is called as a rank-maximal perfect matching.
\end{definition}

A matching \(M\) in \(G\) can be interpreted as an allocation of vertices in \(B\) to the vertices in \(A\). The ranks of the edges can be interpreted as the ordinal preferences of the vertices in \(A\). Under this interpretation, we borrow the definition of sequencibility (\seq{}) (Definition~\ref{def:seq}) for matchings. A matching \(M\) is said to be sequencible if the corresponding allocation \(M\) is sequencible. We denote by \(\rho(M)\), the picking sequence of vertices in \(A\) that constructs \(M\). 
\section{Existence and Computation of \wsdpropone{} Allocations for Chores via Matchings}\label{sec:existance-chores}

We now show that \wsdpropone{} allocations always exist for chores. To show this, we first characterize \wsdpropone{} bundles in Lemma~\ref{lemma:WPROP1-chores}. Using this lemma, we construct a bipartite graph \(G_c=(S\cup B,E)\) called an \emph{allocation graph} of a chores instance. We show that a \(B\)-perfect matching in \(G_c\) corresponds to a \wsdpropone{} allocation. We then use Hall's marriage condition \cite{Hall1987} to demonstrate that such a matching always exists, thus establishing the existence of \wsdpropone{} allocations. Later we extend these results to the case of goods in Appendix~\ref{sec:goods}
 
\begin{lemma}\label{lemma:WPROP1-chores} 
    Let \(T\subseteq B\) be a set of \(m_i\) chores, and let \(r_1<r_2<\cdots<r_{m_i}\) be the ranks of the chores in \(T\) in the ranking \(\pi_i\) of agent \(a_i\) (i.e, this set consists of the \(r_1\)-least favorite chore,  \(r_2\)-least favorite chore,\(\cdots\), and the \(r_{m_i}\)-least favorite chore for agent \(a_i\)). Then bundle \(T\) is \emph{\wsdpropone{}} for \(a_i\) if and only if the following two conditions hold: 

    \begin{eqnarray} 
        m_i & \le & \lfloor m\alpha_i\rfloor +1\label{eq:3}\\
    \forall 1\le \ell \le m_i,\quad
        r_\ell & \ge & \left \lceil\frac{\ell -1}{\alpha_i}\right \rceil \label{eq:4}
    \end{eqnarray}
\end{lemma}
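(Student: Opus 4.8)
Planning the proof of Lemma~\ref{lemma:WPROP1-chores}The plan is to unwind the definition of \wsdpropone{} for a bundle of chores using the interval representation, and to reduce the ``for all $\pi_i$-respecting valuations'' quantifier to a worst-case instance. Recall that $T$ is \wsdpropone{} for $a_i$ iff for every $v_i\in\mathscr{U}(\pi_i)$ there is one chore $b\in T$ with $v_i(T\setminus\{b\})\le\alpha_i v_i(B)$. Since dropping a chore can only help, the binding choice of $b$ is always the $r_{m_i}$-least favorite chore in $T$ (the heaviest one in $T$), so the condition becomes: for all $v_i\in\mathscr{U}(\pi_i)$, $v_i(T\setminus\{\pi_i(r_{m_i})\})\le\alpha_i v_i(B)$. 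The first step is to make this reduction precise: among all chores of $T$, removing $\pi_i(r_{m_i})$ minimizes $v_i(T\setminus\{b\})$ because $v_i$ is $\pi_i$-respecting, hence $v_i(\pi_i(r_{m_i}))\ge v_i(\pi_i(r_\ell))$ for every $\ell<m_i$.

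Next I would connect $T\setminus\{\pi_i(r_{m_i})\}$ to a fractional bundle and compare with the interval set $I^i=\{I^i_1,\dots,I^i_{k_i}\}$. The idea, following the spirit of the reduction of \cite{divorce,sd}, is: the set $T'=T\setminus\{\pi_i(r_{m_i})\}=\{\pi_i(r_1),\dots,\pi_i(r_{m_i-1})\}$ has value at most $\alpha_i v_i(B)$ under \emph{every} $\pi_i$-respecting $v_i$ exactly when $T'$ is SD-dominated (from the chores side) by the ``proportional'' fractional bundle $\langle \alpha_i,\dots,\alpha_i\rangle$, i.e. $\alpha_i v_i(B)=v_i(\langle\alpha_i,\dots,\alpha_i\rangle)$. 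So I want to show: $v_i(T')\le\alpha_i v_i(B)$ for all $v_i\in\mathscr{U}(\pi_i)$ $\iff$ for each threshold rank $j^*$, the number of chores of $T'$ with rank $\ge j^*$ is at most $\alpha_i\cdot(m-j^*+1)$; and then translate this tail condition into the two stated inequalities. Condition~\eqref{eq:3} comes from the $j^*=1$ case together with the floor: $|T'|=m_i-1\le\alpha_i m$ forces $m_i\le\lfloor m\alpha_i\rfloor+1$ since $m_i-1$ is an integer. Condition~\eqref{eq:4} comes from looking, for each $\ell$, at the tail starting just above $\pi_i(r_\ell)$: among $\pi_i(r_1),\dots,\pi_i(r_{m_i-1})$ exactly $\ell-1$ have rank $<r_\ell$ after we also account for position $r_\ell$, giving a constraint of the form $r_\ell-1\ge$ (something) $/\alpha_i$, which after taking ceilings rearranges to $r_\ell\ge\lceil(\ell-1)/\alpha_i\rceil$.

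Concretely, for the ($\Rightarrow$) direction I would, for each $\ell\in[m_i]$, pick the adversarial valuation that puts weight $1$ on the $r_\ell$ least-favorite chores and $0$ elsewhere (a $0/1$ $\pi_i$-respecting valuation), so that $v_i(B)=r_\ell$ and $v_i(T')\ge$ the number of the first $m_i-1$ chores of $T$ lying in the top $r_\ell$ ranks $=\ell-1$ when $\ell\le m_i$; \wsdpropone{} then yields $\ell-1\le\alpha_i r_\ell$, hence $r_\ell\ge(\ell-1)/\alpha_i$, and since $r_\ell$ is an integer, $r_\ell\ge\lceil(\ell-1)/\alpha_i\rceil$, which is \eqref{eq:4}; the $\ell=m_i$ instance (all weight on everything, i.e. the all-ones valuation) gives $m_i-1\le\alpha_i m$, i.e. \eqref{eq:3}. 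For the ($\Leftarrow$) direction, assuming \eqref{eq:3}–\eqref{eq:4}, I would show $T'\preceq^{\text{SD}}\langle\alpha_i,\dots,\alpha_i\rangle$ from the chores side: for every threshold $j^*$, $|\{\ell<m_i: r_\ell\ge j^*\}|\le\alpha_i(m-j^*+1)$, using that \eqref{eq:4} controls from below the ranks $r_\ell$ of the chores in $T'$ (equivalently, at most few chores of $T'$ are heavy), and then invoke the standard fact that SD-domination is equivalent to the inequality holding for all $\pi_i$-respecting valuations, concluding $v_i(T')\le v_i(\langle\alpha_i,\dots,\alpha_i\rangle)=\alpha_i v_i(B)$.

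The main obstacle I expect is the bookkeeping in the ($\Leftarrow$) direction: carefully verifying that \eqref{eq:4} for all $\ell$ (plus \eqref{eq:3}) really implies the full family of tail inequalities $|\{\ell: r_\ell\ge j^*\}|\le\alpha_i(m-j^*+1)$ for \emph{every} $j^*$, not just the ``tight'' thresholds $j^*=r_\ell$. The right way to handle this is to note it suffices to check the inequality at thresholds $j^*$ that are ``just above'' some $r_\ell$ (where the left side drops), reducing the continuum of $j^*$ to finitely many critical values, and then matching the $\ell$-indexed inequality \eqref{eq:4} to the $\ell$-th critical threshold; the floor/ceiling gymnastics (moving between $\lfloor m\alpha_i\rfloor$, $\lceil(\ell-1)/\alpha_i\rceil$, and the boundaries $\frac{\ell-1}{\alpha_i}$ of the intervals $I^i_\ell$) is routine once the combinatorial correspondence is set up, so I would state it cleanly and not belabor the arithmetic.
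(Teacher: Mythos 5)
Your overall skeleton — reduce the quantifier over all $\pi_i$-respecting disutilities to finitely many $0/1$ step valuations, then do threshold counting plus integrality — is sound, and your necessity direction is essentially the paper's (it also uses exactly such adversarial $0/1$ valuations). But as written the proposal has a genuine orientation error that sits exactly where the content of the lemma lies. Under the paper's conventions, $\pi_i(1)$ is the \emph{least favourite} (heaviest) chore and $\pi_i$-respecting means $v_i(\pi_i(1))\ge v_i(\pi_i(2))\ge\cdots$, so the heaviest chore of $T$ is $\pi_i(r_1)$, not $\pi_i(r_{m_i})$; your justification ``$v_i(\pi_i(r_{m_i}))\ge v_i(\pi_i(r_\ell))$'' is backwards, and the correct reduction is to ``$v_i(T\setminus\{\pi_i(r_1)\})\le\alpha_i v_i(B)$ for all consistent $v_i$''. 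Correspondingly, the right family of inequalities counts heavy \emph{prefixes}: for every $k\in[m]$, $|\{\ell\ge 2: r_\ell\le k\}|\le \alpha_i k$, coming from the step disutilities $\mathbf{1}[\mathrm{rank}\le k]$, which are the extreme rays of the cone of consistent disutilities. The family you state (chores of rank $\ge j^*$ bounded by $\alpha_i(m-j^*+1)$) is the goods-side condition and is \emph{not} equivalent: e.g.\ with $m=10$, $\alpha_i=\tfrac12$ and a residual bundle consisting of the two heaviest chores, your inequalities all hold, yet $v_i=(1,1,0,\ldots,0)$ violates the required bound. The inconsistency also bites in your necessity step: if you literally remove $\pi_i(r_{m_i})$, the removed chore is not among the top $r_\ell$ ranks, so the count under your adversarial valuation is $\ell$, not $\ell-1$, and you would ``derive'' the false necessary condition $r_\ell\ge\lceil \ell/\alpha_i\rceil$; your count of $\ell-1$ silently assumes the correct removal $\pi_i(r_1)$.

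Once the orientation is fixed, your route does go through and is a genuinely different (and arguably more elementary) sufficiency argument than the paper's: checking the inequality only on the steps $\mathbf{1}[\mathrm{rank}\le k]$, and noting that the left-hand count jumps only at $k=r_\ell$, yields Condition~\ref{eq:4} (and Condition~\ref{eq:3} at $k=m$) directly by integrality, with no need for the paper's reduction to the tight case and its explicit interval-based fractional bundle $T'$. What the paper's construction buys in exchange is the interval picture itself, which is reused later (Proposition~\ref{prop:slot-interval}, Lemma~\ref{lemma:fractionalmatching}); your cone/extreme-ray argument is self-contained but would need that picture reintroduced elsewhere.
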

\begin{proof}
Without loss of generality, for simplifying the notation, let the chores be renumbered according to the ranking of agent \(a_i\). Thus, \(b_j=\pi_i(j)\) for $1\leq j\leq m$. We  assume \(\alpha_i < 1\) as otherwise any bundle is \wsdpropone{} for agent \(a_i\) and further, \(\alpha_i >0\) as otherwise agent \(a_i\) can be removed from the instance. 

First, let us prove the necessity of these conditions. If any of the two conditions are not met, we exhibit a valuation \(v_i\) according to which, the bundle \(T\) is not \wpropone{} for agent \(a_i\). Suppose \(T\) violates condition~\ref{eq:3}. That is \(m_i\ge \lfloor m\alpha_i \rfloor +2\).  We set \(v_i(b_j)=1\) for all \(b_j\in B\). Under this valuation,
\[
    \forall b\in T,\quad v_i(T \setminus \{b\}) \ge \lfloor m\alpha_i\rfloor+1 > m\alpha_i = \alpha_i.v_i(B)
\]
Thus \(T\) is not a \wpropone{} bundle. Similarly, suppose \(T\) violates condition~\ref{eq:4}. That is,  \(r_\ell \le \left \lceil\frac{\ell -1}{\alpha_i}\right \rceil -1\) for some \(1\le \ell \le m_i\). We set \(v_i(b_j) = 1\) for all \(1\le j \le \left\lceil\frac{\ell -1}{\alpha_i}\right \rceil -1\) and \(v_i(b_j) = 0\) for all \(j\ge\left\lceil\frac{\ell -1}{\alpha_i}\right \rceil\). Under this valuation, \(\forall b\in T\) we have 
\[v_i(T\setminus \{b\}) \ge \ell-1 = \left(\frac{\ell-1}{\alpha_i}\right)\alpha_i>\left(\left\lceil\frac{\ell-1}{\alpha_i}\right\rceil-1\right)\alpha_i =\alpha_i.v_i(B)\]
Therefore, the bundle \(T\) is not \wpropone{}.

We now show the sufficiency of these conditions. Suppose conditions~\ref{eq:3} and \ref{eq:4} hold true for the bundle \(T\). It suffices to consider the case when both the conditions~\ref{eq:3} and \ref{eq:4} are tight, except \(r_1 = 1\).  This is because, for any other bundle \(Y_i=\{\pi(r'_1),\pi(r'_2)\cdots,\pi(r'_k)\}\) where \(1\le r'_1<r'_2<\cdots <r'_k\),  and at least one of the conditions \ref{eq:3} or \ref{eq:4} is not tight, we have \(Y_i \succsim^\text{SD}_i T\) since, for all \(1\le \ell \le k\),  \(r'_\ell \ge r_\ell\). 

To show that \(T\) is \wsdpropone{} for agent \(a_i\), we construct a fractional allocation \(T'\) using \(T\) such that \(T\succsim_i^{\text{SD}}T'\) and \(T'\) is a \wsdpropone{} allocation. 

Consider the interval set \(I^i=\{I^i_1, I^i_2,\ldots,I^i_{k_i}\}\)  of agent  \(a_i\). Include the chore \(b_1\) in \(T'\). For any other chore \(b_j\in T\setminus\{b_1\}\), we know that \(\pi^{-1}_i(b_j) = \left\lceil \frac{\ell -1}{\alpha_i} \right\rceil \)  for some \(\ell\in [m_i]\). Therefore, a non-zero fraction of the chore \(b_j\) lies in the right end of the interval \(I^i_{\ell-1} = \left[ \frac{\ell-2}{\alpha_i},\frac{\ell-1}{\alpha_i} \right]\). Suppose \(\frac{\ell-1}{\alpha_i} = \left\lceil \frac{\ell-1}{\alpha_i}\right\rceil -\delta\)  for some \(0<\delta\le1\). That is, \(1-\delta\) fraction of \(b_j\) lies in \(I^i_{\ell-1}\) and the remaining \(\delta\) portion lies in the interval \(I^i_\ell\). Then, from the interval \(I^i_{\ell-1}\) include the \(1-\delta\)  fraction of chore \(b_j\) and \(\delta\) fraction of the preceding chore \(b_{j-1}\) in \(T'\) (as shown in Figure~\ref{fig:5}). Under any valuation \(v_i \in \mathscr{U}{(\pi_i)} \),  for every chore \(b_j\) we have  \(\delta.{v_i(b_{j-1})}+(1-\delta)v_i(b_j)\ge v_i(b_j)\). Therefore it is clear that \(T\succsim^{\text{SD}}_{i}T'\). 
\begin{figure}
    \centering
 
\tikzset{
pattern size/.store in=\mcSize, 
pattern size = 5pt,
pattern thickness/.store in=\mcThickness, 
pattern thickness = 0.3pt,
pattern radius/.store in=\mcRadius, 
pattern radius = 1pt}
\makeatletter
\pgfutil@ifundefined{pgf@pattern@name@_igxdfejcy}{
\pgfdeclarepatternformonly[\mcThickness,\mcSize]{_igxdfejcy}
{\pgfqpoint{0pt}{0pt}}
{\pgfpoint{\mcSize+\mcThickness}{\mcSize+\mcThickness}}
{\pgfpoint{\mcSize}{\mcSize}}
{
\pgfsetcolor{\tikz@pattern@color}
\pgfsetlinewidth{\mcThickness}
\pgfpathmoveto{\pgfqpoint{0pt}{0pt}}
\pgfpathlineto{\pgfpoint{\mcSize+\mcThickness}{\mcSize+\mcThickness}}
\pgfusepath{stroke}
}}
\makeatother

 
\tikzset{
pattern size/.store in=\mcSize, 
pattern size = 5pt,
pattern thickness/.store in=\mcThickness, 
pattern thickness = 0.3pt,
pattern radius/.store in=\mcRadius, 
pattern radius = 1pt}
\makeatletter
\pgfutil@ifundefined{pgf@pattern@name@_gx32mk8fb}{
\pgfdeclarepatternformonly[\mcThickness,\mcSize]{_gx32mk8fb}
{\pgfqpoint{0pt}{0pt}}
{\pgfpoint{\mcSize+\mcThickness}{\mcSize+\mcThickness}}
{\pgfpoint{\mcSize}{\mcSize}}
{
\pgfsetcolor{\tikz@pattern@color}
\pgfsetlinewidth{\mcThickness}
\pgfpathmoveto{\pgfqpoint{0pt}{0pt}}
\pgfpathlineto{\pgfpoint{\mcSize+\mcThickness}{\mcSize+\mcThickness}}
\pgfusepath{stroke}
}}
\makeatother
\tikzset{every picture/.style={line width=0.75pt}} 

\begin{tikzpicture}[x=0.75pt,y=0.75pt,yscale=-1,xscale=1,scale=0.9]

\draw    (185,75) -- (503,75) ;
\draw [shift={(505,75)}, rotate = 180] [color={rgb, 255:red, 0; green, 0; blue, 0 }  ][line width=0.75]    (10.93,-3.29) .. controls (6.95,-1.4) and (3.31,-0.3) .. (0,0) .. controls (3.31,0.3) and (6.95,1.4) .. (10.93,3.29)   ;
\draw    (185,75) -- (495,75) (215,73) -- (215,77)(245,73) -- (245,77)(275,73) -- (275,77)(305,73) -- (305,77)(335,73) -- (335,77)(365,73) -- (365,77)(395,73) -- (395,77)(425,73) -- (425,77)(455,73) -- (455,77)(485,73) -- (485,77) ;
\draw    (185,75) -- (185,7) ;
\draw [shift={(185,5)}, rotate = 90] [color={rgb, 255:red, 0; green, 0; blue, 0 }  ][line width=0.75]    (10.93,-3.29) .. controls (6.95,-1.4) and (3.31,-0.3) .. (0,0) .. controls (3.31,0.3) and (6.95,1.4) .. (10.93,3.29)   ;
\draw  [fill={rgb, 255:red, 155; green, 155; blue, 155 }  ,fill opacity=1 ] (185,25) -- (215,25) -- (215,75) -- (185,75) -- cycle ;
\draw   (185,86) .. controls (185,90.67) and (187.33,93) .. (192,93) -- (225,93) .. controls (231.67,93) and (235,95.33) .. (235,100) .. controls (235,95.33) and (238.33,93) .. (245,93)(242,93) -- (278,93) .. controls (282.67,93) and (285,90.67) .. (285,86) ;
\draw   (285,86) .. controls (285,90.67) and (287.33,93) .. (292,93) -- (325,93) .. controls (331.67,93) and (335,95.33) .. (335,100) .. controls (335,95.33) and (338.33,93) .. (345,93)(342,93) -- (378,93) .. controls (382.67,93) and (385,90.67) .. (385,86) ;
\draw  [pattern=_igxdfejcy,pattern size=6pt,pattern thickness=0.75pt,pattern radius=0pt, pattern color={rgb, 255:red, 0; green, 0; blue, 0}] (280,45) -- (305,45) -- (305,75) -- (280,75) -- cycle ;
\draw  [pattern=_gx32mk8fb,pattern size=6pt,pattern thickness=0.75pt,pattern radius=0pt, pattern color={rgb, 255:red, 0; green, 0; blue, 0}] (385,50) -- (395,50) -- (395,75) -- (385,75) -- cycle ;
\draw   (424,87) .. controls (424.07,91.67) and (426.44,93.96) .. (431.11,93.88) -- (444.62,93.66) .. controls (451.28,93.55) and (454.65,95.83) .. (454.73,100.5) .. controls (454.65,95.83) and (457.94,93.44) .. (464.61,93.33)(461.61,93.38) -- (478.12,93.11) .. controls (482.78,93.04) and (485.07,90.67) .. (485,86) ;
\draw  [fill={rgb, 255:red, 155; green, 155; blue, 155 }  ,fill opacity=1 ] (275,45) -- (285,45) -- (285,75) -- (275,75) -- cycle ;
\draw  [fill={rgb, 255:red, 155; green, 155; blue, 155 }  ,fill opacity=1 ] (255,35) -- (275,35) -- (275,75) -- (255,75) -- cycle ;
\draw  [fill={rgb, 255:red, 155; green, 155; blue, 155 }  ,fill opacity=1 ] (355,50) -- (385,50) -- (385,75) -- (355,75) -- cycle ;
\draw   (245,35) -- (255,35) -- (255,75) -- (245,75) -- cycle ;
\draw   (335,50) -- (365,50) -- (365,75) -- (335,75) -- cycle ;
\draw    (295,35) .. controls (290.08,-3.42) and (276.42,-3.99) .. (265.5,28.49) ;
\draw [shift={(265,30)}, rotate = 287.93] [color={rgb, 255:red, 0; green, 0; blue, 0 }  ][line width=0.75]    (10.93,-3.29) .. controls (6.95,-1.4) and (3.31,-0.3) .. (0,0) .. controls (3.31,0.3) and (6.95,1.4) .. (10.93,3.29)   ;
\draw    (390,45) .. controls (383.14,14.62) and (371.48,19.77) .. (360.66,43.52) ;
\draw [shift={(360,45)}, rotate = 293.75] [color={rgb, 255:red, 0; green, 0; blue, 0 }  ][line width=0.75]    (10.93,-3.29) .. controls (6.95,-1.4) and (3.31,-0.3) .. (0,0) .. controls (3.31,0.3) and (6.95,1.4) .. (10.93,3.29)   ;
\draw  [fill={rgb, 255:red, 155; green, 155; blue, 155 }  ,fill opacity=1 ] (455,65) -- (485,65) -- (485,75) -- (455,75) -- cycle ;

\draw (182,79.4) node [anchor=north west][inner sep=0.75pt]  [font=\tiny]  {$0$};
\draw (212,79.4) node [anchor=north west][inner sep=0.75pt]  [font=\tiny]  {$1$};
\draw (242,79.4) node [anchor=north west][inner sep=0.75pt]  [font=\tiny]  {$2$};
\draw (272,79.4) node [anchor=north west][inner sep=0.75pt]  [font=\tiny]  {$3$};
\draw (302,79.4) node [anchor=north west][inner sep=0.75pt]  [font=\tiny]  {$4$};
\draw (332,79.4) node [anchor=north west][inner sep=0.75pt]  [font=\tiny]  {$5$};
\draw (362,79.4) node [anchor=north west][inner sep=0.75pt]  [font=\tiny]  {$6$};
\draw (392,79.4) node [anchor=north west][inner sep=0.75pt]  [font=\tiny]  {$7$};
\draw (403,76.4) node [anchor=north west][inner sep=0.75pt]  [font=\footnotesize]  {$\cdots $};
\draw (482,78.4) node [anchor=north west][inner sep=0.75pt]  [font=\tiny]  {$m$};
\draw (174,64) node [anchor=north west][inner sep=0.75pt]  [font=\scriptsize,rotate=-270] [align=left] {disutility};
\draw (441,77) node [anchor=north west][inner sep=0.75pt]  [font=\scriptsize] [align=left] {chores};
\draw (231,101.4) node [anchor=north west][inner sep=0.75pt]  [font=\scriptsize]  {$I_{1}^{i}$};
\draw (331,100.4) node [anchor=north west][inner sep=0.75pt]  [font=\scriptsize]  {$I_{1}^{i}$};
\draw (451,100.4) node [anchor=north west][inner sep=0.75pt]  [font=\scriptsize]  {$I_{k_{i}}^{i}$};

\end{tikzpicture}
    \caption{Construction of the fractional allocation \(T'\) (shaded in grey) that is SD dominated by \(T\). }
    \label{fig:5}
\end{figure}
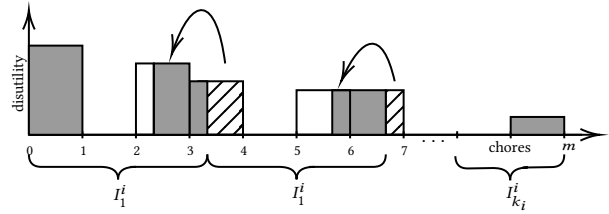

From the construction of \(T'\), we know \(T'\setminus \{b_1\}\) contains the least valued one unit of chore from each interval (except possibly the last interval which could have no contribution to \(T'\)). Therefore, \(v_i(T-b_1) \le \sum_{j=1}^{k_i} \alpha_i v_i(I^i_j) = \alpha_i.v_i(B)\). Thus, \(T'\) is a \wsdpropone{} bundle. 
\end{proof}

With the help of this characterization, we now construct an \textit{allocation graph} \(G_c\) of chores. Given a fair allocation instance \(\mathcal{I}=\langle A, B, \Pi, \mathcal{F} \rangle\), we construct a bipartite graph - the allocation graph \(G_c=(S\cup B,E)\) as follows:

\renewcommand{\labelitemi}{-}
\begin{itemize}
    \item The set of chores \(B\) forms one part of \(G_c\) with chores interpreted as vertices. 
    \item For every agent \(a_i\in A\), and every \(\ell = 1,2,\cdots, m_i=\lfloor m\alpha_i\rfloor +1\), create vertex \(s_{i,\ell}\) in \(S\). We call these the \(m_i\) many \emph{slots} of agent \(a_i\).
    \item From each slot \(s_{i,\ell}\), draw edges to every chore \(b\) for which \( \pi_i^{-1}(b) \ge \left \lceil\frac{\ell -1}{\alpha_i}\right \rceil \). That is, \((s_{i,\ell},b) \in E \iff \pi_i^{-1}(b) \ge \left \lceil\frac{\ell -1}{\alpha_i}\right \rceil  \) 
\end{itemize}

The allocation graph \(G_c\) exhibits several interesting properties:

Firstly, we have 
\begin{proposition}\label{perfect-matchings-wsdprop1}
\(B\)-perfect matching in \(G_c\), i.e a matching that saturates all the chores, satisfies Conditions~\ref{eq:3} and \ref{eq:4} and this corresponds to a \wsdpropone{} allocation of chores. Conversely, any \wsdpropone{} allocation satisfies Conditions~\ref{eq:3} and \ref{eq:4} and thus forms a \(B\)-perfect matching in \(G_c\).
\end{proposition}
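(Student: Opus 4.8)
The plan is to prove both directions by appealing directly to the characterization of \wsdpropone{} bundles in Lemma~\ref{lemma:WPROP1-chores}, together with the definition of the edge set of $G_c$. The key observation is that the edges incident to the $\ell$-th slot $s_{i,\ell}$ of agent $a_i$ are precisely the chores $b$ with $\pi_i^{-1}(b)\ge\lceil(\ell-1)/\alpha_i\rceil$, which is exactly Condition~\eqref{eq:4} for position $\ell$; and that agent $a_i$ has exactly $m_i=\lfloor m\alpha_i\rfloor+1$ slots, which encodes Condition~\eqref{eq:3}.

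First I would handle the forward direction. Let $M$ be a $B$-perfect matching in $G_c$. For each agent $a_i$, let $X_i$ be the set of chores matched to the slots $s_{i,1},\ldots,s_{i,m_i}$; since $M$ saturates $B$ and each chore is matched to exactly one slot, the $X_i$ partition $B$, so $X=\langle X_1,\ldots,X_n\rangle$ is a valid allocation. Fix $a_i$ and let $|X_i|=m_i'$. Since $a_i$ has only $m_i$ slots, $m_i'\le m_i=\lfloor m\alpha_i\rfloor+1$, giving Condition~\eqref{eq:3}. Now sort the chores of $X_i$ by rank: $r_1<r_2<\cdots<r_{m_i'}$. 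I need to argue $r_\ell\ge\lceil(\ell-1)/\alpha_i\rceil$ for every $\ell$. The clean way is a matching/pigeonhole argument: the $\ell$ chores of $X_i$ of smallest rank (those with ranks $r_1,\ldots,r_\ell$) are matched to $\ell$ distinct slots of $a_i$; at least one of these slots, say $s_{i,\ell'}$, has index $\ell'\ge\ell$ (since there are only $\ell-1$ slots with index $<\ell$); the chore matched to $s_{i,\ell'}$ has rank $\ge\lceil(\ell'-1)/\alpha_i\rceil\ge\lceil(\ell-1)/\alpha_i\rceil$ by the edge condition and monotonicity of the ceiling, and this chore has rank $\le r_\ell$ is \emph{not} quite what I want — rather I should pick the chore of rank $r_\ell$ itself: it is matched to some slot $s_{i,\ell''}$, and among the $\ell$ chores of smallest rank, which occupy $\ell$ distinct slots, one occupies a slot of index $\ge\ell$; better still, observe that the chore of rank $r_\ell$ together with the chores of ranks $r_{\ell+1},\ldots,r_{m_i'}$ — no. The cleanest formulation: among $\{r_1,\dots,r_\ell\}$, these are matched to $\ell$ distinct slots, so some slot $s_{i,\ell'}$ with $\ell'\ge\ell$ is used; its matched chore $b$ has $\pi_i^{-1}(b)\le r_\ell$ (it is one of the $\ell$ smallest-rank chores) and $\pi_i^{-1}(b)\ge\lceil(\ell'-1)/\alpha_i\rceil\ge\lceil(\ell-1)/\alpha_i\rceil$, hence $r_\ell\ge\lceil(\ell-1)/\alpha_i\rceil$. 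This establishes Condition~\eqref{eq:4}, and Lemma~\ref{lemma:WPROP1-chores} then says $X_i$ is \wsdpropone{} for $a_i$; since this holds for all $i$, $X$ is a \wsdpropone{} allocation.

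For the converse, let $X=\langle X_1,\ldots,X_n\rangle$ be a \wsdpropone{} allocation. By Lemma~\ref{lemma:WPROP1-chores}, for each agent $a_i$ the bundle $X_i$ satisfies Conditions~\eqref{eq:3} and~\eqref{eq:4}; write $|X_i|=m_i'\le\lfloor m\alpha_i\rfloor+1=m_i$ and list the ranks as $r_1<\cdots<r_{m_i'}$. I construct $M$ by assigning the chore of rank $r_\ell$ to slot $s_{i,\ell}$, for $\ell=1,\ldots,m_i'$ (leaving the remaining slots $s_{i,m_i'+1},\ldots,s_{i,m_i}$ unmatched). This is a valid matching into $G_c$ because $(s_{i,\ell},b)\in E$ requires $\pi_i^{-1}(b)\ge\lceil(\ell-1)/\alpha_i\rceil$, which is exactly Condition~\eqref{eq:4} for the chore of rank $r_\ell$. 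Since $X$ is an allocation, every chore appears in exactly one $X_i$ and hence is matched exactly once, so $M$ is a $B$-perfect matching in $G_c$, and it satisfies Conditions~\eqref{eq:3} and~\eqref{eq:4} by construction.

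The main obstacle is the forward-direction argument for Condition~\eqref{eq:4}: a matching need not respect the natural rank order (slot $s_{i,\ell}$ need not receive the $\ell$-th lowest-ranked chore of $X_i$), so one cannot simply read off $r_\ell$ from slot $s_{i,\ell}$. The pigeonhole step above — that the $\ell$ lowest-ranked chores of $X_i$ must collectively occupy some slot of index at least $\ell$, whose edge constraint is at least as strong as the one for index $\ell$ — is what bridges this gap, and it is worth stating carefully. Everything else is bookkeeping with the definitions.
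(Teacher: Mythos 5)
Your proof is correct and follows exactly the route the paper intends: the proposition is stated without an explicit proof as an immediate consequence of the construction of \(G_c\) together with Lemma~\ref{lemma:WPROP1-chores}, which is precisely what you carry out. Your pigeonhole step for the forward direction (handling the fact that slot \(s_{i,\ell}\) need not receive the \(\ell\)-th lowest-ranked chore of \(X_i\)) correctly fills in the one detail the paper leaves implicit, and it is equivalent to the sorting/exchange argument suggested by the nested slot neighbourhoods.
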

Moreover, in the interval set \(I^i=\{I^i_1, I^i_2,\ldots,I^i_{k_i}\}\) of an agent \(a_i\), if non-zero fraction of a chore \(b\) lies in the interval \(I^i_\ell = \left[ \frac{\ell-1}{\alpha_i},\frac{\ell}{\alpha_i} \right]\), then \(\pi_i^{-1}(b)\ge \left \lceil\frac{\ell -1}{\alpha_i}\right \rceil\). Thus slot \(s_{i,\ell}\) has an edge to chore \(b\) in \(G_c\). This is formally stated in the following proposition.

\begin{proposition}\label{prop:slot-interval}
    In the allocation graph \(G_c\) of a chore allocation instance \(\mathcal{I}\), each slot \(s_{i,\ell}\) of each agent \(a_i\in A\), has edges to every chore with a non-zero portion in the interval \(I^i_\ell\) in the interval set of \(a_i\).
\end{proposition}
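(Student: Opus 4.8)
The plan is to unwind the interval representation and reduce the claim to a one-line inequality about ceilings. Fix an agent $a_i \in A$, an index $\ell$ with $1 \le \ell \le k_i$, and a chore $b$ such that a non-zero fraction of $b$ lies in the interval $I^i_\ell$ of the interval set of $a_i$. Write $j := \pi_i^{-1}(b)$, so that in the interval representation of $a_i$ the chore $b$ is the unit interval $[j-1, j] \subseteq [0,m]$, while $I^i_\ell = \left[\frac{\ell-1}{\alpha_i}, \frac{\ell}{\alpha_i}\right]$, with the right endpoint replaced by $m$ in case $\ell = k_i$ and $m\alpha_i$ is not integral. That exception will be irrelevant, since only the left endpoint $\frac{\ell-1}{\alpha_i}$ enters the adjacency condition of $G_c$.

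First I would observe that having a non-zero fraction of $b$ inside $I^i_\ell$ means $\left|[j-1,j] \cap I^i_\ell\right| > 0$, and that this forces $j > \frac{\ell-1}{\alpha_i}$: if instead $j \le \frac{\ell-1}{\alpha_i}$ held, then $[j-1,j]$ and $I^i_\ell$ could intersect in at most the single point $\{j\}$, giving an overlap of length zero, a contradiction. In particular $j \ge \frac{\ell-1}{\alpha_i}$.

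Next I would invoke the elementary fact that for an integer $j$ and a real number $x$, the inequality $j \ge x$ implies $j \ge \lceil x \rceil$, because $\lceil x \rceil$ is by definition the smallest integer that is at least $x$. Taking $x = \frac{\ell-1}{\alpha_i}$ yields $\pi_i^{-1}(b) = j \ge \left\lceil \frac{\ell-1}{\alpha_i} \right\rceil$. By the construction of the allocation graph, $(s_{i,\ell}, b) \in E$ if and only if $\pi_i^{-1}(b) \ge \left\lceil \frac{\ell-1}{\alpha_i} \right\rceil$; hence the edge $(s_{i,\ell}, b)$ is present in $G_c$. Since $a_i$, $\ell$, and $b$ were arbitrary, the proposition follows.

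There is no genuine obstacle here; the statement is essentially a consistency check between the two interval constructions. The only points that deserve a moment's care are confirming that a positive-length overlap yields the strict inequality $j > \frac{\ell-1}{\alpha_i}$ (so that, a fortiori, $j \ge \frac{\ell-1}{\alpha_i}$), and noting that clipping the last interval's right endpoint to $m$ does not affect the argument, which uses only the left endpoint of $I^i_\ell$.
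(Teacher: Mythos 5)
Your proof is correct and follows essentially the same route as the paper: a non-zero overlap of the unit interval $[j-1,j]$ with $I^i_\ell$ forces $j=\pi_i^{-1}(b)\ge\left\lceil\frac{\ell-1}{\alpha_i}\right\rceil$, which is exactly the edge condition in the construction of $G_c$. You merely spell out the ceiling argument and the last-interval caveat that the paper leaves implicit in its one-sentence justification.
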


We also observe the following property regarding edge relationships in \(G_c\):
\begin{proposition}\label{prop:heavy_chores}
     Let \(S_i = \{s_{i,\ell} \mid \ell \in [m_i]\}\) represent the set of slots in \(S\) that belong to an agent \(a_i\in A\). For any two chores \(b_j\) and \(b_k\) in \(B\), such that \(\pi_i^{-1}(b_k) \ge \pi_i^{-1}(b_j)\), the neighbourhood of \(b_k\) in \(S_i\) contains the neighbourhood of \(b_j\) in \(S_i\). That is, \(N(b_j)\cap S_i \subseteq N(b_k)\cap S_i \). Therefore, for any set of chores \(Y=\{b_1,b_2,\cdots,b_k\}\), the following holds:
    \[
         S_i \cap N\left(\{\pi_i(1),\pi_i(2),\cdots,\pi_i(k)\}\right) \subseteq S_i \cap  N(Y)
    \]
\end{proposition}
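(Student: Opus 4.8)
The plan is to prove the two assertions in turn: the neighbourhood-monotonicity statement follows directly from the edge rule defining $G_c$, and the displayed inclusion then follows from monotonicity together with a short counting argument.

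First I would prove the monotonicity claim $N(b_j)\cap S_i \subseteq N(b_k)\cap S_i$ whenever $\pi_i^{-1}(b_k)\ge \pi_i^{-1}(b_j)$. By construction of $G_c$, a slot $s_{i,\ell}\in S_i$ is adjacent to a chore $b$ if and only if $\pi_i^{-1}(b)\ge \bigl\lceil \tfrac{\ell-1}{\alpha_i}\bigr\rceil$. So if $s_{i,\ell}\in N(b_j)\cap S_i$, then $\pi_i^{-1}(b_j)\ge \bigl\lceil \tfrac{\ell-1}{\alpha_i}\bigr\rceil$, and since $\pi_i^{-1}(b_k)\ge \pi_i^{-1}(b_j)$ the same inequality holds for $b_k$, giving $s_{i,\ell}\in N(b_k)\cap S_i$. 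This step is immediate.

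Next I would use monotonicity to collapse the chain $N(\pi_i(1))\cap S_i \subseteq N(\pi_i(2))\cap S_i \subseteq \cdots \subseteq N(\pi_i(k))\cap S_i$, which yields $S_i\cap N\bigl(\{\pi_i(1),\ldots,\pi_i(k)\}\bigr)=S_i\cap N(\pi_i(k))$. For an arbitrary set $Y$ of $k$ distinct chores, I would observe that only $k-1$ chores — namely $\pi_i(1),\ldots,\pi_i(k-1)$ — have rank strictly below $k$ in $\pi_i$, so $Y$ must contain some chore $b'$ with $\pi_i^{-1}(b')\ge k=\pi_i^{-1}(\pi_i(k))$. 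Applying monotonicity to $\pi_i(k)$ and $b'$ gives $S_i\cap N(\pi_i(k))\subseteq S_i\cap N(b')\subseteq S_i\cap N(Y)$, and combining with the equality above completes the proof.

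The argument is essentially routine; the only point requiring a moment of thought is the pigeonhole step, which identifies inside any $k$-subset $Y$ a chore whose rank is at least $k$ — precisely what is needed so that, via monotonicity, the $S_i$-neighbourhood of $\{\pi_i(1),\ldots,\pi_i(k)\}$ (governed by its largest-rank element $\pi_i(k)$) is dominated by that of $Y$.
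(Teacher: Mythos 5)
Your proof is correct: the monotonicity step follows exactly from the edge rule \((s_{i,\ell},b)\in E \iff \pi_i^{-1}(b)\ge\lceil(\ell-1)/\alpha_i\rceil\), and the chain-plus-pigeonhole argument cleanly yields the displayed inclusion. The paper states this proposition without an explicit proof, treating it as an immediate consequence of the construction of \(G_c\), and your argument is precisely the intended reasoning.
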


We now show the existence of \wsdpropone{} allocations by showing that a \(B\)-perfect matching always exists in \(G_c\). To prove this, we rely on the Hall's marriage theorem \cite{Hall1987} which characterizes the existence of perfect matchings in bipartite graphs.

\begin{theorem}[Hall's Theorem \cite{Hall1987}]\label{thm:halls}
    Given a bipartite graph \(G=(A\cup B,E)\), there exists an \(A\)-perfect matching in \(G\) if and only if for every subset \(S\subseteq A\), the number of vertices in the neighbourhood \(N(A)\) is greater than or equal to the size of \(S\):
    \[\forall S\subseteq A,\ |N(S)| \ge |S|\]
\end{theorem}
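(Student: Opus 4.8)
The plan is to prove Hall's theorem by induction on $|A|$, treating the two directions separately. The forward direction is immediate: if $M$ is an $A$-perfect matching, then for any $S \subseteq A$ the vertices $\{M(a) : a \in S\}$ are $|S|$ distinct elements of $N(S)$, so $|N(S)| \ge |S|$; nothing beyond the definitions is needed here.

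For the converse I would induct on $n = |A|$, the base case $n=1$ being trivial since Hall's condition forces $|N(\{a\})| \ge 1$. For the inductive step the decisive move is a case split on whether Hall's condition is ``tight'' anywhere. In the easy case, every nonempty proper subset $S \subsetneq A$ satisfies $|N(S)| \ge |S| + 1$: then I pick an arbitrary $a \in A$, an arbitrary neighbour $b$ of $a$, delete both, and note that deleting a single vertex $b$ shrinks each neighbourhood by at most one, so every $T \subseteq A \setminus \{a\}$ still has $|N_{G'}(T)| \ge |N_G(T)| - 1 \ge |T|$; by induction $G' = G - a - b$ has an $(A \setminus \{a\})$-perfect matching, to which I append the edge $ab$.

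In the hard case there is a nonempty proper $S \subsetneq A$ with $|N(S)| = |S|$. I first apply induction to the subgraph induced on $S \cup N(S)$, which satisfies Hall's condition because for every $T \subseteq S$ its neighbourhood there equals $N_G(T)$, obtaining an $S$-perfect matching $M_1$. Then I check that the subgraph $G'$ induced on $(A \setminus S) \cup (B \setminus N(S))$ also satisfies Hall's condition: if some $T \subseteq A \setminus S$ violated it, then since $N_{G'}(T) = N_G(T) \setminus N(S)$ and $N_G(T \cup S) = N_G(T) \cup N(S)$, we get $|N_G(T \cup S)| \le |N_{G'}(T)| + |N(S)| < |T| + |S| = |T \cup S|$, contradicting Hall's condition for $G$. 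By induction $G'$ has an $(A \setminus S)$-perfect matching $M_2$, and since $M_1$ uses only vertices of $N(S)$ while $M_2$ avoids them, $M_1 \cup M_2$ is the required $A$-perfect matching.

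The main obstacle is exactly this tight-set case: one must verify that Hall's condition descends both to the ``inside'' subgraph on $S \cup N(S)$ and, via the displayed counting inequality, to the complementary subgraph. A shorter-to-state but heavier alternative I would keep in reserve is to derive the statement from max-flow/min-cut (equivalently König's theorem) by orienting all edges from $A$ to $B$ and attaching a unit-capacity source and sink; but the self-contained induction above is cleaner in this context.
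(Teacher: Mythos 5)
Your argument is the classical Halmos--Vaughan induction for Hall's theorem, and it is correct: the forward direction is immediate, the ``no tight set'' case correctly uses that deleting one vertex of $B$ decreases each neighbourhood by at most one, and the tight-set case correctly verifies Hall's condition both for the induced subgraph on $S\cup N(S)$ (where neighbourhoods are unchanged) and for the complementary subgraph via the counting inequality $|N_G(T\cup S)|\le |N_{G'}(T)|+|N(S)|$. Note, however, that the paper does not prove this statement at all: Hall's theorem is quoted as a known result from the cited reference and is only \emph{used} (in Theorem~\ref{thm:WPROP1-chores-exists} and Theorem~\ref{thm:WPROP1-goods-exists}) to deduce the existence of perfect matchings in the allocation graphs. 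So there is no proof in the paper to compare against; your self-contained induction is a standard and valid proof of the cited theorem, and the only cosmetic remark is that the statement's displayed condition should read $N(S)$ rather than $N(A)$, a typo in the paper's quotation that your proof implicitly corrects.
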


Now, we establish the following main result:

\begin{theorem}\label{thm:WPROP1-chores-exists}
    For any fair allocation instance of chores \(\mathcal{I}=\langle A,B,\Pi,\mathcal{F} \rangle\), there always exists a \wsdpropone{} allocation.
\end{theorem}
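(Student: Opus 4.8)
The plan is to invoke Hall's theorem (Theorem~\ref{thm:halls}) on the allocation graph $G_c=(S\cup B,E)$, showing that a $B$-perfect matching exists; by Proposition~\ref{perfect-matchings-wsdprop1} such a matching yields a \wsdpropone{} allocation. Since $|B|=m$ and $\sum_{a_i\in A} m_i = \sum_{a_i} (\lfloor m\alpha_i\rfloor+1) \ge \sum_{a_i} m\alpha_i = m = |B|$, there are enough slots; the content is entirely in verifying Hall's condition \emph{from the $B$ side}, i.e.\ $|N(Y)| \ge |Y|$ for every $Y\subseteq B$.

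First I would reduce to the ``hardest'' sets $Y$. By Proposition~\ref{prop:heavy_chores}, for each agent $a_i$ the neighborhood in $S_i$ of an arbitrary $k$-set of chores contains $S_i\cap N(\{\pi_i(1),\dots,\pi_i(k)\})$; hence among all $Y$ of a fixed size $k$, the \emph{prefix} set $Y_k := \{b : \text{some }\pi_i,\ \pi_i^{-1}(b)\le k\}$... more carefully, since different agents order $B$ differently, the right move is: fix $k$ and ask which $k$-subsets minimize $|N(Y)|$. For a single agent's contribution $|N(Y)\cap S_i|$ is minimized when $Y$ consists of $a_i$'s $k$ most-favorite (lowest-disutility, i.e.\ lowest-rank) chores. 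So it suffices to check Hall's condition for sets $Y$ that are simultaneously "downward closed in rank" — but this cannot hold for all agents at once, so instead I would directly count: for $|Y| = k$, I claim $|N(Y)\cap S_i| \ge \lceil k\alpha_i\rceil$ whenever... actually the cleanest route: slot $s_{i,\ell}$ is adjacent to $b$ iff $\pi_i^{-1}(b)\ge \lceil (\ell-1)/\alpha_i\rceil$. Given $|Y|=k$, let $r^* = \max_{b\in Y}\pi_i^{-1}(b) \ge k$. Then every slot $s_{i,\ell}$ with $\lceil(\ell-1)/\alpha_i\rceil \le r^*$ is adjacent to the max-rank element of $Y$, hence lies in $N(Y)$. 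The number of such $\ell$ is $|\{\ell\in[m_i] : (\ell-1)/\alpha_i \le r^*\}| \ge |\{\ell : \ell \le r^*\alpha_i + 1\}| = \min(m_i, \lfloor r^*\alpha_i\rfloor + 1) \ge \min(m_i, \lfloor k\alpha_i\rfloor+1)$. Summing over all agents: if every $Y$ with $|Y|=k$ has $\max$-rank $\ge k$ for \emph{every} agent (true, as any set of $k$ chores has an element of rank $\ge k$ in any ranking), then
\[
|N(Y)| \ \ge\ \sum_{a_i\in A}\min\!\big(m_i,\ \lfloor k\alpha_i\rfloor+1\big).
\]
It remains to show this lower bound is $\ge k$. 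When $k\le m$, using $m_i=\lfloor m\alpha_i\rfloor+1 \ge \lfloor k\alpha_i\rfloor+1$, the $\min$ equals $\lfloor k\alpha_i\rfloor + 1$, and $\sum_{a_i}(\lfloor k\alpha_i\rfloor + 1) \ge \sum_{a_i} k\alpha_i = k$ (in fact $\ge k + (n-1)$ unless all $k\alpha_i$ are integral), so Hall's condition holds with room to spare.

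The main obstacle — and the step I would write most carefully — is the counting bound $|N(Y)\cap S_i|\ge \lfloor k\alpha_i\rfloor + 1$ via the single max-rank element, in particular getting the ceiling/floor arithmetic right ($\lceil(\ell-1)/\alpha_i\rceil \le r^* \iff (\ell-1)/\alpha_i \le r^*$, since $r^*$ is an integer, $\iff \ell \le r^*\alpha_i + 1$) and handling the edge cases $\alpha_i\in\{0,1\}$ (an $\alpha_i=1$ agent can be dropped after she receives everything / the instance degenerates; $\alpha_i=0$ agents are removed as in Lemma~\ref{lemma:WPROP1-chores}) and the possibility $k>m$ (irrelevant since $|Y|\le m$). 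Once Hall's condition is verified, the theorem follows: $G_c$ has a $B$-perfect matching, which by Proposition~\ref{perfect-matchings-wsdprop1} corresponds to a \wsdpropone{} allocation of $B$, completing the proof.
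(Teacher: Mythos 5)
Your proposal is correct and follows essentially the same route as the paper: verify Hall's condition on the chore side of the allocation graph \(G_c\) by observing that any \(k\)-set of chores contains, for each agent \(a_i\), a chore of rank at least \(k\), count the slots of \(a_i\) adjacent to that chore, and sum over agents to get \(|N(Y)|\ge k\), then invoke Proposition~\ref{perfect-matchings-wsdprop1}. The only difference is presentational: the paper bounds the slot count via the first non-adjacent slot \(s_{i,\ell_i+1}\) violating Condition~\ref{eq:4} (giving \((k-1)\alpha_i<\ell_i\)), whereas you count the adjacent slots directly as \(\min(m_i,\lfloor k\alpha_i\rfloor+1)\), which also handles the all-slots-adjacent edge case explicitly.
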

\begin{proof}
    Consider the allocation graph \(G_c\) of \(\mathcal{I}\). We show that \(G_c\) always has a \(B\)-perfect matching and thus \(\mathcal{I}\) has a \wsdpropone{} allocation. Let \(T=\{b_1,b_2,\cdots,b_k\}\subseteq B\) be set of \(k\) vertices in \(B\). The goal is to show that the size of the neighbourhood \(N(T)\) is bigger than or equal to the size of \(T\). From Proposition~\ref{prop:heavy_chores}, we can assume \emph{w.l.o.g} that for each agent \(a_i\in A\), the chores \(b_1,b_2,\cdots,b_k\) are the first (lowest rank) \(k\) chores, since it minimizes the neighbourhood. For an agent \(a_i\), let \(s_{i,\ell_i}\) be the highest index slot which has an edge to \(b_k\). Thus, all the slots \(s_{i,1},s_{i,2},\cdots,s_{i,\ell_i}\) have edges to \(b_k\). Therefore, the size of the neighbourhood \(|N(T)| \ge \sum_{i\in [n]} \ell_i\). Since the slot \(s_{i,\ell_i+1}\) does not have an edge to \(b_k\), Condition~\ref{eq:4} is violated. Thus,
    \begin{align*}
                & k < \left \lceil\frac{(\ell_i +1) -1}{\alpha_i}\right \rceil\\
        \implies& k < \frac{\ell_i}{\alpha_i}+1\\
        \implies& (k-1)\alpha_i < \ell_i \\
        \implies& \sum_{i \in [n]}(k-1)\alpha_i < \sum_{i \in [n]}\ell_i &\text{(Summing over all agents)}\\
        \implies& k-1 < |N(T)| &\text{(We know \(\sum_{i \in [n]}\ell_i \le |N(T)|\))}\\
        \implies& k \le |N(T)| &\text{(As both \(|N(T)|\) and \(k\) are integers)}
    \end{align*}
    Therefore, For any set of \(k\) chores the size of the neighbourhood is more than or equal to \(k\). From Theorem~\ref{thm:halls}, the allocation graph \(G_c\) always has a \(B\)-perfect matching - which corresponds to a \wsdpropone{} allocation of chores. 
\end{proof}
Therefore, using the famous Hopcroft-Karp algorithm \cite{Hopcroft-Karp} to find perfect matchings, we can compute a \wsdpropone{} allocation in time \(\mathcal{O}(m+n)^{2.5}\)
\section{Optimizing Over Allocations}\label{sec:opt-chores}

Recall that any \(B\)-perfect matching in the allocation graph \(G_c\) corresponds to a \wsdpropone{} allocation and vice versa. In this section we extend the allocation graph \(G_c\) to \(G^+_c\) by balancing the two parts of the bipartite graph while maintaining the correspondence between \wsdpropone{} allocations and perfect matchings in \(G^+_c\). We then optimize different linear objective functions over all \wsdpropone{} allocations using the perfect matching polytope.

\paragraph{Extending the Allocation Graph:} Consider the allocation graph \(G_c=(S\cup B,E)\) of an instance \(\mathcal{I}\) of chores allocation. For each agent \(a_i\in A\), there are \(m_i=\lfloor m\alpha_i \rfloor + 1\) many slots in \(S\). Therefore the total number of slots \(|S| = n + \sum_{i\in [n] }\lfloor m\alpha_i \rfloor \). To construct \(G^+_c=(S\cup B',E')\), we create \(|S|-m = q\) many additional \emph{dummy} chores \(b'_1,b'_2,\cdots,b'_q\) in \(B'\) to  balance the bipartite graph. Draw additional edges from all the slots in \(S\) to every dummy chore. 

A \wsdpropone{} allocation of chores gives a \(B\)-perfect matching in \(G_c\). We can extend this matching to a perfect matching in \(G^+_c\) by matching the dummy chores in any manner as all the dummy chores have edges to every slot. Conversely, given a perfect matching in \(G^+_c\), we can ignore the edges from dummy chores to get a \(B\)-perfect matching in \(G_c\) and thus a \wsdpropone{} allocation. 

Given a bipartite graph \(G=(X\cup Y,E)\), the following constraints define the matching polytope:
\begin{align}
\label{eqn:LP}
\begin{split}
    \sum_{x\in N(y)}e_{xy} &= 1 \quad \forall y\in Y\\
    \sum_{y\in N(x)}e_{xy} &= 1 \quad \forall x\in X\\
    e_{xy} &\geq 0
\end{split}
\end{align}
We know that above matching polytope is integral \cite{lovaszplummar} and hence a matching that maximizes a given objective function is computable in polynomial-time \cite{KHACHIYAN198053,Karmarkar}. We now use this fact to compute \wsdpropone{} allocations while considering agents' \textit{efficiency} in doing the chores.

\subsection{Considering Agent Competence }\label{subsec:fair-efficient-chore}
Regardless of how each agent personally values any given chore, it is important to acknowledge that their skills and proficiency in performing them can vary significantly across different tasks. For any specific agent-chore pairing \(a_i,b\), we can quantify the agent's competence in performing chore \(b\) as \(u_i(b)\in[0,1]\), where \(0\) indicates low competency and \(1\) indicates high competency. This efficiency metric helps us assess how well-suited each agent is to tackle a particular chore, guiding us in achieving a fair and efficient chore allocation. 

We can use the above given linear program formulation to maximize efficiency over all \wsdpropone{} allocations. Given an extended allocation graph \(G^+_c=(S,B',E')\), We set the objective function as follows:
 \begin{align*}
     \text{maximize}\quad \sum_{(s_{i,\ell},b)\in E'} u_i(b)\cdot e_{(i,\ell),b}
 \end{align*}
Similarly, we can optimize for time spent on doing chores and other linear objective functions. 
\section{Best of Both Worlds}\label{sec:bobw}
In this section, using the perfect matchings in the extended allocation graph \(G^+_c\), we give a polynomial time algorithm to compute an ex-ante \wsdef{} and ex-post \wsdpropone{} allocation of chores. 

We begin by constructing a \wsdef{} allocation \(X=\langle X_1,X_2,\cdots,X_n\rangle\). For each agent \(a_i\in A\), give \(\alpha_i\) fraction of every chore \(b\in B\). In this allocation, For each pair of agents \(a_i\) and \(a_k\), for any two valuations \(v_i\in \mathscr{U}(\pi_i), v_k\in \mathscr{U}(\pi_k)\), we know that \(\frac{v_i(X_i)}{\alpha_i} = \frac{v_i(X_k)}{\alpha_k} = v_i(B)\) and hence \(X\) is a \wsdef{} allocation. We now show that this fractional allocation can be realized as a fractional perfect matching in the extended allocation graph \(G^+_c\).

\begin{lemma}\label{lemma:fractionalmatching}
    Given an instance \(\mathcal{I}=\langle A,B,\Pi,\mathcal{F} \rangle\) of chore allocation, there exists a fractional perfect matching in the extended allocation graph \(G^+_c=(S\cup{B'},E')\) of \(\mathcal{I}\) that corresponds to a \wsdef{} chore allocation where each agent \(a_i\) receives \(\alpha_i\) fraction of every chore.
\end{lemma}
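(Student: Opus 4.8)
The plan is to exhibit an explicit fractional assignment $e$ on the edges of $G^+_c$, verify it satisfies the matching-polytope constraints~\eqref{eqn:LP} (with $X = S$ the slots and $Y = B'$ the chores, since we balanced the graph), and then check that, restricted to the genuine chores in $B$, it realizes the allocation in which each agent receives exactly $\alpha_i$ fraction of every chore. The natural guess is to spread each slot's ``mass'' of $1$ uniformly over the chores with a non-zero portion in the corresponding interval of the interval set. Concretely, for a slot $s_{i,\ell}$ with $\ell \le \lfloor m\alpha_i\rfloor$, the interval $I^i_\ell = \left[\frac{\ell-1}{\alpha_i}, \frac{\ell}{\alpha_i}\right]$ has length exactly $\frac1{\alpha_i}$ and overlaps several consecutive chores; set $e_{(i,\ell),b} = \alpha_i \cdot |[\,\pi_i^{-1}(b)-1,\pi_i^{-1}(b)\,] \cap I^i_\ell|$ for each such chore $b$, which by Proposition~\ref{prop:slot-interval} is supported on actual edges of $G^+_c$. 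For the last slot $s_{i,m_i}$ (with $m_i = \lfloor m\alpha_i\rfloor + 1$), the interval $I^i_{m_i}$ may be shorter than $\frac1{\alpha_i}$, so this interval alone carries mass $< 1$; the deficit is made up by routing the remaining weight to the dummy chores $b'_1,\dots,b'_q$, which are adjacent to every slot.

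First I would verify the per-slot constraint $\sum_{b} e_{(i,\ell),b} = 1$: for a full-length interval this is immediate since the overlaps of $I^i_\ell$ with the unit chore-intervals sum to $|I^i_\ell| = \frac1{\alpha_i}$, so multiplying by $\alpha_i$ gives $1$; for the last slot, after adding the dummy-chore contribution we again get $1$ by construction. Next I would verify the per-chore constraint $\sum_{i,\ell} e_{(i,\ell),b} = 1$ for every genuine $b \in B$: fixing $b$ and an agent $a_i$, the intervals $I^i_1,\dots,I^i_{k_i}$ partition $[0,m]$, so the total overlap of the unit interval of $b$ with all of $a_i$'s intervals is exactly $1$, hence $\sum_\ell e_{(i,\ell),b} = \alpha_i$; summing over all $n$ agents and using $\sum_i \alpha_i = 1$ yields $1$. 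For the dummy chores $b'_j$, the total slot-mass not consumed by genuine chores is $\sum_{i,\ell}\bigl(1 - \alpha_i \cdot |I^i_\ell|\bigr) = |S| - \sum_i \alpha_i \cdot m = |S| - m \cdot 1$… more carefully, $\sum_i \alpha_i k_i'$-type bookkeeping — but the key point is that this leftover mass equals $|S| - m = q$, and since each dummy chore needs mass exactly $1$ and all dummies are adjacent to all slots, we can distribute the leftover so that each dummy gets total mass $1$ (e.g. fill them greedily, or spread uniformly). This makes $e$ a feasible point of the matching polytope, i.e.\ a fractional perfect matching.

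Finally, the correspondence: ignoring the edges to dummy chores, agent $a_i$'s total fractional share of a genuine chore $b$ under $e$ is $\sum_\ell e_{(i,\ell),b} = \alpha_i$, exactly the bundle $X_i$ described before the lemma, which we already argued is \wsdef{}. I expect the main obstacle to be the bookkeeping around the last (short) interval $I^i_{m_i}$ and the dummy chores — specifically, checking that the total leftover slot-mass is exactly $q = |S| - m$ so that the dummies can be saturated, which amounts to the identity $\sum_{i\in[n]}\sum_{\ell=1}^{m_i} \alpha_i|I^i_\ell| = \sum_{i\in[n]} \alpha_i m = m$ (each agent's intervals partition $[0,m]$) versus $|S| = \sum_i m_i = n + \sum_i \lfloor m\alpha_i\rfloor$; reconciling these with the ceiling/floor definitions of $k_i$ and $m_i$ requires a little care but no real difficulty. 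Everything else is a direct substitution into~\eqref{eqn:LP}.
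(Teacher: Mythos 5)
Your proposal is correct and follows essentially the same route as the paper: setting the weight of edge $(s_{i,\ell},b)$ to $\alpha_i$ times the overlap of chore $b$ with interval $I^i_\ell$, verifying via the interval partition of $[0,m]$ that each slot carries mass at most $1$ and each real chore receives total mass $\alpha_i$ from agent $a_i$ (hence $1$ overall), and then saturating the dummy chores with the leftover slot capacity. The only difference is cosmetic: you explicitly verify that the leftover mass equals $q=|S|-m$, a bookkeeping step the paper dispatches by noting the graph is balanced and every dummy chore is adjacent to every slot.
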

\begin{proof}
    We first construct a fractional matching \(M\) that saturates all the \emph{real} chores (non dummy chores). Such a matching can always be extended to a fractional perfect matching by assigning the dummy chores in any manner, as all the dummy chores have edges to all the slots. 
    
    Consider the interval set \(I^i\) of an agent \(a_i\in A\). From Proposition~\ref{prop:slot-interval}, we know that slot \(s_{i,\ell}\) has edges to every chore in the interval \(I^i_\ell\). With the help of this fact, we construct a fractional matching \(M\) in \(G^+_c\) as follows:
    
    Let \(x_{i,\ell,b}\) denote the fraction of the edge \((s_{i,\ell},b)\) in \(M\). let \(\delta_{b,\ell}\) denote the fraction of a chore \(b\in B\) that is present in the interval \(I^i_\ell\). For every edge \((s_{i,\ell},b)\), we set \(x_{i,\ell,b}=\alpha_i\cdot\delta_{b,\ell}\). A slot \(s_{i,\ell}\) receives non zero fractions of the chores from the interval \(I^i_\ell\). Each slot receives at most \(1\) unit of chore because total chores assigned for a slot \(s_{i,\ell}\) is : 
    \begin{align*}
    \sum_{b\in B}x_{i,\ell,b} = \alpha_i\sum_{b\in B}\delta_{b,\ell} \le \alpha_i\frac{1}{\alpha_i} = 1
    \end{align*}
    The fraction of a given real chore \(b\) received by agent \(i\) across all the slots is:
    \begin{align*}
        \sum_{\ell=1}^{m_i} x_{i,\ell,b} =  \alpha_i\sum_{\ell=1}^{m_i}\delta_{b,\ell} = \alpha_i
    \end{align*}

    The matching \(M\) saturates all the real chores. Since the graph \(G^+_c\) is a balanced bipartite graph, and as all the dummy chores have edges to all the slots, the matching \(M\) can be extended to a fractional perfect matching by dividing the dummy chores across the remaining spaces of all the slots in any arbitrary way. 
\end{proof}
Let us denote this fractional perfect matching as \(M^*\). Note that \(M^*\) lies inside the matching polytope of \(G^+_c\). We now decompose this fractional perfect matching into convex combination of integral perfect matchings with the help of Birkhoff’s decomposition. Given a perfect matching \(M\)(fractional or otherwise) of a balanced bipartite graph \(G=(P\cup Q,E)\) with \(2n\) vertices, \(M\) can be represented as a \(n\times n\) bi-stochastic matrix \(X=(x_{ij})\) where an entry \(x_{ij}\) denotes the fraction of the edge \((i,j)\) present in \(M\). Given a fractional perfect matching, we can decompose it as a convex combination of integral perfect matchings with the help of Birkhoff-von-Neumann theorem \cite{Birkhoff,vonNeumann,Johnson1960OnAA,lovaszplummar}. 
\begin{theorem}[Birkhoff-von Neumann]\label{thm:bvn}
    Let \(X\) be a \(p \times p\) bi-stochastic matrix. There exists an algorithm that runs in \(\mathcal{O}(p^{4.5})\) time and computes a decomposition \(X=\sum_{k=1}^q \lambda_k X_k\) where \(q\le p^2 -p +2\); for each \(k\in[q],\lambda_k\in[0,1]\) and \(X_k\) is a \(p\times p\) permutation matrix; and \(\sum_{k=1}^q \lambda_k = 1\).
\end{theorem}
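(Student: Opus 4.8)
The plan is to establish the decomposition by induction on the number of nonzero entries of $X$, and then separately account for the bound on $q$ and the running time. To the matrix $X$ I would associate its \emph{support graph}: a bipartite graph on $p$ ``row'' vertices and $p$ ``column'' vertices with an edge $(i,j)$ present exactly when $x_{ij}>0$. The base case is when $X$ is already a permutation matrix (equivalently, its support graph is a perfect matching); then $q=1$, $\lambda_1=1$, $X_1=X$.

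For the inductive step, the crucial observation is that the support graph always contains a perfect matching. This is exactly Hall's condition (Theorem~\ref{thm:halls}): for any set $S$ of rows, the rows in $S$ carry total mass $|S|$ (each row sums to $1$) distributed among the columns in $N(S)$, and each column can absorb total mass at most $1$ (each column sums to $1$), forcing $|N(S)|\ge|S|$. Hence there is a permutation $\sigma$ with $x_{i,\sigma(i)}>0$ for every $i$; let $X_1$ be its permutation matrix and $\lambda_1=\min_i x_{i,\sigma(i)}\in(0,1]$. If $\lambda_1=1$ then $X=X_1$ and we are done. Otherwise set $X'=\frac{1}{1-\lambda_1}(X-\lambda_1 X_1)$: by the choice of $\lambda_1$ it is nonnegative, and since every row and column sum of $X-\lambda_1 X_1$ equals $1-\lambda_1$, it is again bi-stochastic. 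Crucially $X'$ has at least one fewer nonzero entry than $X$ — the position attaining the minimum becomes zero, and no new nonzero positions appear — so the induction applies: writing $X'=\sum_{k=1}^{q'}\mu_k Y_k$ with each $Y_k$ a permutation matrix, $\mu_k\in[0,1]$ and $\sum_k \mu_k=1$, we obtain $X=\lambda_1 X_1+\sum_{k=1}^{q'}(1-\lambda_1)\mu_k Y_k$, a convex combination with $q=q'+1$ terms.

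To bound $q$: the number of nonzero entries starts at most $p^2$, strictly decreases with each extraction, and the residual becomes a permutation matrix exactly when it hits $p$ nonzero entries (a bi-stochastic matrix with $p$ nonzeros must be a permutation matrix, again by the Hall argument), which happens after at most $p^2-p$ extractions; adding the final permutation matrix gives $q\le p^2-p+1\le p^2-p+2$. For the running time, each iteration performs one bipartite perfect-matching computation on a graph with $p$ vertices per side and at most $p^2$ edges, which Hopcroft--Karp does in $\mathcal{O}(p^{2.5})$ time, plus $\mathcal{O}(p^2)$ arithmetic to form $X'$; over the $\mathcal{O}(p^2)$ iterations this totals $\mathcal{O}(p^{4.5})$. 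The only step with genuine content is the perfect-matching claim for the support graph; everything after it is bookkeeping.
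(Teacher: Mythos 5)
Your argument is correct: the Hall-condition verification for the support graph, the extraction of a permutation matrix with coefficient $\lambda_1=\min_i x_{i,\sigma(i)}$, the strict decrease in the number of nonzero entries, the resulting bound $q\le p^2-p+1\le p^2-p+2$, and the $\mathcal{O}(p^2)\times\mathcal{O}(p^{2.5})=\mathcal{O}(p^{4.5})$ running-time accounting all hold up. Note, however, that the paper does not prove Theorem~\ref{thm:bvn} at all: it is invoked as a black box with citations to Birkhoff, von Neumann, Johnson et al.\ and Lov\'asz--Plummer, and the algorithmic content (including the stated bounds on $q$ and the running time) is taken from those references. What you have written is essentially the classical constructive proof underlying those citations --- iterated perfect-matching extraction from the support graph --- so it is not a different route from the literature, but it is a genuinely self-contained derivation where the paper offers none; it also yields the slightly sharper bound $q\le p^2-p+1$. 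One small remark: your observation that a bi-stochastic matrix with exactly $p$ nonzeros is a permutation matrix needs only the fact that every row and every column contains at least one nonzero (pigeonhole), not the full Hall argument, and phrasing it that way makes the termination bookkeeping cleaner.
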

With the help of Theorem~\ref{thm:bvn}, we design Algorithm~\ref{alg:BoBW-chores}: The Uniform Lottery Algorithm, which gives an ex-ante \wsdef{} and ex-post \wsdpropone{} allocation of chores using only the ordinal valuations.

\begin{algorithm}
\caption{\textsc{Uniform Lottery Algorithm} for chores}\label{alg:BoBW-chores}
\begin{algorithmic}[1]
    \Require A chore allocation instance \(\mathcal{I}=\langle{A},B,\Pi,\mathcal{F}\rangle\), where \(|A|=n\) and \(|B|=m\).
    \Ensure A fractional \wsdef{} allocation \(X=\sum_{k=1}^q~\lambda_k~X_k\) where each \(X_k\) represents a deterministic \wsdpropone{} allocation and \(q\in \mathcal{O}(m^c)\).
    \State \(G^+_c \gets\) extended allocation graph of \(\mathcal{I}\)
    \State \(Y \gets \) fractional perfect matching in \(G^+_c\) where each agent \(a_i\in A\) gets \(\alpha_i\) fraction of every real chore \Comment{(As in Lemma~\ref{lemma:fractionalmatching})}
    \State Invoke Theorem~\ref{thm:bvn} to compute a decomposition \(Y=\sum_{k=1}^q~\lambda_k Y_k\) where \(q\le(m+n)^2-(m+n)-2\)
    \State Convert \(Y=\sum_{k=1}^q\lambda_k{Y_k}\) to \(X=\sum_{k=1}^q\lambda_k{X_k}\) where all the dummy chores are ignored.\\
    \Return Allocation \(X\) and its decomposition \(\sum_{k=1}^q\lambda_k{X_k}\)
\end{algorithmic}
\end{algorithm}
\begin{theorem}\label{thm:bobw-chores}
    The randomized allocation implemented by Algorithm~\ref{alg:BoBW-chores} is ex-ante \wsdef{} and ex-post \wsdpropone{}
\end{theorem}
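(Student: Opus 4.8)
The plan is to verify the two halves of the claim separately, using the machinery already built up in the excerpt. For the \textbf{ex-ante} part, recall that Algorithm~\ref{alg:BoBW-chores} computes a fractional perfect matching $Y$ in $G^+_c$ in which each agent $a_i$ receives exactly $\alpha_i$ fraction of every real chore (Lemma~\ref{lemma:fractionalmatching}); dropping the dummy chores gives the fractional allocation $X$ that implements the randomized allocation. So I would argue, exactly as in the paragraph preceding Lemma~\ref{lemma:fractionalmatching}, that for any two agents $a_i,a_k$ and any $v_i\in\mathscr{U}(\pi_i)$ we have $v_i(X_i)/\alpha_i = v_i(B) = v_i(X_k)/\alpha_k$, so $X$ is \wsdef{}. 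By the definition of ex-ante fairness (a randomized allocation is ex-ante $\langle P\rangle$ if its implementing fractional allocation satisfies $\langle P\rangle$), this gives ex-ante \wsdef{}.

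For the \textbf{ex-post} part, I would use Birkhoff--von Neumann (Theorem~\ref{thm:bvn}) applied to $Y$ viewed as a bi-stochastic matrix on the balanced bipartite graph $G^+_c$: it decomposes $Y=\sum_{k=1}^q\lambda_k Y_k$ with each $Y_k$ an integral perfect matching of $G^+_c$ and $\sum_k\lambda_k=1$, $\lambda_k\in[0,1]$. Each integral perfect matching $Y_k$, after deleting the dummy chores, restricts to a $B$-perfect matching of $G_c$ — this is precisely the correspondence spelled out in the ``Extending the Allocation Graph'' paragraph: every perfect matching in $G^+_c$ ignores down to a $B$-perfect matching in $G_c$. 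By Proposition~\ref{perfect-matchings-wsdprop1}, every $B$-perfect matching in $G_c$ corresponds to a \wsdpropone{} allocation of chores. Hence each $X_k$ in the support is \wsdpropone{}, which is exactly what ex-post \wsdpropone{} demands.

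Finally I would note that $Y$ does lie in the matching polytope of $G^+_c$ (the constraints \eqref{eqn:LP} are exactly the ``each slot gets total mass $1$, each chore gets total mass $1$, nonnegativity'' conditions verified in the proof of Lemma~\ref{lemma:fractionalmatching}, including the dummy chores), so Theorem~\ref{thm:bvn} genuinely applies, and the number of terms $q\le(m+n)^2-(m+n)-2$ is polynomial, making the whole procedure polynomial-time. Putting the two halves together yields the statement.

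I do not expect any real obstacle: the theorem is essentially a bookkeeping assembly of Lemma~\ref{lemma:fractionalmatching}, Theorem~\ref{thm:bvn}, the $G_c$--$G^+_c$ correspondence, and Proposition~\ref{perfect-matchings-wsdprop1}. The one point that needs a sentence of care is that the fractional object is simultaneously a matching in $G^+_c$ \emph{and} the implementing fractional allocation of the randomized allocation restricted to $B$ — i.e.\ that ignoring dummy chores commutes with taking convex combinations — but this is immediate since deletion of a fixed set of columns is linear. The mild subtlety worth flagging explicitly is that \wsdpropone{} is monotone under SD-domination (stated right after Definition~\ref{def:wsd-prop1}), so one need not worry that the particular choice of $Y_k$ matters: being a $B$-perfect matching of $G_c$ is all that is used.
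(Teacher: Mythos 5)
Your proposal is correct and follows essentially the same route as the paper's proof: ex-ante \wsdef{} from Lemma~\ref{lemma:fractionalmatching}, and ex-post \wsdpropone{} by observing that each integral matching in the Birkhoff--von Neumann decomposition restricts (after dropping dummy chores) to a $B$-perfect matching in $G_c$, hence is \wsdpropone{} by Proposition~\ref{perfect-matchings-wsdprop1}. Your extra remarks (that $Y$ lies in the matching polytope and that deleting dummy chores commutes with taking convex combinations) are points the paper leaves implicit but are consistent with its argument.
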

\begin{proof}
    Algorithm~\ref{alg:BoBW-chores} returns an allocation \(X\) and its decomposition \(\sum_{k=1}^q\lambda_k{X_k}\). From Lemma~\ref{lemma:fractionalmatching}, we know that the allocation \(X\) returned by the algorithm is \wsdef{}. Each of the \(X_k\)s in the decomposition is a \(B\)-perfect matching in the allocation graph \(G_c\).Therefore, from Proposition~\ref{perfect-matchings-wsdprop1}, each \(X_k\) is \wsdpropone{}.
\end{proof}
\section{Beyond Fairness: Economic Guarantees}\label{sec:seq-chore}

In Section~\ref{sec:existance-chores}, we discussed the reduction from \wsdpropone{} allocations to matchings. In this section, we investigate the incorporation of additional economic efficiency notions alongside fairness. In Appendix Section~\ref{subsec:impossibility-chores}, we give an example instance where no \wsdpropone{} allocation is Pareto optimal under all valuations. Furthermore, we give examples of instances where given the cardinal valuations, there does not exist a Pareto optimal allocation that is \wsdpropone{} for the underlying ordinal instance. Therefore, we explore a more relaxed concept known as \emph{sequencibility} (\seq{}). We prove a general graph theoretic lemma, showing that every rank-maximizing \(A\)-perfect matching is sequencible. This lemma could be of independent interest with other applications. Using this result, we establish that computing a rank-maximizing perfect matching, rather than an arbitrary one, yields \wsdpropone{}+\seq{} allocations.

We begin with the following simple observation about rank-maximal matchings:
\begin{proposition}\label{prop:seq}
    Given a graph \(G=(A\cup B,E=E_1\cup\ldots\cup E_r)\), if \(M\) is a rank-maximal matching in \(G\), then $M$ is sequencible (\seq{}). 
\end{proposition}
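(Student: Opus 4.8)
The plan is to prove that a rank-maximal matching $M$ in $G=(A\cup B, E_1\cup\cdots\cup E_r)$ is sequencible by exhibiting an explicit picking sequence that reconstructs $M$. The natural candidate is the \emph{greedy-by-rank} order: first let every agent matched by an edge of rank $1$ pick (in any order among themselves), then every agent matched by a rank-$2$ edge, and so on up to rank $r$, skipping the agents left unmatched by $M$. I would need to show that when it is agent $a$'s turn (with $M(a)=b$ and $\mathrm{rank}(a,b)=k$), the item $b$ is still available, and moreover that $b$ is in fact $a$'s most preferred item among those still available, so that the picking-sequence procedure (which forces $a$ to take her favorite available item) does indeed hand $a$ the item $b$.

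The key step, and the main obstacle, is the claim that at the moment $a$ picks, no item $b'$ with $\mathrm{rank}(a,b') < k$ is still available. This is exactly where rank-maximality is used. Suppose for contradiction that some such $b'$ is still unclaimed when $a$'s turn comes. By the ordering of the sequence, all agents whose $M$-edge has rank $< k$, and all agents with an $M$-edge of rank $k$ who come before $a$, have already picked; so if $b'$ is still free, it is not matched in $M$ to any of those agents, and since $a$ herself is matched to $b\neq b'$, the edge $(a,b')$ is not in $M$ either. I would then argue that $b'$ is unmatched in $M$ altogether at this point in the sense needed to build an augmenting-type modification: consider flipping $a$ from $b$ to $b'$. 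If $b'$ is $M$-unmatched, this strictly improves the signature at coordinate $k' = \mathrm{rank}(a,b') < k$ while only possibly decreasing coordinate $k$, contradicting lexicographic maximality of $\rho(M)$. If $b'$ is $M$-matched, it must be matched to some agent $a''$ whose $M$-edge has rank $\ge k$ and who has not yet picked (otherwise $b'$ would be gone); then rematch $a\mapsto b'$ and leave $a''$ unmatched (or chase the alternating path), again raising a low coordinate of the signature and lowering only coordinates $\ge k$ — contradiction. So no higher-preference available item exists, the procedure assigns $b$ to $a$, and by induction over the sequence the whole of $M$ is reproduced.

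I should be careful about one subtlety: a rank-maximal matching need not be maximum or $A$-perfect, so some agents are simply omitted from the picking sequence; they receive nothing, which is consistent with $M$ leaving them unmatched, and the argument above never relies on $M$ saturating $A$. The other point to nail down is that the ``flip'' producing a better signature is genuinely well-defined — i.e.\ the swap yields a valid matching — which is immediate when $b'$ is free and requires a one-step alternating adjustment when $b'$ is occupied by a not-yet-picked agent, whose $M$-edge has rank $\ge k$ by construction of the ordering. With those observations in place the induction closes and $M$ is sequencible. This is also the skeleton one reuses for Lemma~\ref{lemma:rank-maximal-perfect} in the perfect-matching case, simply restricting attention to perfect matchings throughout.
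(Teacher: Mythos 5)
Your proof is correct, and it uses the same picking sequence as the paper (agents grouped by the rank of their matched edge, $A_1,\ldots,A_r$, with unmatched agents omitted), but the verification that each agent actually receives $M(a)$ is carried out differently. The paper justifies this by invoking the layer-wise structural property of rank-maximal matchings --- $M_1$ is a maximum matching in $G_1$, and inductively $M_i$ is a maximum matching on rank-$i$ edges in the residual graph after earlier agents have picked --- and reads off from this that no better-ranked item can survive to an agent's turn. You instead argue by a direct exchange: if an item $b'$ with $\mathrm{rank}(a,b')<k$ were still available when $a$ (matched at rank $k$) picks, then either $b'$ is $M$-unmatched and swapping $a$ onto $b'$ raises coordinate $k'$ of the signature while lowering only coordinate $k$, or $b'$ is matched to a not-yet-picked agent $a''$ with rank $\ge k$, and rematching $a$ to $b'$ while simply unmatching $a''$ again raises a strictly smaller coordinate while lowering only coordinates $\ge k$; both contradict lexicographic maximality. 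This is more self-contained (it does not presuppose the maximality-per-layer characterization, which itself needs an augmenting/exchange argument) and, as you note, it is essentially the technique the paper reserves for Lemma~\ref{lemma:rank-maximal-perfect}; the only difference there is that perfection must be preserved, so one cannot simply drop $a''$ and must instead follow an alternating path or cycle of $+$-edges. Two points you should make explicit when writing it up: the whole argument is an induction whose hypothesis is that every previously placed agent picked exactly its $M$-partner (this is what guarantees both that $M(a)$ is still available and that an available $M$-matched item's partner has not yet picked, hence has rank $\ge k$), and that ranks are strict on each agent's neighborhood, so that once no better-ranked neighbor is available, $M(a)$ is indeed the unique favorite available item.
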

\begin{proof}
Define $G_i=(A\cup B, E_1\cup\ldots\cup E_i)$. Let the rank-maximal matching \(M\) be decomposed as $M=M_1\cup M_2\cup \ldots M_r$ where $M_i=M\cap E_i$ for $i\in [r]$. For a vertex $a$, we denote its matched partner in $M$ as $M(a)$.
A picking sequence for $M$ is $\langle A_1, A_2,\ldots,A_r\rangle$ where, for $i\in[r]$, $A_i$ is an arbitrary ordering of vertices in $A$ that are matched by an edge in $M_i$. This sequence results in $M$, for the following reason. Firstly, $M_1$ is a maximum matching in $G_1$. For each $i\in [r], i>1$, after the agents in $A_1\cup\ldots\cup A_{i-1}$ pick the items they are matched to in $M$, $M_i$ is a maximum matching on rank $i$ edges in the remaining graph. Thus, when all the agents in $A_1\cup\ldots\cup A_i$ pick their favorite item among the available items, there are no items left, which are ranked between $1$ and $i$ for any of the remaining agents, this results in agents in $A_{i+1}$ picking their rank $i+1$ items.
\end{proof}
\balance
Our interest is in finding an \(A\)-perfect matching that is also sequencible. In general, a rank-maximal matching need not be an \(A\)-perfect matching and all perfect matchings are not sequencible.The Figure~\ref{fig:rank-maximal} shows one such example. We show that rank-maximal perfect matchings are sequencible. Unlike a rank-maximal matching, a rank-maximal perfect matching $M$ may not satisfy the properties mentioned in Proposition~\ref{prop:seq} i.e., $M_1$ may not be a maximum matching in $G_1$, and in general, after agents in $A_1\cup\ldots\cup A_{i-1}$ pick their respective choices, $M_i$ may not be a maximum matching on rank $i$ edges in the remaining graph. Hence the ordering of vertices in $A_1,\ldots,A_r$ needs to be carefully chosen while constructing the picking sequence.
\begin{figure}
    \centering
    \input{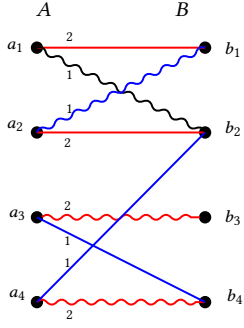}
    \caption{The edges in red form a perfect matching - but it is not sequencible. The blue edges correspond to a rank-maximal matching, but it is not perfect. The squiggly edges corresponds to a rank-maximal \(A\)-perfect matching. This is sequencible, and the picking sequence is $\langle a_1,a_2,a_4,a_3\rangle$}
    \label{fig:rank-maximal}
\end{figure}
\begin{lemma}\label{lemma:rank-maximal-perfect}
    Given a graph \(G=(A\cup B,E_1\cup\ldots\cup E_r)\) which is an instance of the rank-maximal matchings problem, and an \(A\)-perfect matching \(M\) in \(G\), if \(M\) is a rank-maximal \(A\)-perfect matching then \(M\) is sequencible. 
\end{lemma}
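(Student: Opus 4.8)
The plan is to produce an explicit picking sequence for $M$ by reducing the question to the acyclicity of a precedence digraph. First I would define a digraph $D$ on the vertex set $A$ by putting an arc $a \to a'$ (for $a\neq a'$) whenever $M(a)\in N(a')$ and $\mathrm{rank}(a',M(a)) < \mathrm{rank}(a',M(a'))$; intuitively, $a'$ strictly prefers $a$'s item to its own, so in any picking sequence that produces $M$, agent $a$ must pick before $a'$. The two facts I would establish are: (i) no agent $a$ has a neighbour $b$ that is \emph{unmatched} in $M$ with $\mathrm{rank}(a,b) < \mathrm{rank}(a,M(a))$; and (ii) $D$ is acyclic. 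Granting (i) and (ii), I would take any topological order $\langle a_{\sigma(1)},\dots,a_{\sigma(n)}\rangle$ of $D$ as the picking sequence and prove by induction on $t$ that after the first $t$ agents pick, the items taken are exactly $M(a_{\sigma(1)}),\dots,M(a_{\sigma(t)})$: when $a_{\sigma(t)}$ picks, the available items are $\{M(a_{\sigma(t)}),\dots,M(a_{\sigma(n)})\}$ together with the $M$-unmatched items; a preferred unmatched neighbour is excluded by (i), and a preferred item $M(a_{\sigma(t')})$ with $t'>t$ would force the arc $a_{\sigma(t')}\to a_{\sigma(t)}$ in $D$, contradicting the topological order. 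Hence $a_{\sigma(t)}$ picks exactly $M(a_{\sigma(t)})$, and the sequence yields $M$.

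Both (i) and (ii) are exchange arguments against rank-maximality, exploiting that the signature $\rho$ of an $A$-perfect matching records only the ranks of its own edges. For (i): if such an unmatched neighbour $b$ existed, then $M' = M - (a,M(a)) + (a,b)$ is again an $A$-perfect matching (the vertex $M(a)\in B$ simply becomes unmatched, which is allowed), and since $\mathrm{rank}(a,b) < \mathrm{rank}(a,M(a))$ its signature agrees with $\rho(M)$ on all coordinates before $\mathrm{rank}(a,b)$ and is strictly larger there, so $\rho(M') >_{\mathrm{lex}} \rho(M)$ --- a contradiction. For (ii): a directed cycle $a_1 \to a_2 \to \cdots \to a_k \to a_1$ in $D$ yields a cyclic rotation $M'$ of $M$, obtained by rematching $a_{i+1}$ to $M(a_i)$ (indices modulo $k$) and keeping every other edge of $M$; since the same vertex sets stay matched, $M'$ is still $A$-perfect, and the defining inequality of each arc shows that every one of $a_1,\dots,a_k$ has the rank of its matched edge strictly decreased, with no other agent affected. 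Taking $\ell^\ast$ to be the smallest rank occurring among the new edges of the rotated agents, one checks that $\rho(M)$ and $\rho(M')$ agree at all coordinates below $\ell^\ast$ and that $\rho(M')$ is strictly larger at $\ell^\ast$, so again $\rho(M') >_{\mathrm{lex}} \rho(M)$, contradicting rank-maximality.

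I expect the main obstacle to be step (ii): recognising that the "careful ordering" is precisely a topological order of $D$, that the only obstruction to such an order is a directed cycle, and that a single cyclic rotation of the matching turns that cycle into a strict signature improvement. The accompanying bookkeeping --- verifying that $\ell^\ast$ is the first coordinate where $\rho(M)$ and $\rho(M')$ differ and that the difference there is positive --- is the one routine-but-delicate computation; the characterization via $D$, step (i), and the inductive verification of the picking sequence are all straightforward once the digraph is in place.
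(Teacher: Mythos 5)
Your proof is correct, and it reorganizes the argument in a way that differs from the paper's packaging while resting on the same underlying exchange idea. The paper does not introduce a precedence digraph: it builds the picking sequence incrementally, grouping agents by the rank classes \(A_1,\ldots,A_r\) of their matched edges, and at each stage argues that some remaining agent has no ``\(+\)-edge'' (an edge to a still-available item it prefers to its match); otherwise every remaining agent has degree at least \(2\) in the subgraph of \(M\)-edges and \(+\)-edges, which forces an alternating path or cycle whose symmetric difference with \(M\) is again \(A\)-perfect with lexicographically better signature, contradicting rank-maximality. Your claim (ii) is exactly the cycle case of that exchange (your rotation along a directed cycle of \(D\) is \(M\oplus C\) for the corresponding alternating cycle), and your claim (i) isolates the unmatched-item case that the paper absorbs into the path case. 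What your route buys is a cleaner global statement: the ``careful ordering'' the paper alludes to is made explicit as a topological order of \(D\), the two exchange cases are separated into two short lemmas with transparent signature bookkeeping (your verification that the first differing coordinate is \(\ell^\ast\) and increases is correct), and you avoid the slightly delicate bookkeeping in the paper about which rank class the next agent comes from. What the paper's incremental formulation buys is a single unified path-or-cycle argument and a sequence organized by rank classes. Both proofs are valid and of comparable length.
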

\begin{proof}
Consider a rank-maximal \(A\)-perfect matching \(M\). Let \(A_i=\{a\in A\mid rank(a,M(a))=i\}\) for all \(i\in[r]\). We show that there always exists an ordering $\sigma_i(A_i)$ of vertices within each $A_i$ such that $\sigma(A)=\langle \sigma_1(A_1),\ldots,\sigma_r(A_r)\rangle $ is a picking sequence for \(M\). Clearly, each vertex $a\in A_1$ can appear in any order in $\sigma(A_1)$. As long as it appears before vertices in $A_i, i>1$, it gets to pick its first choice, which is $M(a)$. Since $M_1$ need not be maximum in $G_1$, a vertex in $A_2$ may still have its first choice left unmatched after vertices in $A_1$ choose their match. However, to get $M$, we need all the vertices in $A_i$ to pick their $i$\textsuperscript{th} choice. We show the existence of a picking sequence by induction on the length of the partial sequence constructed at any point. Consider the stage where vertices in $A'\subseteq A$ are already arranged in a picking sequence, and $A_1,\ldots,A_{i-1}\subseteq A'$, and we are currently constructing $\sigma_i$. We show that there is always a vertex $a\in A_i\setminus A'$ that has its first $i-1$ choices already matched to vertices in $A'$. Thus, $a$ can be the next vertex in the sequence, and it will have to pick its $i$\textsuperscript{th} choice i.e. $M(a)$. Define $B'=M(A')$. Define $A''=A\setminus A'$, $B''=B\setminus B'$.

Call an edge \((a,b)\in E\setminus M\), with $a\in A'', b\in B''$, a {\em $+$-edge} if $rank(a,b)<rank(a,M(a))$ i.e. $a$ prefers $b$ over its match in $M$. In the picking sequence, we need to choose a vertex $a\in A''$ that has no $+$-edge to any $b\in B''$. Assume for the sake of contradiction that such an $a$ does not exist. Then consider the graph $G''=(A''\cup B'', E'')$, where $E''$ is set of edges $(a,b)$ where $a\in A''$, $b\in B''$ and $(a,b)\in M$ or $(a,b)$ is a $+$-edge in $G$. Since $M$ is an \(A\)-perfect matching, the degree of every vertex in $A''$ is at least $2$. Thus $G''$ must either have an alternating path $P$ or a cycle $C$ that alternates between $+$-edges and edges of $M$. Now, the symmetric difference $M\oplus C$ (or $M\oplus P$) is a matching that is also perfect and has a better signature than $M$, since every agent appearing on $C$ gets a more preferred choice in $M\oplus C$ by the definition of a $+$-edge. This contradicts the assumption that $M$ is a rank-maximal perfect matching. Thus, there must be an $a\in A''$ with no $+$-edge incident on it, and hence $M(a)$ is its most preferred choice in $B''$. This vertex can be inserted as the next vertex in the picking sequence, breaking ties arbitrarily, if any.
\end{proof} 

Using Lemma~\ref{lemma:rank-maximal-perfect}, we now show that a rank-maximal perfect matching in the extended allocation graph \(G^+_c\) gives a sequencible \wsdpropone{} allocation.

\begin{theorem}\label{thm:seq-chores}
    There always exist a \wsdpropone{}+\seq{} allocation of chores.
\end{theorem}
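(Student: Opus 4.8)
The plan is to combine two facts established earlier in the paper: first, that every $B$-perfect matching in the allocation graph $G_c$ (equivalently, every perfect matching in the extended graph $G^+_c$) corresponds to a \wsdpropone{} allocation of chores (Proposition~\ref{perfect-matchings-wsdprop1}), and second, that a rank-maximal $A$-perfect matching in any bipartite graph equipped with a ranking is sequencible (Lemma~\ref{lemma:rank-maximal-perfect}). So the task reduces to: (i) set up $G^+_c$ as a rank-maximal matchings instance by assigning ranks to the edges so that the ranking on the slots reflects the agents' ordinal preferences over chores, (ii) invoke Theorem~\ref{thm:WPROP1-chores-exists} to know a $B$-perfect matching (hence a perfect matching in $G^+_c$) exists, so a rank-maximal perfect matching in $G^+_c$ exists, and (iii) argue that such a matching is simultaneously \wsdpropone{} (by Proposition~\ref{perfect-matchings-wsdprop1}) and \seq{} (by Lemma~\ref{lemma:rank-maximal-perfect}).

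The main thing to get right is step (i): how to rank the edges of $G^+_c$ so that sequencibility of the matching in the graph-theoretic sense coincides with sequencibility of the induced chore allocation. For a slot $s_{i,\ell}$ incident to a real chore $b$, I would set $rank(s_{i,\ell}, b)$ to be the position of $b$ in agent $a_i$'s preference $\pi_i$ — but restricted to the chores $b$ actually adjacent to $s_{i,\ell}$, i.e. those with $\pi_i^{-1}(b) \ge \lceil (\ell-1)/\alpha_i \rceil$, and then compressed to consecutive ranks $1, 2, \dots$ in the same relative order. Dummy chores should be ranked worst (larger than any real-chore rank) at every slot so that a rank-maximal perfect matching never prefers a dummy chore over an available real chore; this guarantees that when we delete the dummy edges we still obtain a genuine picking sequence: whenever a slot of $a_i$ picks, among the chores still available that the slot is adjacent to, it takes the one that is best in $\pi_i$, which is exactly the picking-sequence behavior required of the chore allocation. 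I also need to observe that the slots of a single agent $a_i$ are naturally ordered ($s_{i,1}, s_{i,2}, \dots$) and their neighborhoods are nested (earlier slots see more chores, by Proposition~\ref{prop:heavy_chores}), so the picking sequence over slots translates cleanly into a picking sequence over agents with the right multiplicities, and the resulting allocation is precisely the \seq{} allocation realized by that sequence.

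Concretely I would write: Consider the extended allocation graph $G^+_c = (S \cup B', E')$. View it as a rank-maximal matchings instance where each slot $s_{i,\ell}$ ranks its neighbors according to $\pi_i$ (restricted and compacted as above), with all dummy chores ranked last. By Theorem~\ref{thm:WPROP1-chores-exists}, $G_c$ has a $B$-perfect matching, which extends to a perfect matching of $G^+_c$; hence $G^+_c$ has a rank-maximal perfect matching $M^*$, computable in polynomial time by the rank-maximal perfect matching algorithm. Ignoring the dummy chores, $M^*$ induces a $B$-perfect matching in $G_c$ and thus, by Proposition~\ref{perfect-matchings-wsdprop1}, a \wsdpropone{} allocation $X$. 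By Lemma~\ref{lemma:rank-maximal-perfect}, $M^*$ is sequencible with some picking sequence $\sigma$ over the vertices of $S$; collapsing consecutive picks by slots of the same agent and discarding the positions where a dummy chore is chosen (each such pick happens only after all real chores a slot can reach are exhausted, by the dummy-last ranking) yields a picking sequence $\sigma'$ over agents whose output is exactly $X$. Hence $X$ is \wsdpropone{}+\seq{}.

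The step I expect to be the main obstacle — and where the proof needs genuine care rather than routine checking — is verifying that discarding the dummy-chore picks from $\sigma$ really leaves a valid picking sequence for the chore allocation: one must confirm that at the moment a real-chore slot $s_{i,\ell}$ picks in $\sigma$, the chore it receives in $M^*$ is indeed $a_i$'s most preferred chore among those still unallocated, and that no dummy pick by an earlier slot "steals" a real chore that should have gone later. The dummy-last ranking handles the latter, and the restriction of ranks to each slot's neighborhood plus the nestedness of a single agent's slot-neighborhoods handles the former, but these observations should be spelled out. Once that bookkeeping is in place the theorem follows immediately from the two cited results.
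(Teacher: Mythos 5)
Your proposal is essentially the paper's own proof: the paper also turns the extended allocation graph \(G^+_c\) into a rank-maximal matchings instance with dummy chores ranked after all real chores, takes a rank-maximal perfect matching, and combines Proposition~\ref{perfect-matchings-wsdprop1} with Lemma~\ref{lemma:rank-maximal-perfect}, discarding the dummy-matched slots (which end up at the tail of the picking sequence) exactly as you argue; your per-slot compressed ranks versus the paper's global ranks, and your appeal to the up-set structure of slot neighborhoods versus the paper's w.l.o.g.\ reordering of an agent's slots, are only cosmetic differences. One detail to fix: since for chores \(\pi_i\) lists the \emph{least} favorite chore first, taking \(rank(s_{i,\ell},b)\) to be the position of \(b\) in \(\pi_i\) would put the heaviest chore at rank~1; the ranking must be preference-aligned (lightest chore gets rank~1), which the paper achieves by setting \(rank(s_{i,\ell},b)=m+1-\pi_i^{-1}(b)\) — your surrounding argument clearly assumes this orientation, so it is a notational slip rather than a gap.
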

\begin{proof}
Let \(G^+_c=(S,B',E')\) be the extended allocation graph of an instance \(\mathcal{I}\). Recall that \(B\) is the set of real chores and \(B'\setminus B\) is the set of dummy chores and \(|B|=m,|B'|=m+q\). For each slot \(s_{i,\ell}\), we first rank the real chores from \(1\) to \(m\) as \(rank(s_{i,\ell},b)=m+1-\pi_i(b)\) for all \(b\in B\). The dummy chores ranked from \(m+1\) to \(m+q\) in an arbitrary way.

For any two slots \(s_{i,p}\) and \(s_{i,q}\) of an agent \(a_i\), if \(p>q\), then \(N(p)\subseteq N(q)\). This is because \(G^+_c\) satisfies Condition~\ref{eq:4}. Therefore, given a matching \(M\), if \(rank(s_{i,p},M(s_{i,p})>rank(s_{i,q},M(s_{i,q}))\), then we can interchange \(M(s_{i,p})\) and \(M(s_{i,q})\) without altering the signature of the matching. Thus, given a rank-maximal perfect matching \(M\), we can assume w.o.l.g that for any agent \(a_i\in A\), and \(p,q\le m_i\), if \(p>q\) then \(rank(s_{i,p},M(s_{i,p})<rank(s_{i,q},M(s_{i,q}))\).

Given a rank-maximal perfect matching \(M\) in \(G^+_c\), from Lemma~\ref{lemma:rank-maximal-perfect} we obtain a sequence \(\sigma(S)\) of slots. To construct a sequence of agents, replace each \(s_{i,\ell}\) with the corresponding agent \(a_i\). Since dummy chores are ranked higher than real chores, all the slots that are matched to dummy chore forms the tail of the sequence \(\sigma(M)\) and hence they can be safely ignored. Therefore, a rank-maximal perfect matching in \(G^+_c\) gives a \wsdpropone{}+\seq{} allocation. 
\end{proof}

Therefore, using the algorithm to find rank-maximal perfect matchings \cite{michail2007reducing,irving2003greedy}, we can compute a \wsdpropone{}+\seq{} allocation in time \(\mathcal{O}((m+n)^{3.5})\).
\section{Conclusion}\label{sec:conclusion}
In this paper, we consider the fairness notion of weighted necessarily proportionality up to one item (\wsdpropone{}). We show that finding \wsdpropone{} allocations can be reduced to finding perfect matchings in a bipartite graph - namely the allocation graph. This insight provides a practical framework for leveraging tools and techniques from the field of matching theory. We show that rank-maximal perfect matchings give picking sequences for finding \wsdpropone{}+\seq{} allocations. We show that the perfect matching polytope of the allocation graph captures all the \wsdpropone{} allocations, thus enabling us to optimize for any linear objective function. We then create a fractional perfect matching in the allocation graph, corresponding to a \wsdef{} allocation. Decomposing this allocation, equivalent to decomposing the fractional matching into integral matchings, results in a randomized algorithm for computing an Ex-ante \wsdef{} Ex-post \wsdpropone{} allocation, both in the case of goods and chores. Our works raises the open question of the existence of \wsdpropone{} allocations in the mixed setting, where set \(B\) includes both goods and chores.
\begin{acks}

We express gratitude to the anonymous reviewers of AAMAS 2024 for their valuable comments. H.V. acknowledges support from TCS-RSP for their grant. Special thanks to friends and colleagues at CMI - Archit Chauhan, Asif Khan, Soumodev Mal, and Sheikh Shakil Akhtar for their time and expertise in proofreading and discussing the contents of this paper.

\end{acks}
\clearpage

\bibliographystyle{ACM-Reference-Format}
\bibliography{main}

\newpage
\begin{appendices}
\appendixpage
\section{Allocating Goods}\label{sec:goods}
We now extend all our results to the case of goods. Given an instance \(\mathcal{I}=\langle A,B,\Pi,\mathcal{F} \rangle\) of goods allocation, we first characterize \wsdpropone{} bundles as follows: 

\begin{lemma}\label{lemma:WPROP1-goods} 
    Let \(T\subseteq B\) be a set of \(m_i\) goods, and let \(r_1<r_2<\cdots<r_{m_i}\) be the ranks of the goods in \(T\) in the ranking \(\pi_i\) of agent \(a_i\) (i.e, this set consists of the \(r_1\)-most favorite good,  \(r_2\)-most favorite good,\(\cdots\), and the \(r_{m_i}\)-most favorite good for agent \(a_i\)). Then bundle \(T\) is \emph{\wsdpropone{}} for \(a_i\) if and only if the following two conditions hold: 

    \begin{eqnarray} 
        m_i & \ge & \lceil m\alpha_i\rceil -1\label{eq:1}\\
    \forall 1\le \ell \le m_i,\quad
        r_\ell & \le & \left \lfloor\frac{\ell}{\alpha_i}\right \rfloor +1 \label{eq:2}
    \end{eqnarray}
\end{lemma}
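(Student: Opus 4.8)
\textbf{Proof proposal for Lemma~\ref{lemma:WPROP1-goods}.}

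The plan is to mirror the structure of the proof of Lemma~\ref{lemma:WPROP1-chores}, adapting each step by the natural goods/chores duality: reversing inequalities, replacing ``remove an item'' with ``add an item'', and reading the interval representation from the most-favored end rather than the least-favored end. As before, I would renumber the goods so that $b_j = \pi_i(j)$, assume $0 < \alpha_i < 1$ (if $\alpha_i = 1$ every bundle is trivially \wsdpropone{}, and if $\alpha_i = 0$ the agent is vacuously satisfied), and work with the interval representation where good $\pi_i(j)$ occupies $[j-1,j]$ and $a_i$'s interval set is $I^i = \{I^i_1,\ldots,I^i_{k_i}\}$ with $k_i = \lceil m\alpha_i\rceil$.

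For \textbf{necessity}, suppose one of the conditions fails; I exhibit a $\pi_i$-respecting valuation witnessing that $T$ is not \wpropone{}. If \eqref{eq:1} fails, i.e. $m_i \le \lceil m\alpha_i\rceil - 2$, take $v_i \equiv 1$ on all goods; then for any extra good $b$, $v_i(T \cup \{b\}) \le m_i + 1 \le \lceil m\alpha_i\rceil - 1 < m\alpha_i$ (using $\lceil m\alpha_i\rceil - 1 < m\alpha_i$ when $m\alpha_i$ is non-integral; the integral case needs a small separate check, or one notes $m_i+1 \le \lceil m\alpha_i \rceil - 1 \le m\alpha_i$ still gives a strict deficit against $\alpha_i v_i(B) = m\alpha_i$ only when the first inequality is strict, so I would instead set $v_i \equiv 1$ and observe $v_i(T\cup\{b\}) \le m_i+1 < m\alpha_i$ precisely when $m_i < m\alpha_i - 1$, which is what $m_i \le \lceil m\alpha_i\rceil - 2$ forces unless $m\alpha_i \in \mathbb{Z}$; in the integral case use $m_i \le m\alpha_i - 2$ directly). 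If \eqref{eq:2} fails, i.e. $r_\ell \ge \lfloor \ell/\alpha_i\rfloor + 2$ for some $\ell$, put $v_i(b_j) = 1$ for $j \le \lfloor \ell/\alpha_i\rfloor + 1$ and $v_i(b_j) = 0$ otherwise. Then $T$ contains at most $\ell-1$ of the first $\lfloor \ell/\alpha_i\rfloor + 1$ goods (its first $\ell$ ranks are $r_1 < \cdots < r_\ell$ with $r_\ell$ already past that threshold), so even after adding the best available good, $v_i(T\cup\{b\}) \le \ell < (\ell/\alpha_i)\alpha_i \le (\lfloor \ell/\alpha_i\rfloor+1)\alpha_i = \alpha_i v_i(B)$, contradicting \wpropone{}.

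For \textbf{sufficiency}, assume \eqref{eq:1} and \eqref{eq:2} hold. As in the chores proof, it suffices to treat the SD-worst such bundle, namely the one where both conditions are tight ($m_i = \lceil m\alpha_i\rceil - 1$ and $r_\ell = \lfloor \ell/\alpha_i\rfloor + 1$ for each $\ell$), since any bundle satisfying the conditions SD-dominates this one from the agent's perspective and \wsdpropone{} is preserved upward under $\succsim_i^{\text{SD}}$. I would then build a fractional bundle $T'$ with $T \succsim_i^{\text{SD}} T'$ that is fractionally \wpropone{}: for each good $b_j \in T$ with $j = \lfloor \ell/\alpha_i\rfloor + 1$, a positive fraction of $b_j$ lies at the left end of interval $I^i_{\ell+1} = [\ell/\alpha_i, (\ell+1)/\alpha_i]$; from that interval put that fraction of $b_j$ together with the complementary fraction of the next good $b_{j+1}$ into $T'$. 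Since $v_i(b_j) \ge v_i(b_{j+1})$, this swap only lowers value, so $T \succsim_i^{\text{SD}} T'$. By construction $T'$ plus one extra unit (taken from the last, possibly-short interval $I^i_{k_i}$) consists of the \emph{most} valued one unit of good from each of the $k_i$ intervals, hence $v_i(T' \text{ with one good added}) \ge \sum_{\ell=1}^{k_i} \alpha_i v_i(I^i_\ell) = \alpha_i v_i(B)$, and this lower bound transfers back to $T$ because $T \succsim_i^{\text{SD}} T'$. This shows $T$ is \wsdpropone{}.

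\textbf{Main obstacle.} The routine part is the duality bookkeeping; the one genuinely delicate point is handling the boundary cases where $m\alpha_i$ or $\ell/\alpha_i$ is an integer, because then the floor/ceiling functions in \eqref{eq:1}--\eqref{eq:2} behave differently and the strict-versus-weak inequalities in the necessity witnesses must be argued with care (exactly the place where the chores proof quietly uses $\lceil x \rceil - 1 < x$). I would isolate these integral cases and verify each directly, and I would also double-check that the ``add one unit from the last interval'' step in sufficiency is legitimate even when the last interval is shorter than $1/\alpha_i$ — here one uses that we are allowed to add a full unit of an arbitrary good (Definition~\ref{def:wprop1f} only requires $\lVert b\rVert_1 = 1$ with $x_{i,j} + \beta_j \le 1$), so the argument $v_i(\text{last unit added}) \ge \alpha_i v_i(I^i_{k_i})$ still holds since $|I^i_{k_i}| \le 1/\alpha_i$ makes $\alpha_i v_i(I^i_{k_i})$ no larger than the value of one unit of the best remaining good.
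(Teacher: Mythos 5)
Your overall route is the same as the paper's (dualize the chores argument: same unit-valuation and threshold-valuation witnesses for necessity of \eqref{eq:1} and \eqref{eq:2}, then reduce sufficiency to the tight bundle and build a fractional bundle $T'$ that $T$ SD-dominates, interval by interval). However, there is a concrete error in your sufficiency step: you add the extra unit from the wrong end. By your own construction, the goods of the tight bundle (ranks $r_\ell=\lfloor \ell/\alpha_i\rfloor+1$) contribute the leftmost one unit of the intervals $I^i_2,\ldots,I^i_{k_i}$; the interval left uncovered is the \emph{first} one, $I^i_1$, not the last. The paper therefore adds $b_1$, the agent's most favorite good, so that $T'\cup\{b_1\}$ contains the most-valued one unit of every interval and $v_i(T'\cup\{b_1\})\ge\sum_{\ell=1}^{k_i}\alpha_i v_i(I^i_\ell)=\alpha_i v_i(B)$. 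With your choice --- ``one extra unit taken from the last, possibly-short interval $I^i_{k_i}$'' --- the term $\alpha_i v_i(I^i_1)$ is simply missing from the lower bound, and the claimed inequality fails: take a $\pi_i$-respecting valuation concentrated on $b_1$, say $v_i=(1,0,\ldots,0)$; then $v_i(T')=0$ and any unit from the last interval adds nothing, so you cannot reach $\alpha_i v_i(B)=\alpha_i$, whereas adding $b_1$ immediately gives $1\ge\alpha_i$. Since you repeat this choice in your ``main obstacle'' paragraph (bounding $\alpha_i v_i(I^i_{k_i})$ rather than $\alpha_i v_i(I^i_1)$), this is not a typo but a misreading of the duality: in the chores lemma the removed item $b_1$ accounts for the item sticking out at the cheap end, while in the goods lemma the added item must be $b_1$ to account for the uncovered first (most valuable) interval. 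The fix is local and the rest of your argument then matches the paper; the genuine boundary issue with the short last interval is only that its leftmost unit may be truncated, which is harmless for the lower bound.

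Two smaller points. In the necessity argument for \eqref{eq:2} your displayed chain reads $\ell<(\ell/\alpha_i)\alpha_i\le(\lfloor\ell/\alpha_i\rfloor+1)\alpha_i$, which literally asserts $\ell<\ell$; the strictness belongs in the second comparison, i.e. $v_i(T\cup\{b\})\le\ell=(\ell/\alpha_i)\alpha_i<(\lfloor\ell/\alpha_i\rfloor+1)\alpha_i=\alpha_i v_i(B)$, which always holds since $x<\lfloor x\rfloor+1$ for every real $x$. Similarly, your case split on whether $m\alpha_i$ is integral in the necessity of \eqref{eq:1} is unnecessary: $\lceil x\rceil-1<x$ holds for all real $x$, so the single chain $v_i(T\cup\{b\})\le m_i+1\le\lceil m\alpha_i\rceil-1<m\alpha_i=\alpha_i v_i(B)$ works uniformly, exactly as in the paper.
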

\begin{proof}
Without loss of generality, for simplifying the notation, let the goods be renumbered according to the ranking of agent \(a_i\). Thus, \(b_j=\pi_i(j)\) for $1\leq j\leq m$. We  assume \(\alpha_i < 1\) as otherwise, allocating all goods to \(a_i\) is a \wsdpropone{} allocation. Further, \(\alpha_i >0\) as otherwise agent \(a_i\) can be removed from the instance. 

First, let us prove the necessity of these conditions. If any of the two conditions are not met, we exhibit a valuation \(v_i\) according to which, the bundle \(T\) is not \wpropone{} for agent \(a_i\). Suppose \(T\) violates condition~\ref{eq:1}. That is \(m_i\le \lceil m\alpha_i \rceil -2\).  We set \(v_i(b_j)=1\) for all \(b_j\in B\). Under this valuation,
\[
    \forall b\in B,\quad v_i(T \cup \{b\}) \le \lceil m\alpha_i\rceil-1 < m\alpha_i = \alpha_i.v_i(B)
\]
Thus \(T\) is not a \wpropone{} bundle. Similarly, suppose \(T\) violates condition~\ref{eq:2}. That is,  \(r_\ell\ge\left\lfloor\frac{\ell}{\alpha_i}\right\rfloor+2\) for some \(1\le \ell \le m_i\). We set \(v_i(b_j) = 1\) for all \(1\le j\le\left\lfloor\frac{\ell}{\alpha_i}\right\rfloor+1\) and \(v_i(b_j) = 0\) for all the remaining goods. Under this valuation,
\begin{align*}
    \forall b\in B,\quad v_i(T\cup\{b\}) &= \ell \\
                                               &= \alpha_i\left(\frac{\ell}{\alpha_i}\right)\\
                                               &< \alpha_i\left(\left\lfloor\frac{\ell}{\alpha_i}\right\rfloor+1\right) =\alpha_i. v_i(B)
\end{align*} Therefore, the bundle \(T\) is not \wpropone{}.

We now show the sufficiency of these conditions. Suppose conditions~\ref{eq:1} and \ref{eq:2}. hold true for the bundle \(T\). It suffices to consider the case when both the conditions~\ref{eq:1} and \ref{eq:2} are tight. This is because for any other bundle \(Y_i=\{\pi(r'_1),\pi(r'_2)\cdots,\pi(r'_k)\}\) where \(1\le r'_1<r'_2<\cdots r'_k\),  and at least one of the conditions \ref{eq:1} or \ref{eq:2} is not tight, we have \(Y_i \succsim^\text{SD}_i T\) since, for all \(1\le \ell \le k\),  \(r'_\ell \le r_\ell\). 

To show that \(T\) is \wsdpropone{} for agent \(a_i\), we construct a fractional allocation \(T'\) using \(T\) such that \(T\succsim_i^{\text{SD}}T'\) and \(T'\) is a \wsdpropone{} allocation. 

Consider the interval set \(I^i=\{I^i_1,I^i_2,\ldots,I^i_{k_i}\}\)  of agent  \(a_i\). For each good \(b_j\in T\), we know that \(\pi^{-1}_i(b_j) = \left\lfloor \frac{\ell}{\alpha_i} \right\rfloor +1 \)  for some \(\ell\in [m_i]\). For every interval excluding the first interval, i.e., \(I_{\ell+1}=\left[\frac{\ell}{\alpha_i},\frac{\ell+1}{\alpha_i}\right]\) for all \(1< \ell \le k_i-1\), a non-zero fraction of the good \(\pi^{-1}_i(r_{\ell})\), which occupies the interval \(\left[\left\lfloor\frac{\ell}{\alpha_i}\right\rfloor,\left\lfloor\frac{\ell}{\alpha_i}\right\rfloor+1\right]\) lies in the beginning the interval \(I_{\ell+1}\). Suppose \(\left\lfloor\frac{\ell}{\alpha_i}\right\rfloor+1 = \frac{\ell}{\alpha_i} +\delta\)  for some \(0\le\delta<1\). That is, \(1-\delta\) fraction of \(b_j\) lies in \(I^i_{\ell-1}\) and the remaining \(\delta\) portion lies in the interval \(I^i_\ell\). Then, from the interval \(I^i_{\ell}\) include the \(\delta\)  fraction of good \(b_j\) and \(1-\delta\) fraction of the subsequent good \(b_{j+1}\) in \(T'\) (as shown in Figure~\ref{fig:6}). Under any valuation \(v_i \in \mathscr{U}{(\pi_i)} \),  for every good \(b_j\) we have \(\delta.{v_i(b_{j})}+(1-\delta)v_i(b_{j+1})\le v_i(b_j)\). Therefore it is clear that \(T\succsim^{\text{SD}}_{i}T'\). 

From the construction of \(T'\), we know that \(T'\cup \{b_1\}\) contains the highest valued one unit of good from each interval. Therefore, \(v_i(T \cup \{b_1\}) \ge \sum_{j=1}^{k_i} \alpha_i v_i(I^i_j) = \alpha_i.v_i(B)\). Thus, \(T'\) is a \wsdpropone{} bundle. 

\begin{figure}[h]
    \centering
 
\tikzset{
pattern size/.store in=\mcSize, 
pattern size = 5pt,
pattern thickness/.store in=\mcThickness, 
pattern thickness = 0.3pt,
pattern radius/.store in=\mcRadius, 
pattern radius = 1pt}
\makeatletter
\pgfutil@ifundefined{pgf@pattern@name@_0up0rcxv1}{
\pgfdeclarepatternformonly[\mcThickness,\mcSize]{_0up0rcxv1}
{\pgfqpoint{0pt}{0pt}}
{\pgfpoint{\mcSize+\mcThickness}{\mcSize+\mcThickness}}
{\pgfpoint{\mcSize}{\mcSize}}
{
\pgfsetcolor{\tikz@pattern@color}
\pgfsetlinewidth{\mcThickness}
\pgfpathmoveto{\pgfqpoint{0pt}{0pt}}
\pgfpathlineto{\pgfpoint{\mcSize+\mcThickness}{\mcSize+\mcThickness}}
\pgfusepath{stroke}
}}
\makeatother

 
\tikzset{
pattern size/.store in=\mcSize, 
pattern size = 5pt,
pattern thickness/.store in=\mcThickness, 
pattern thickness = 0.3pt,
pattern radius/.store in=\mcRadius, 
pattern radius = 1pt}
\makeatletter
\pgfutil@ifundefined{pgf@pattern@name@_bbr4bjnyd}{
\pgfdeclarepatternformonly[\mcThickness,\mcSize]{_bbr4bjnyd}
{\pgfqpoint{0pt}{0pt}}
{\pgfpoint{\mcSize+\mcThickness}{\mcSize+\mcThickness}}
{\pgfpoint{\mcSize}{\mcSize}}
{
\pgfsetcolor{\tikz@pattern@color}
\pgfsetlinewidth{\mcThickness}
\pgfpathmoveto{\pgfqpoint{0pt}{0pt}}
\pgfpathlineto{\pgfpoint{\mcSize+\mcThickness}{\mcSize+\mcThickness}}
\pgfusepath{stroke}
}}
\makeatother
\tikzset{every picture/.style={line width=0.75pt}} 

\begin{tikzpicture}[x=0.75pt,y=0.75pt,yscale=-1,xscale=1,scale=0.9]

\draw    (223,75) -- (528,75) ;
\draw [shift={(530,75)}, rotate = 180] [color={rgb, 255:red, 0; green, 0; blue, 0 }  ][line width=0.75]    (10.93,-3.29) .. controls (6.95,-1.4) and (3.31,-0.3) .. (0,0) .. controls (3.31,0.3) and (6.95,1.4) .. (10.93,3.29)   ;
\draw    (185,75) -- (520,75) (215,73) -- (215,77)(245,73) -- (245,77)(275,73) -- (275,77)(305,73) -- (305,77)(335,73) -- (335,77)(365,73) -- (365,77)(395,73) -- (395,77)(425,73) -- (425,77)(455,73) -- (455,77)(485,73) -- (485,77)(515,73) -- (515,77) ;
\draw    (185,75) -- (185,7) ;
\draw [shift={(185,5)}, rotate = 90] [color={rgb, 255:red, 0; green, 0; blue, 0 }  ][line width=0.75]    (10.93,-3.29) .. controls (6.95,-1.4) and (3.31,-0.3) .. (0,0) .. controls (3.31,0.3) and (6.95,1.4) .. (10.93,3.29)   ;
\draw  [dash pattern={on 5.63pt off 4.5pt}][line width=1.5]  (185,25) -- (215,25) -- (215,75) -- (185,75) -- cycle ;
\draw   (185,86) .. controls (185,90.67) and (187.33,93) .. (192,93) -- (225,93) .. controls (231.67,93) and (235,95.33) .. (235,100) .. controls (235,95.33) and (238.33,93) .. (245,93)(242,93) -- (278,93) .. controls (282.67,93) and (285,90.67) .. (285,86) ;
\draw   (285,86) .. controls (285,90.67) and (287.33,93) .. (292,93) -- (325,93) .. controls (331.67,93) and (335,95.33) .. (335,100) .. controls (335,95.33) and (338.33,93) .. (345,93)(342,93) -- (378,93) .. controls (382.67,93) and (385,90.67) .. (385,86) ;
\draw  [pattern=_0up0rcxv1,pattern size=6pt,pattern thickness=0.75pt,pattern radius=0pt, pattern color={rgb, 255:red, 0; green, 0; blue, 0}] (275,45) -- (285,45) -- (285,75) -- (275,75) -- cycle ;
\draw  [pattern=_bbr4bjnyd,pattern size=6pt,pattern thickness=0.75pt,pattern radius=0pt, pattern color={rgb, 255:red, 0; green, 0; blue, 0}] (365,50) -- (385,50) -- (385,75) -- (365,75) -- cycle ;
\draw   (456,87) .. controls (456.07,91.67) and (458.44,93.96) .. (463.11,93.88) -- (476.62,93.66) .. controls (483.28,93.55) and (486.65,95.83) .. (486.73,100.5) .. controls (486.65,95.83) and (489.94,93.44) .. (496.61,93.33)(493.61,93.38) -- (510.12,93.11) .. controls (514.78,93.04) and (517.07,90.67) .. (517,86) ;
\draw  [fill={rgb, 255:red, 155; green, 155; blue, 155 }  ,fill opacity=1 ] (305,50) -- (315,50) -- (315,75) -- (305,75) -- cycle ;
\draw  [fill={rgb, 255:red, 155; green, 155; blue, 155 }  ,fill opacity=1 ] (285,45) -- (305,45) -- (305,75) -- (285,75) -- cycle ;
\draw  [fill={rgb, 255:red, 155; green, 155; blue, 155 }  ,fill opacity=1 ] (385,50) -- (395,50) -- (395,75) -- (385,75) -- cycle ;
\draw   (315,50) -- (335,50) -- (335,75) -- (315,75) -- cycle ;
\draw   (415,55) -- (425,55) -- (425,75) -- (415,75) -- cycle ;
\draw  [fill={rgb, 255:red, 155; green, 155; blue, 155 }  ,fill opacity=1 ] (455,65) -- (485,65) -- (485,75) -- (455,75) -- cycle ;
\draw    (280,40) .. controls (288.21,11.43) and (301.91,11.01) .. (309.65,43.49) ;
\draw [shift={(310,45)}, rotate = 257.29] [color={rgb, 255:red, 0; green, 0; blue, 0 }  ][line width=0.75]    (10.93,-3.29) .. controls (6.95,-1.4) and (3.31,-0.3) .. (0,0) .. controls (3.31,0.3) and (6.95,1.4) .. (10.93,3.29)   ;
\draw  [fill={rgb, 255:red, 155; green, 155; blue, 155 }  ,fill opacity=1 ] (395,55) -- (415,55) -- (415,75) -- (395,75) -- cycle ;
\draw    (375,45) .. controls (383.21,16.43) and (396.91,16.01) .. (404.65,48.49) ;
\draw [shift={(405,50)}, rotate = 257.29] [color={rgb, 255:red, 0; green, 0; blue, 0 }  ][line width=0.75]    (10.93,-3.29) .. controls (6.95,-1.4) and (3.31,-0.3) .. (0,0) .. controls (3.31,0.3) and (6.95,1.4) .. (10.93,3.29)   ;

\draw (182,79.4) node [anchor=north west][inner sep=0.75pt]  [font=\tiny]  {$0$};
\draw (212,79.4) node [anchor=north west][inner sep=0.75pt]  [font=\tiny]  {$1$};
\draw (242,79.4) node [anchor=north west][inner sep=0.75pt]  [font=\tiny]  {$2$};
\draw (272,79.4) node [anchor=north west][inner sep=0.75pt]  [font=\tiny]  {$3$};
\draw (302,79.4) node [anchor=north west][inner sep=0.75pt]  [font=\tiny]  {$4$};
\draw (332,79.4) node [anchor=north west][inner sep=0.75pt]  [font=\tiny]  {$5$};
\draw (362,79.4) node [anchor=north west][inner sep=0.75pt]  [font=\tiny]  {$6$};
\draw (392,79.4) node [anchor=north west][inner sep=0.75pt]  [font=\tiny]  {$7$};
\draw (403,76.4) node [anchor=north west][inner sep=0.75pt]  [font=\footnotesize]  {$\cdots $};
\draw (514,78.4) node [anchor=north west][inner sep=0.75pt]  [font=\tiny]  {$m$};
\draw (173,51) node [anchor=north west][inner sep=0.75pt]  [font=\scriptsize,rotate=-270] [align=left] {utility};
\draw (479,77) node [anchor=north west][inner sep=0.75pt]  [font=\scriptsize] [align=left] {goods};
\draw (231,101.4) node [anchor=north west][inner sep=0.75pt]  [font=\scriptsize]  {$I_{1}^{i}$};
\draw (331,100.4) node [anchor=north west][inner sep=0.75pt]  [font=\scriptsize]  {$I_{1}^{i}$};
\draw (483,100.4) node [anchor=north west][inner sep=0.75pt]  [font=\scriptsize]  {$I_{k_{i}}^{i}$};

\end{tikzpicture}
    \caption{Construction of the fractional allocation \(T'\) (shaded in grey) that is SD dominated by \(T\). }
    \label{fig:6}
\end{figure}
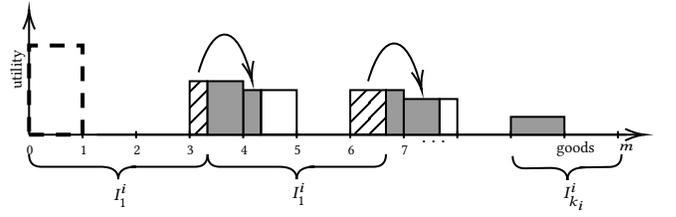
\end{proof}

With the help of this characterization, we now construct an \textit{allocation graph} \(G_g\) of goods. Given a fair allocation instance \(\mathcal{I}=\langle A, B, \Pi, \mathcal{F} \rangle\), we construct a bipartite graph - the allocation graph \(G_g=(S\cup B,E)\) as follows:

\renewcommand{\labelitemi}{-}
\begin{itemize}
    \item The set of goods \(B\) forms one part of \(G_g\) with goods interpreted as vertices. 
    \item For every agent \(a_i\in A\), and every \(\ell = 1,2,\cdots, m_i=\lceil m\alpha_i\rceil -1\), create vertex \(s_{i,\ell}\) in \(S\). We call these the \(m_i\) many \emph{slots} of agent \(a_i\).
    \item From each slot \(s_{i,\ell}\), draw edges to every good \(b\) for which \( \pi_i^{-1}(b) \le \left\lfloor\frac{\ell}{\alpha_i}\right\rfloor +1 \). That is, \((s_{i,\ell},b) \in E \iff \pi_i^{-1}(b) \le \left\lfloor\frac{\ell}{\alpha_i}\right\rfloor +1 \) 
\end{itemize}

The allocation graph \(G_g\) exhibits several interesting properties:

Firstly, we have 
\begin{proposition}\label{perfect-matchings-goods-wsdprop1}
An \(S\)-perfect matching in \(G_g\), i.e a matching that saturates all the goods, satisfies Conditions~\ref{eq:1} and \ref{eq:2} and this corresponds to a \wsdpropone{} allocation of goods\footnote{Note that this allocation could be a partial allocation. However, any completion of a \wsdpropone{} partial allocation remains \wsdpropone{}}
\end{proposition}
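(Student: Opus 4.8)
\textbf{Proof proposal for Proposition~\ref{perfect-matchings-goods-wsdprop1}.}

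The plan is to mirror the chores argument (Proposition~\ref{perfect-matchings-wsdprop1}) while accounting for the fact that in the goods allocation graph $G_g$ the slots, not the goods, are the smaller side: each agent $a_i$ has exactly $m_i=\lceil m\alpha_i\rceil-1$ slots and $\sum_i m_i$ may be smaller than $m$, so the natural saturating object is an $S$-perfect matching. First I would fix an $S$-perfect matching $M$ in $G_g$ and, for each agent $a_i$, let $T_i\subseteq B$ be the set of goods matched to the slots $s_{i,1},\dots,s_{i,m_i}$. Since $M$ is a matching, the $T_i$ are pairwise disjoint; completing the allocation by handing out the unmatched goods arbitrarily (as noted in the footnote) does not harm \wsdpropone{} because adding goods to a bundle only improves it in the SD sense for goods. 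So it suffices to show each $T_i$ satisfies Conditions~\ref{eq:1} and~\ref{eq:2}, after which Lemma~\ref{lemma:WPROP1-goods} gives that $T_i$ is \wsdpropone{} for $a_i$, hence the whole allocation is \wsdpropone{}.

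Condition~\ref{eq:1} is immediate: $|T_i|=m_i=\lceil m\alpha_i\rceil-1\ge\lceil m\alpha_i\rceil-1$, with equality, so there is nothing to check (the slot count was chosen precisely to make this tight). For Condition~\ref{eq:2}, write the ranks of the goods of $T_i$ as $r_1<r_2<\cdots<r_{m_i}$. The key observation is the monotonicity of slot neighbourhoods analogous to Proposition~\ref{prop:heavy_chores}: by the edge rule $(s_{i,\ell},b)\in E\iff\pi_i^{-1}(b)\le\lfloor\ell/\alpha_i\rfloor+1$, a higher-indexed slot has a neighbourhood containing that of a lower-indexed slot, so $N(s_{i,1})\cap B\subseteq N(s_{i,2})\cap B\subseteq\cdots$. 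Because of this nesting, we may assume w.l.o.g.\ (by a standard exchange argument on $M$ that permutes which good goes to which slot of $a_i$ without changing the bundle $T_i$) that $s_{i,\ell}$ is matched to the good of rank $r_\ell$, i.e.\ the $\ell$-th smallest rank goes to the $\ell$-th slot. Then the edge $(s_{i,\ell},\pi_i(r_\ell))\in E$ forces $r_\ell=\pi_i^{-1}(\pi_i(r_\ell))\le\lfloor\ell/\alpha_i\rfloor+1$, which is exactly Condition~\ref{eq:2}. Hence both conditions hold and Lemma~\ref{lemma:WPROP1-goods} applies.

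The converse direction (a \wsdpropone{} allocation yields an $S$-perfect matching) follows the same template: given a \wsdpropone{} allocation, each agent's bundle satisfies Conditions~\ref{eq:1} and~\ref{eq:2} by Lemma~\ref{lemma:WPROP1-goods}; Condition~\ref{eq:1} guarantees the bundle has at least $m_i$ goods, and we take the $m_i$ most-preferred of them and match the good of the $\ell$-th smallest rank to slot $s_{i,\ell}$; Condition~\ref{eq:2} (applied to the truncated bundle, whose first $m_i$ ranks are bounded by the original ones) certifies that the edge $(s_{i,\ell},\cdot)$ exists in $G_g$. Doing this for every agent gives a matching saturating all of $S$. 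The main obstacle, as in the chores case, is making the exchange argument clean: one must argue that within a single agent the assignment of specific goods to specific slots can always be rearranged into the ``sorted'' form without leaving the graph, and this is precisely where the nesting $N(s_{i,\ell})\subseteq N(s_{i,\ell+1})$ does the work, so I would state that nesting as an explicit proposition (the goods-analogue of Proposition~\ref{prop:heavy_chores}) before invoking it.
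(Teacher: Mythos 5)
Your proof is correct and follows the paper's intended justification: the paper states this proposition without an explicit proof, treating it as immediate from the construction of \(G_g\) (the slot count encodes Condition~\ref{eq:1}, the threshold edge rule encodes Condition~\ref{eq:2}) together with Lemma~\ref{lemma:WPROP1-goods} and the footnote on completing partial allocations. Your within-agent exchange step, which uses the nesting \(N(s_{i,\ell})\subseteq N(s_{i,\ell+1})\) (Proposition~\ref{prop:superset_goods}) to sort the matched goods onto slots without leaving the graph, correctly supplies the one detail the paper leaves implicit, and you rightly read the statement's ``saturates all the goods'' as saturating the slot side \(S\).
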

Moreover, in the interval set \(I^i=\{I^i_1, I^i_2,\ldots,I^i_{k_i}\}\) of an agent \(a_i\), if non-zero fraction of a good \(b\) lies in the interval \(I^i_\ell = \left[ \frac{\ell-1}{\alpha_i},\frac{\ell}{\alpha_i} \right]\), then \(\pi_i^-1(b)-1\le \left\lfloor\frac{\ell}{\alpha_i}\right\rfloor\). Thus slot \(s_{i,\ell}\) has an edge to good \(b\) in \(G_g\). This is formally stated in the following proposition.

\begin{proposition}\label{prop:slot-interval-good}
    In the allocation graph \(G_g\) of a good allocation instance \(\mathcal{I}\), each slot \(s_{i,\ell}\) of each agent \(a_i\in A\), has edges to every good with a non-zero portion in the interval \(I^i_\ell\) in the interval set of \(a_i\).
\end{proposition}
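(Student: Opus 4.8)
The plan is to verify the arithmetic inequality that makes the edge $(s_{i,\ell},b)$ present in $G_g$ precisely when good $b$ has a non-zero portion in the interval $I^i_\ell$. Recall the interval set of $a_i$ places the $j$-th most favored good in the segment $[j-1,j]$, and $I^i_\ell = \left[\frac{\ell-1}{\alpha_i},\frac{\ell}{\alpha_i}\right]$. So the claim reduces to showing: if $[j-1,j]\cap\left[\frac{\ell-1}{\alpha_i},\frac{\ell}{\alpha_i}\right]$ has positive length, where $j=\pi_i^{-1}(b)$, then $\pi_i^{-1}(b)\le\left\lfloor\frac{\ell}{\alpha_i}\right\rfloor+1$, which is exactly the edge condition defining $G_g$.

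First I would translate "non-zero portion in $I^i_\ell$" into an inequality on $j$. A positive-length overlap of $[j-1,j]$ with $\left[\frac{\ell-1}{\alpha_i},\frac{\ell}{\alpha_i}\right]$ forces $j-1 < \frac{\ell}{\alpha_i}$ (the good starts before the interval ends) and $j > \frac{\ell-1}{\alpha_i}$ (the good ends after the interval begins); only the first of these is needed here. From $j-1 < \frac{\ell}{\alpha_i}$ and the fact that $j-1$ is an integer, we get $j-1 \le \left\lfloor\frac{\ell}{\alpha_i}\right\rfloor$ — and if $\frac{\ell}{\alpha_i}$ happens to be an integer, the strict inequality still gives $j-1 \le \frac{\ell}{\alpha_i}-1 < \left\lfloor\frac{\ell}{\alpha_i}\right\rfloor$, so in all cases $j - 1 \le \left\lfloor\frac{\ell}{\alpha_i}\right\rfloor$, i.e. $\pi_i^{-1}(b) \le \left\lfloor\frac{\ell}{\alpha_i}\right\rfloor + 1$. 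By the construction of $G_g$, this is exactly the condition for the edge $(s_{i,\ell},b)$ to be in $E$, so the slot $s_{i,\ell}$ is adjacent to every good with a non-zero portion in $I^i_\ell$. (The text's passing remark "$\pi_i^{-1}(b)-1\le\left\lfloor\frac{\ell}{\alpha_i}\right\rfloor$" is precisely this bound.)

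This is entirely routine — there is no real obstacle, only a careful treatment of the floor and of the boundary case where an interval endpoint lands exactly on an integer. One minor point worth stating explicitly in the write-up is that $s_{i,\ell}$ exists for the relevant $\ell$: since $a_i$ has $m_i = \lceil m\alpha_i\rceil - 1$ slots and the interval set has $k_i = \lceil m\alpha_i\rceil$ intervals, every interval $I^i_\ell$ with $\ell \le m_i$ corresponds to a genuine slot, and the (at most one) leftover interval indexed $k_i$ is the possibly-short tail interval which, as noted after the interval-set definition, may contribute nothing and needs no slot. Apart from that housekeeping remark, the proof is the one-line floor estimate above, mirroring Proposition~\ref{prop:slot-interval} for the chores case.
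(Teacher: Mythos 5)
Your proposal is correct and takes essentially the same route as the paper: the paper justifies this proposition with precisely the same floor estimate, namely that a non-zero overlap of the rank-\(j\) good's segment \([j-1,j]\) with \(I^i_\ell\) forces \(j-1<\frac{\ell}{\alpha_i}\), hence by integrality \(\pi_i^{-1}(b)-1\le\left\lfloor\frac{\ell}{\alpha_i}\right\rfloor\), which is the edge condition defining \(G_g\). Your extra remark that slots exist only for \(\ell\le m_i=\lceil m\alpha_i\rceil-1\) (so the possibly-short last interval needs no slot) is correct housekeeping consistent with the paper's construction.
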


We also observe the following property regarding edge relationships in \(G_g\):
\begin{proposition}\label{prop:superset_goods}
     Let \(S_i = \{s_{i,\ell} \mid \ell \in [m_i]\}\) represent the set of slots in \(S\) that belong to an agent \(a_i\in A\). For any two slots \(s_{i,p}\) and \(s_{i,q}\) in \(S_i\), such that \(p<q\), the neighbourhood of \(s_{i,q}\) in \(B\) contains the neighbourhood of \(s_{i,p}\) in \(B\). That is, \(N(s_{i,p}) \subseteq N(s_{i,q})\). Therefore, we have:
    \[
         N(s_{i,1})\cup N(s_{i,2})\cup\cdots\cup N(s_{i,k})\subseteq N(s_i,k)
    \]
\end{proposition}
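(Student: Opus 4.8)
# Proof Proposal for Proposition~\ref{prop:superset_goods}

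\textbf{Overview of the approach.} The claim is a monotonicity statement about the neighbourhoods of an agent's slots in the goods allocation graph $G_g$: for two slots $s_{i,p}$ and $s_{i,q}$ of the same agent $a_i$ with $p < q$, we have $N(s_{i,p}) \subseteq N(s_{i,q})$, and consequently the union $N(s_{i,1}) \cup \cdots \cup N(s_{i,k})$ equals $N(s_{i,k})$. The plan is to unwind the edge-definition of $G_g$ directly: by construction, $(s_{i,\ell}, b) \in E$ if and only if $\pi_i^{-1}(b) \le \lfloor \ell/\alpha_i \rfloor + 1$. So the neighbourhood of slot $s_{i,\ell}$ is precisely the set of the $\lfloor \ell/\alpha_i \rfloor + 1$ top-ranked goods for agent $a_i$ (intersected with $B$). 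Since this is a nested family of prefixes of the ranking as $\ell$ grows, containment follows immediately.

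\textbf{Key steps.} First I would fix an agent $a_i$ and recall that, from the construction of $G_g$, a good $b$ is adjacent to slot $s_{i,\ell}$ exactly when $\pi_i^{-1}(b) \le \lfloor \ell/\alpha_i \rfloor + 1$. Second, I would observe that the map $\ell \mapsto \lfloor \ell/\alpha_i \rfloor + 1$ is non-decreasing in $\ell$: since $\alpha_i > 0$, if $p < q$ then $p/\alpha_i \le q/\alpha_i$, hence $\lfloor p/\alpha_i \rfloor \le \lfloor q/\alpha_i \rfloor$, so $\lfloor p/\alpha_i \rfloor + 1 \le \lfloor q/\alpha_i \rfloor + 1$. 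Third, I would conclude the containment: if $b \in N(s_{i,p})$, then $\pi_i^{-1}(b) \le \lfloor p/\alpha_i \rfloor + 1 \le \lfloor q/\alpha_i \rfloor + 1$, so $b \in N(s_{i,q})$, i.e. $N(s_{i,p}) \subseteq N(s_{i,q})$. Finally, for the displayed union statement, applying this with $q = k$ gives $N(s_{i,\ell}) \subseteq N(s_{i,k})$ for every $\ell \le k$, and since $N(s_{i,k})$ is itself one of the terms, the union over $\ell = 1, \ldots, k$ is exactly $N(s_{i,k})$.

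\textbf{Main obstacle.} There is essentially no obstacle here — the statement is a routine consequence of the edge definition and the monotonicity of the floor function. The only thing to be careful about is the boundary condition $\alpha_i > 0$ (which is harmless, since an agent with zero entitlement can be removed from the instance, exactly as noted in the proof of Lemma~\ref{lemma:WPROP1-goods}), and making sure the neighbourhoods are understood as subsets of the real goods $B$ so that the prefix $\{\pi_i(1), \ldots, \pi_i(t)\}$ is well-defined when $t$ exceeds $m$ (in which case the neighbourhood is simply all of $B$, and containment is trivial). I would also note in passing that this proposition is the "goods" analogue of Proposition~\ref{prop:heavy_chores}, with the direction of the inequality reversed, so the same proof template transfers.
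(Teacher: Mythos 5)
Your proof is correct and follows exactly the reasoning the paper intends: the proposition is stated as an immediate observation from the edge construction of \(G_g\), namely that \(N(s_{i,\ell})\) is the prefix of the top \(\lfloor \ell/\alpha_i\rfloor + 1\) goods in \(\pi_i\), which is monotone in \(\ell\). Nothing is missing; your boundary remarks about \(\alpha_i>0\) and prefixes exceeding \(m\) are harmless extras.
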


We also note the following observation about agents with different entitlements:
\begin{proposition}\label{prop:slot_compare}
    For any two agents \(a_i\) and \(a_k\) s.t \(\alpha_i \ge \alpha_k\), agent \(a_i\) has at least as many slots as agent \(a_k\) (because of Condition~\ref{eq:1}) and the size of the neighbourhood of \(\ell\)\textsuperscript{th} slot of agent \(a_i\) is smaller than or equal to the size of the neighbourhood of \(\ell\)\textsuperscript{th} slot of agent \(a_k\) . That is, \[\forall 1\le\ell\le\lceil m\alpha_k\rceil-1,\ |N(s_{i,\ell})|\le|N(s_{k,\ell})|\]
\end{proposition}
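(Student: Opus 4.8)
The plan is to read off both claims directly from the definition of the slots and edges of the goods allocation graph $G_g$, using only the monotonicity of $\lceil\cdot\rceil$ and $\lfloor\cdot\rfloor$. Recall that agent $a_j$ has exactly $m_j=\lceil m\alpha_j\rceil-1$ slots, and that $(s_{j,\ell},b)\in E$ iff $\pi_j^{-1}(b)\le\lfloor \ell/\alpha_j\rfloor+1$; hence $|N(s_{j,\ell})|=\min\{m,\ \lfloor \ell/\alpha_j\rfloor+1\}$, since the goods adjacent to $s_{j,\ell}$ are precisely the top $\lfloor \ell/\alpha_j\rfloor+1$ ranked goods in $\pi_j$ (capped at all $m$ goods).

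First I would prove the statement about the number of slots. Since $\alpha_i\ge\alpha_k$ we have $m\alpha_i\ge m\alpha_k$, so by monotonicity of the ceiling function $\lceil m\alpha_i\rceil\ge\lceil m\alpha_k\rceil$, and subtracting $1$ gives $m_i=\lceil m\alpha_i\rceil-1\ge\lceil m\alpha_k\rceil-1=m_k$. In particular, for every $\ell$ with $1\le\ell\le m_k$, the slot $s_{i,\ell}$ is well-defined, so the inequality $|N(s_{i,\ell})|\le|N(s_{k,\ell})|$ is meaningful for the full stated range $1\le\ell\le\lceil m\alpha_k\rceil-1$.

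Next I would prove the neighbourhood inequality. Fix such an $\ell$. Recall (as noted before Lemma~\ref{lemma:WPROP1-goods}) that we may assume $0<\alpha_k\le\alpha_i$, since agents with zero entitlement are removed and $\alpha_j=1$ makes $a_j$'s bundle trivially unconstrained. Then $\ell/\alpha_i\le \ell/\alpha_k$, and monotonicity of the floor function gives $\lfloor \ell/\alpha_i\rfloor\le\lfloor \ell/\alpha_k\rfloor$, hence $\lfloor \ell/\alpha_i\rfloor+1\le\lfloor \ell/\alpha_k\rfloor+1$. Taking the minimum with $m$ on both sides preserves the inequality, so
\[
    |N(s_{i,\ell})|=\min\{m,\ \lfloor \ell/\alpha_i\rfloor+1\}\ \le\ \min\{m,\ \lfloor \ell/\alpha_k\rfloor+1\}=|N(s_{k,\ell})|,
\]
which is exactly the claim.

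This argument is entirely mechanical; the only point that needs a word of care is making sure the index $\ell$ lies in the valid range for \emph{both} agents (handled by the first part, $m_k\le m_i$) and recording the harmless normalization $\alpha_i,\alpha_k\in(0,1]$ so that dividing by $\alpha_k$ is legitimate. I do not expect any genuine obstacle here.
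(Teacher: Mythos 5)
Your proposal is correct and follows essentially the same reasoning the paper relies on: the paper states this proposition without an explicit proof, treating it as immediate from the construction of \(G_g\), and your argument simply spells that out via \(|N(s_{j,\ell})|=\lfloor \ell/\alpha_j\rfloor+1\) together with monotonicity of the floor and ceiling functions (the \(\min\{m,\cdot\}\) cap is harmless, since for \(\ell\le m_j\) the quantity \(\lfloor \ell/\alpha_j\rfloor+1\) never exceeds \(m\) anyway). No gap here.
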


We are now ready to show the existence of \wsdpropone{} allocations by showing that an \(S\)-perfect matching always exists in \(G_g\), using the Hall's marriage theorem \cite{Hall1987}.

\begin{theorem}\label{thm:WPROP1-goods-exists}
    For any fair allocation instance of goods \(\mathcal{I}=\langle A,B,\Pi,\mathcal{F} \rangle\), there always exists a \wsdpropone{} allocation.
\end{theorem}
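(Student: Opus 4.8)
The plan is to mirror the proof of Theorem~\ref{thm:WPROP1-chores-exists}, showing that the goods allocation graph $G_g$ always admits an $S$-perfect matching, i.e.\ a matching saturating every slot in $S$; by Proposition~\ref{perfect-matchings-goods-wsdprop1} such a matching yields a \wsdpropone{} allocation of goods (possibly partial, which is fine). So I would invoke Hall's Theorem (Theorem~\ref{thm:halls}) with the roles reversed relative to the chores case: here the part we must saturate is $S$ (the slots), not $B$, so I need $|N(S')| \ge |S'|$ for every $S' \subseteq S$.

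First I would reduce to a canonical worst-case subset $S'$. Using Proposition~\ref{prop:superset_goods}, the neighbourhood of a high-index slot of an agent contains that of a low-index slot, so for a fixed number of slots taken from agent $a_i$ it is the \emph{lowest-index} slots $s_{i,1},\dots,s_{i,t_i}$ that minimize $|N(S')|$ (the union of their neighbourhoods equals $N(s_{i,t_i})$). Hence it suffices to check subsets of the form $S' = \bigcup_{i} \{s_{i,1},\dots,s_{i,t_i}\}$ with $0 \le t_i \le m_i$. For such an $S'$, $|S'| = \sum_i t_i$ and $N(S') = \bigcup_i N(s_{i,t_i})$, where $N(s_{i,t_i})$ consists of the $\min\{\lfloor t_i/\alpha_i\rfloor + 1,\, m\}$ highest-ranked goods of $a_i$ by Condition~\ref{eq:2}. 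The key point is that every $N(s_{i,t_i})$ is a prefix (in $a_i$'s ranking) of $B$, i.e.\ a ``top-$k$'' set; a union of top-$k$ sets of size $k_1,\dots,k_n$ has size at least $\max_i k_i$, but that alone is too weak. Instead I would argue: pick the agent $i^\ast$ achieving $\max_i t_i$. Since $t_{i^\ast}+1 \le m_{i^\ast}$... actually the cleaner route is to bound $|N(S')| \ge |N(s_{i^\ast, t_{i^\ast}})| = \min\{\lfloor t_{i^\ast}/\alpha_{i^\ast}\rfloor + 1, m\}$ and separately use $\sum_i t_i \le \sum_i m\alpha_i = m$ together with $t_i \le t_{i^\ast}$ to get $\sum_i t_i \le n\, t_{i^\ast}$; combined with $t_{i^\ast}/\alpha_{i^\ast} \ge t_{i^\ast}$ this is still not obviously enough, so the real work is choosing the right counting.

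The cleanest argument I expect to use: let $S'$ be canonical as above and let $b$ be the good of highest rank (largest $\pi^{-1}$ value) that lies in $N(S')$; say $b$ has rank $\rho$ in some agent $a_j$'s ordering with $s_{j,t_j} \in S'$, so $\rho = \lfloor t_j/\alpha_j \rfloor + 1$ (assuming we are not in the $m$-truncated regime, which only helps). Then $N(S') \supseteq N(s_{j,t_j})$ gives $|N(S')| \ge \rho = \lfloor t_j/\alpha_j\rfloor + 1 > t_j/\alpha_j - 1 + 1 = t_j/\alpha_j \ge t_j$. To handle all agents at once, I would instead sum: for each $i$, since $s_{i, t_i+1}$ would have neighbourhood of size $\lfloor (t_i+1)/\alpha_i\rfloor + 1$ and all of $N(S')$... the symmetric analogue of the chores computation is to note $t_i \le \alpha_i \cdot(\text{size of } N(s_{i,t_i})) \le \alpha_i |N(S')|$ whenever $N(s_{i,t_i}) \subseteq N(S')$ — because $|N(s_{i,t_i})| = \lfloor t_i/\alpha_i\rfloor + 1 \ge t_i/\alpha_i$, so $t_i \le \alpha_i|N(s_{i,t_i})| \le \alpha_i |N(S')|$. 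Summing over all $i$: $|S'| = \sum_i t_i \le \big(\sum_i \alpha_i\big)|N(S')| = |N(S')|$. This is exactly the mirror of the chores proof and is clean.

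The main obstacle is handling the truncation in Condition~\ref{eq:2} (the ``$+1$'' and the cap by $m$) and the edge cases $\alpha_i \in \{0,1\}$: when $m\alpha_i$ is such that $\lfloor t_i/\alpha_i\rfloor + 1$ exceeds $m$, the neighbourhood is all of $B$, and one must check $t_i \le m = |B| \ge |N(S')|$ still goes through (it does, since $t_i \le m_i \le m$); agents with $\alpha_i = 1$ get all goods trivially and agents with $\alpha_i = 0$ have no slots and can be dropped. I would dispatch these as a short preamble exactly as in the chores proof, then run the summation above, conclude Hall's condition holds, hence an $S$-perfect matching exists in $G_g$, hence by Proposition~\ref{perfect-matchings-goods-wsdprop1} a \wsdpropone{} allocation of goods exists, which can be completed arbitrarily to a full allocation while preserving \wsdpropone{}.
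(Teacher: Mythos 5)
Your proposal is correct, and its skeleton — Hall's theorem applied to the slot side of \(G_g\), reduction to prefix slot-sets via Proposition~\ref{prop:superset_goods}, then Proposition~\ref{perfect-matchings-goods-wsdprop1} plus arbitrary completion — is the same as the paper's. Where you genuinely diverge is the counting that verifies Hall's condition. The paper lower-bounds \(|N(T)|\) by the neighbourhood of a \emph{single} slot, \(\max_p\lfloor i_p/\alpha'_p\rfloor+1\), compares the entitlements of the agents appearing in \(T\) with those of the \(k\) most entitled agents (Proposition~\ref{prop:slot_compare}), and finishes with a pigeonhole step: since \(\sum_p\alpha_p\le 1\), some agent \(q\) has \(\alpha_q\le i_q/\sum_p i_p\), whence \(\lfloor i_q/\alpha_q\rfloor+1\ge\sum_p i_p+1\ge|T|\). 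You instead sum the per-agent inequality \(t_i\le\alpha_i|N(s_{i,t_i})|\le\alpha_i|N(S')|\) over all agents and use \(\sum_i\alpha_i\le 1\), exactly mirroring the chores computation in Theorem~\ref{thm:WPROP1-chores-exists}; this avoids Proposition~\ref{prop:slot_compare} and the pigeonhole step entirely, and is arguably cleaner and more symmetric with the chores case. Both arguments are valid. Two small remarks: the truncation by \(m\) you worry about never actually arises, because \(t_i\le m_i=\lceil m\alpha_i\rceil-1< m\alpha_i\) forces \(\lfloor t_i/\alpha_i\rfloor+1\le m\); and if you do keep that case, the bound to feed into the summation should be \(t_i\le m_i\le m\alpha_i=\alpha_i|N(S')|\) (your remark ``\(t_i\le m\)'' alone does not suffice for the sum), after which the argument closes as you state.
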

\begin{proof}
    Consider a subset \(T\) of slots in \(S\), and we aim to show \(|T| \leq |N(T)|\). Let \(a'_1, a'_2, \ldots, a'_k\) represent the agents with slots in set \(S\), and let \(\alpha'_1, \alpha'_2, \ldots, \alpha'_k\) denote their respective entitlements. Additionally, let \(\alpha_1, \alpha_2, \ldots, \alpha_k\) represent the entitlements of the top \(k\) most entitled agents in \(A\). Without loss of generality, we assume \(\alpha'_1 \geq \alpha'_2 \geq \ldots \geq \alpha'_k\) and \(\alpha_1 \geq \alpha_2 \geq \ldots \geq \alpha_k\).
    
    For each agent \(a'_p\) whose slot is present in \(T\), let \(s_{i_p}\) denote the highest index slot of agent \(a'_p\) in set \(T\). From Proposition~\ref{prop:superset_goods}, we can safely assume that all the \(i_p\) slots from \(s_1\) to \(s_{i_p}\) are included in set \(T\). Therefore the size of \(T\), \(|T| \le \sum_{p=1}^k i_p\).

    \begin{align*}
        |N(T)| &\ge \max(|N(s_{i_1})|,|N(s_{i_2})|,\ldots, |N(s_{i_k})|)\\
        &=\max\left(\left \lfloor \frac{i_1}{\alpha'_1} \right \rfloor, \left \lfloor \frac{i_2}{\alpha'_2} \right \rfloor, \ldots, \left \lfloor \frac{i_k}{\alpha'_k} \right \rfloor\right)+1&(\because \text{Construction of\ } G_g) \\
        &\ge  \max\left(\left \lfloor \frac{i_1}{\alpha_1} \right \rfloor, \left \lfloor \frac{i_2}{\alpha_2} \right \rfloor, \ldots, \left \lfloor \frac{i_k}{\alpha_k} \right \rfloor\right)+1&\text{($\because$ Proposition~\ref{prop:slot_compare})}\\
    \end{align*}
    
    As \(\sum_{p=1}^k \alpha_p \le 1\), there is at least one \(\alpha_q \in \{\alpha_1,\alpha_2,\cdots,\alpha_k\}\) s.t \(\alpha_q \le \frac{i_q}{i_1 + i_2 + \cdots + i_k}\). Therefore,
    
    \begin{align*}
     \max\left(\left \lfloor \frac{i_1}{\alpha_1} \right \rfloor, \left \lfloor \frac{i_2}{\alpha_2} \right \rfloor, \ldots, \left \lfloor \frac{i_k}{\alpha_k} \right \rfloor\right)+1 &\ge \left \lfloor \frac{i_q}{\alpha_q}\right \rfloor + 1\\
     &\ge \sum_{p=1}^k i_p + 1 &\text{$\because$ \(\alpha_q \le \frac{i_q}{\sum_{p=1}^k i_p}\) }\\
     &\ge |T|
    \end{align*}
    
    Therefore, because of Theorem~\ref{thm:halls}, the allocation graph \(G_g\) always has an \(S\)-perfect matching.
\end{proof}

Thus, using the famous Hopcroft-Karp algorithm \cite{Hopcroft-Karp} to find perfect matchings, we can compute a \wsdpropone{} allocation in time \(\mathcal{O}(m+n)^{2.5}\)

\subsection{Optimizing Over All Allocations}\label{subsec:opt-goods}
Similar to the case of chores, we first extend the allocation graph \(G_g\) to \(G^+_g\) so that perfect matchings in \(G^+_g\) corresponds to \wsdpropone{} allocations and vice versa. 

\paragraph{Extending the goods allocation graph:} Consider the allocation graph \(G_g=(S\cup B,E)\) of an instance of goods allocation. For each agent \(a_i\), there are \(m_i=\lceil m\alpha_i \rceil-1\) many slots in \(S\). To construct the extended allocation graph \(G^+_g=(S'\cup B',E')\), create \(q=m-|S|\), that is \(q=m+n-\sum_{i\in[n]}\lceil m\alpha_i \rceil\) many additional \emph{spare} slots \(s'_{i,m_i+1},s'_{i,m_i+2},\cdots,s'_{i,m_i+q}\) for \emph{every} agent \(a_i\in A\). Now the total number of slots is \(|S'|=|S|+nq\). To balance this bipartite graph, create \(t=|S'|-m\) many additional \emph{dummy} goods \(b'_1,b'_2,\cdots,b'_t\) in \(B'\). Draw edges from each spare slot to every good (both real and dummy) to extend \(E\) to \(E'\). 

Any perfect matching in \(G^+_g\) corresponds to an \(S\)-perfect matching in \(G_g\) and hence a \wsdpropone{} allocation. Similarly, given an \wsdpropone{} allocation \(X\), we know that every bundle in \(X\) satisfies conditions~\ref{eq:1} and \ref{eq:2} and therefore, for each agent \(a_i\), the \(m_i\) many slots can be matched according to \(X\). Such a matching can easily be extended to a perfect matching as every spare slot has edges to all the remaining goods.

We now optimize over all allocations with the help of the matching polytope given by Equation~\ref{eqn:LP}. Suppose a central agency, such as a government, wants to is proportionally allocate goods to agents living in different locations. For a given agent-good pair \(a_i,b_j\), there is an associated transportation cost denoted by \(u_i(b_j)\). We can find in polynomial time, a \wsdpropone{} allocation that minimizes the total transportation cost by minimizing the following objective function subject to the constraints given by Equations~\ref{eqn:LP}.
\[
    \sum_{i\in[n]}\sum_{b_j\in B} u_i(b_j) 
\]
Similarly, we can optimize for any linear objective function.

\subsection{Best of Both Worlds Fairness}\label{subsec:bobw-goods}
In \cite{bobw2023_1} and \cite{bobw2023_2}, the authors give a polynomial time algorithm to compute Ex-ante \wsdef{} and Ex-post \wsdpropone{} allocations of goods. The technique used mainly relies on the \emph{Probabilistic Serial} rule\cite{BOGOMOLNAIA2001295}, also called as the \emph{Eating Algorithm}. Here we give an alternate algorithm using the matching polytope of the extended allocation graph \(G^+_g\).

Consider the fractional allocation \(X\) where each agent \(a_i\in A\) receives \(\alpha_i\) fraction of every real good. For each pair of agents \(a_i\) and \(a_k\), under any two valuations \(v_i\in \mathscr{U}(\pi_i),v_k\in \mathscr{U}(\pi_k)\), we know that \(\frac{v_i(X_i)}{\alpha_i} = \frac{v_i(X_k)}{\alpha_k} = v_i(B)\) and hence \(X\) is a \wsdef{}  allocation. We first show that this fractional allocation can be realized as a fractional perfect matching in the extended allocation graph \(G^+_g\).

\begin{lemma}\label{lemma:fractionalmatching-goods}
    Given an instance \(\mathcal{I}=\langle A,B,\Pi,\mathcal{F} \rangle\) of goods allocation, there exists a fractional perfect matching in the extended allocation graph \(G^+_g\) that corresponds to a the \wsdef{} allocation where each agent \(a_i\in A\) receives \(\alpha_i\) fraction of every good.
\end{lemma}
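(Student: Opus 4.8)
The plan is to mimic the proof of Lemma~\ref{lemma:fractionalmatching} for chores, adapting it to the structure of the goods allocation graph $G_g^+$. First I would recall the key structural fact, Proposition~\ref{prop:slot-interval-good}: for each agent $a_i$ and each $\ell$, the slot $s_{i,\ell}$ has an edge to every good possessing a non-zero portion of the interval $I^i_\ell = \left[\frac{\ell-1}{\alpha_i},\frac{\ell}{\alpha_i}\right]$ in the interval set of $a_i$. This is what makes the natural ``distribute each good according to how it overlaps the intervals'' assignment edge-feasible in $G_g^+$.

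Next I would construct the fractional matching $M$ explicitly. For each agent $a_i$, each real slot $s_{i,\ell}$ with $\ell \in [m_i]$ (where $m_i = \lceil m\alpha_i\rceil - 1$), and each good $b$, let $\delta_{b,\ell}$ denote the fraction of $b$ lying in the interval $I^i_\ell$, and set the weight of the edge $(s_{i,\ell},b)$ to $x_{i,\ell,b} = \alpha_i \cdot \delta_{b,\ell}$. Then I would verify the two matching-polytope constraints. On the slot side: each real slot $s_{i,\ell}$ receives $\sum_{b\in B} x_{i,\ell,b} = \alpha_i \sum_b \delta_{b,\ell} \le \alpha_i \cdot \frac{1}{\alpha_i} = 1$, since the interval $I^i_\ell$ has length at most $\frac{1}{\alpha_i}$. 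On the good side: each real good $b$ has total weight $\sum_{i}\sum_{\ell=1}^{m_i} x_{i,\ell,b} = \sum_i \alpha_i \sum_\ell \delta_{b,\ell}$; the inner sum over all of agent $a_i$'s intervals covering $b$ is exactly $1$, so $b$ gets total weight $\sum_i \alpha_i = 1$, as required, which also confirms that in the resulting (completed) allocation each agent $a_i$ gets exactly an $\alpha_i$ fraction of every real good, hence the allocation is \wsdef{} by the computation preceding the lemma statement.

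Finally, since this partial fractional matching saturates all real goods but may leave the real slots partly unfilled and leaves all dummy goods unmatched, I would extend it to a fractional perfect matching of the balanced bipartite graph $G_g^+$: every spare slot and every real slot with residual capacity has an edge to every dummy good (and dummy goods have edges to all slots), so the residual capacities on the slot side — which total exactly the number of dummy goods — can be filled by distributing the dummy goods arbitrarily across the remaining slot capacity, e.g.\ via a fractional transportation/flow argument on the complete bipartite subgraph between all slots and dummy goods. The one point needing a little care, which I expect to be the mildest obstacle, is the bookkeeping on the good side when $m\alpha_i$ is not an integer: the last interval $I^i_{k_i}$ is shorter than $\frac{1}{\alpha_i}$ and $k_i = \lceil m\alpha_i\rceil$ may exceed $m_i = k_i - 1$, so one must argue (exactly as in the goods version of Lemma~\ref{lemma:WPROP1-goods}) that ignoring the contribution of the final interval only \emph{decreases} what each good needs, and hence the coverage $\sum_\ell \delta_{b,\ell} = 1$ over the first $m_i$ intervals still holds for every real good $b$ — i.e.\ no real good lives entirely inside the truncated last interval. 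Apart from this indexing check the argument is a direct transcription of the chores proof with inequalities reversed where appropriate.
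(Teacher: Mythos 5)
Your construction ($x_{i,\ell,b}=\alpha_i\delta_{b,\ell}$ on the real slots, then pad with dummies) is the same as the paper's, but the step you flag as the ``mildest obstacle'' is exactly where the argument breaks, and your proposed resolution of it is false. The interval set of agent $a_i$ has $k_i=\lceil m\alpha_i\rceil$ intervals while there are only $m_i=k_i-1$ real slots, so the last interval $I^i_{k_i}$ has no real slot at all --- and real goods certainly can lie partly or even entirely inside it (e.g.\ with $m=3$ and $\alpha_i=1/2$ the intervals are $[0,2]$ and $[2,3]$, there is one real slot, and $a_i$'s third-ranked good sits wholly in $[2,3]$; if $m\alpha_i\le 1$ the agent has no real slots whatsoever). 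Hence $\sum_{\ell=1}^{m_i}\delta_{b,\ell}<1$ for such goods, so they are not saturated by your matching and agent $a_i$ receives strictly less than $\alpha_i$ of them, which destroys both the perfect-matching claim and the \wsdef{} property you need. Your fallback --- absorbing the slack by matching dummy goods against the residual capacity of real slots --- is also unavailable: in the construction of $G^+_g$ the only new edges go from \emph{spare} slots to goods, so dummy goods are not adjacent to real slots. (In fact each real slot corresponds to a full-length interval of length $1/\alpha_i$, so it is saturated \emph{exactly}; the residual capacity lives only on spare slots.)

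The paper closes this gap differently: the $\alpha_i$ fraction of every good lying in the last interval $I^i_{k_i}$ is assigned to one of agent $a_i$'s \emph{spare} slots, which by construction has edges to all real goods; this is capacity-feasible because the last interval has length at most $1/\alpha_i$, so the mass routed there is at most $1$, and every agent has at least one spare slot since $q=m+n-\sum_i\lceil m\alpha_i\rceil\ge 1$. After that, all real goods and all real slots are saturated, and only then are the dummy goods distributed arbitrarily over the remaining spare-slot capacity (Proposition~\ref{prop:slot-interval-good} justifying edge-feasibility of the real-slot assignments, as in your write-up). If you replace your last-interval claim with this spare-slot routing, the rest of your argument goes through as in the chores case.
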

\begin{proof}
    We construct first fractional matching that saturates all the \emph{real} (non dummy) goods and all the real (non spare) slots . Such a matching can always be extended to a fractional perfect matching by assigning the dummy goods in any manner, as all the dummy goods have edges to all the spare slots. 
    
    Consider the interval set \(I^i\) of an agent \(a_i\in A\). From Proposition~\ref{prop:slot-interval-good}, we know that slot \(s_{i,\ell}\) has edges to every good in the interval \(I^i_\ell\). With the help of this fact, we construct a fractional matching in \(G^+_g\) as follows:
    
    Let \(x_{i,\ell,b}\) denote the fraction of the edge \((s_{i,\ell},b)\) in the matching. let \(\delta_{b,\ell}\) denote the fraction of a good \(b\in B\) that is present in the interval \(I^i_\ell\). For every edge \((s_{i,\ell},b)\), we set \(x_{i,\ell,b}=\alpha_i\cdot\delta_{b,\ell}\). A slot \(s_{i,\ell}\) receives non zero fractions of the good from the interval \(I^i_\ell\). Each slot receives at most \(1\) unit of good because total good assigned for a slot \(s_{i,\ell}\) is : 
    \begin{align*}
    \sum_{b\in B}x_{i,\ell,b} = \alpha_i\sum_{b\in B}\delta_{b,\ell} \le \alpha_i\frac{1}{\alpha_j} = 1
    \end{align*}

    Since \(m_i = \lceil m\alpha_i\rceil -1\) and number of intervals is \(\lceil m\alpha_i\rceil\), the goods in the last interval could be unallocated. We allocate \(\alpha_i\) fraction of every good in this interval as well to one of the spare slots of agent \(a_i\) thus exhausting all the real goods.
    
    The fraction of a given real good \(b\) received by agent \(i\) across all the slots is:
    \begin{align*}
        \sum_{\ell=1}^{m_i} x_{i,\ell,b} =  \alpha_i\sum_{\ell=1}^{m_i}\delta_{b,\ell} = \alpha_i
    \end{align*}

    This gives us a matching \(M\) which saturates all the real goods and real slots. Since the graph \(G^+_g\) is a balanced bipartite graph, and as every dummy good has an edge to every spare slot, the matching \(M\) can be extended to a fractional perfect matching by dividing the dummy goods across the remaining spaces of all the spare slots in any arbitrary way. 
\end{proof}

Let us denote this fractional perfect matching as \(M^*\). Note that \(M^*\) lies inside the matching polytope of \(G^+_g\). We now decompose this fractional perfect matching into convex combination of integral perfect matchings with the help of Birkhoff’s decomposition as given in Theorem~\ref{thm:bvn}. We design Algorithm~\ref{alg:BoBW-goods}: The Uniform Lottery Algorithm, which gives an ex-ante \wsdef{} and ex-post \wsdpropone{} allocation of goods using only the ordinal valuations.

\begin{algorithm}
\caption{\textsc{Uniform Lottery Algorithm} for goods}\label{alg:BoBW-goods}
\begin{algorithmic}[1]
    \Require A good allocation instance \(\mathcal{I}=\langle{A},B,\Pi,\mathcal{F}\rangle\), where \(|A|=n\) and \(|B|=m\).
    \Ensure A fractional \wsdef{} allocation \(X=\sum_{k=1}^q~\lambda_k~X_k\) where each \(X_k\) represents a deterministic \wsdpropone{} allocation and \(q\in \mathcal{O}(m^c)\).
    \State \(G^+_g \gets\) extended allocation graph of \(\mathcal{I}\)
    \State \(Y \gets \) fractional perfect matching in \(G^+_g\) where each agent \(a_i\in A\) gets \(\alpha_i\) fraction of every real good \Comment{(As in Lemma~\ref{lemma:fractionalmatching-goods})}
    \State Invoke Theorem~\ref{thm:bvn} to compute a decomposition \(Y=\sum_{k=1}^q~\lambda_k Y_k\) where \(k\le(m+n)^2-(m+n)-2\)
    \State Convert \(Y=\sum_{k=1}^q\lambda_k{Y_k}\) to \(X=\sum_{k=1}^q\lambda_k{X_k}\) where all the dummy goods are ignored.\\
    \Return Allocation \(X\) and its decomposition \(\sum_{k=1}^q\lambda_k{X_k}\)
\end{algorithmic}
\end{algorithm}

\begin{theorem}
    The randomized allocation implemented by Algorithm~\ref{alg:BoBW-goods} is ex-ante \wsdef{} and ex-post \wsdpropone{}
\end{theorem}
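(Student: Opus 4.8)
The plan is to transcribe the proof of Theorem~\ref{thm:bobw-chores} to the goods setting, substituting the goods allocation graph $G_g^+$ for $G_c^+$ and invoking the goods-side results established above. First I would record that Step~2 of Algorithm~\ref{alg:BoBW-goods} is well defined: by Lemma~\ref{lemma:fractionalmatching-goods} there is a fractional perfect matching $Y$ of $G_g^+$ in which every agent $a_i$ receives exactly an $\alpha_i$ fraction of every real good. Dropping the edges to dummy goods, the corresponding fractional allocation $X$ gives agent $a_i$ the bundle with $x_{i,j}=\alpha_i$ for every real good $b_j$. As already observed just before Lemma~\ref{lemma:fractionalmatching-goods}, for any two agents $a_i,a_k$ and any $v_i\in\mathscr{U}(\pi_i)$, $v_k\in\mathscr{U}(\pi_k)$ we have $\tfrac{v_i(X_i)}{\alpha_i}=\tfrac{v_i(X_k)}{\alpha_k}=v_i(B)$, so $X$ is \wsdef{}. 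Since $X$ is precisely the fractional allocation that implements the randomized allocation returned by the algorithm, the randomized allocation is ex-ante \wsdef{} by definition of the ex-ante guarantee.

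Next I would apply the Birkhoff--von Neumann theorem (Theorem~\ref{thm:bvn}) exactly as in Step~3: viewing the fractional perfect matching $Y$ of the balanced bipartite graph $G_g^+$ as a bi-stochastic matrix yields a decomposition $Y=\sum_{k=1}^{q}\lambda_k Y_k$ with $\lambda_k\in[0,1]$, $\sum_k\lambda_k=1$, and each $Y_k$ an integral perfect matching of $G_g^+$, with $q$ polynomially bounded. Deleting the edges incident to dummy goods turns each $Y_k$ into an $S$-perfect matching of $G_g$; by Proposition~\ref{perfect-matchings-goods-wsdprop1} the induced (possibly partial) allocation $X_k$ is \wsdpropone{}, and by the footnote to that proposition any completion of $X_k$ remains \wsdpropone{} — in particular, handing each real good matched to a spare slot of agent $a_i$ back to $a_i$ keeps the property, since enlarging a goods bundle can only help the condition $\exists b,\ v_i(X_i\cup\{b\})\ge\alpha_i v_i(B)$. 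Finally $X=\sum_k\lambda_k X_k$, so $\langle(\lambda_k,X_k)\rangle_k$ is exactly the randomized allocation implemented by $X$, and every integral allocation $X_k$ in its support is \wsdpropone{}; hence the randomized allocation is ex-post \wsdpropone{}.

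I do not expect a genuine obstacle here, as every ingredient (Lemma~\ref{lemma:fractionalmatching-goods}, Proposition~\ref{perfect-matchings-goods-wsdprop1}, Theorem~\ref{thm:bvn}) is already available and the structure is identical to the chores case. The one point I would make explicit, since it is the only difference from that case, is that $G_g$ need not be balanced: an $S$-perfect matching of $G_g$ can leave some real goods unassigned (they are absorbed by spare slots in $G_g^+$), so the $X_k$ obtained from $Y_k$ are a priori partial. Verifying that completing these partial allocations preserves \wsdpropone{}—which is immediate for goods because adding items to a bundle never violates the \wsdpropone{} inequality—closes the argument.
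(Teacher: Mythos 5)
Your proposal is correct and follows essentially the same route as the paper: Lemma~\ref{lemma:fractionalmatching-goods} gives the ex-ante \wsdef{} guarantee, Theorem~\ref{thm:bvn} provides the decomposition, and Proposition~\ref{perfect-matchings-goods-wsdprop1} certifies that each integral matching in the support yields a \wsdpropone{} allocation. The only difference is that you spell out explicitly the handling of real goods matched to spare slots and why completing a partial goods allocation preserves \wsdpropone{}, which the paper addresses only via the footnote to Proposition~\ref{perfect-matchings-goods-wsdprop1} -- a welcome clarification rather than a new argument.
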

\begin{proof}
    Algorithm~\ref{alg:BoBW-goods} returns an allocation \(X\) and its decomposition \(\sum_{k=1}^q\lambda_k{X_k}\). From Lemma~\ref{lemma:fractionalmatching-goods}, we know that the allocation \(X\) returned by the algorithm is \wsdef{}. Each of the \(X_k\)s in the decomposition is a \(B\)-perfect matching in the allocation graph \(G_g\).Therefore, from Proposition~\ref{perfect-matchings-goods-wsdprop1}, each \(X_k\) is \wsdpropone{}.
\end{proof}

\subsection{Sequencible Allocations via Rank-Maximal Perfect Matchings}\label{subsec:seq-goods}
In this section, we use the Lemma~\ref{lemma:rank-maximal-perfect} to show that a rank-maximal \(S\)-perfect matching in the  allocation graph \(G_g\) gives a sequencible \wsdpropone{} allocation.

\begin{theorem}\label{thm:seq-goods}
    There always exist a \wsdpropone{}+\seq{} allocation of goods.
\end{theorem}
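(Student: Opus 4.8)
The plan is to follow the proof of Theorem~\ref{thm:seq-chores}, adapted to the extended goods allocation graph \(G^+_g=(S'\cup B',E')\). First I would equip \(G^+_g\) with a rank function: for every slot \(s\) of an agent \(a_i\) (a real slot \(s_{i,\ell}\) or a spare slot \(s'_{i,\ell}\)) and every real good \(b\), set \(rank(s,b)=\pi_i^{-1}(b)\), and give all dummy goods ranks \(m+1,\dots,m+t\) in a fixed arbitrary order, so that every real good outranks every dummy good. By Theorem~\ref{thm:WPROP1-goods-exists} together with the balancing construction of \(G^+_g\), a perfect matching exists, hence a rank-maximal perfect matching \(M\) exists and is computable in polynomial time \cite{michail2007reducing,irving2003greedy}. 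I would then put \(M\) into canonical form, as in the chores case but using Proposition~\ref{prop:superset_goods}: whenever an agent \(a_i\) has real slots \(s_{i,p},s_{i,q}\) with \(p<q\) but \(\pi_i^{-1}(M(s_{i,p}))>\pi_i^{-1}(M(s_{i,q}))\), interchange the two matched goods, and repeat until no such inversion remains. Each swap is legal: \(N(s_{i,p})\subseteq N(s_{i,q})\) gives \(M(s_{i,p})\in N(s_{i,q})\), while \(\pi_i^{-1}(M(s_{i,q}))<\pi_i^{-1}(M(s_{i,p}))\le\lfloor p/\alpha_i\rfloor+1\) gives \(M(s_{i,q})\in N(s_{i,p})\); since both slots rank real goods by \(\pi_i^{-1}\), the swap reuses the same two ranks, so the signature, and hence rank-maximality, is preserved. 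After this, the lower-indexed real slots of an agent hold its more preferred goods.

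Next I would apply Lemma~\ref{lemma:rank-maximal-perfect} to \(M\), treating \(S'\) as the \(A\)-side (since \(G^+_g\) is balanced, an \(S'\)-perfect matching is a perfect matching), obtaining a picking sequence \(\sigma(S')\) of slots that realizes \(M\). Every slot matched to a dummy good is a spare slot (dummy goods have edges only to spare slots), and since such a slot prefers every real good to every dummy good, it can pick a dummy good only once all real goods are exhausted; hence these slots occupy a suffix of \(\sigma(S')\). I would delete them and replace each surviving slot by its owner, obtaining a sequence of agents, and let \(X=\langle X_1,\dots,X_n\rangle\) be the allocation where \(X_i\) is the set of real goods that \(M\) matches to slots of \(a_i\); since every real good is matched in \(M\) and real slots are matched only to real goods, \(X\) is a complete allocation of the real goods.

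The crux, which I expect to be the main obstacle, is to show this collapsed agent sequence actually realizes \(X\): at each step the named agent's most preferred available real good is exactly the good \(M\) assigns to the corresponding slot. For a real slot \(s_{i,\ell}\) this uses the edge structure of \(G^+_g\): any real good \(b\) that \(a_i\) prefers to \(M(s_{i,\ell})\) satisfies \(\pi_i^{-1}(b)<\pi_i^{-1}(M(s_{i,\ell}))\le\lfloor\ell/\alpha_i\rfloor+1\), so \(b\in N(s_{i,\ell})\); were \(b\) still available, \(s_{i,\ell}\) would have taken \(b\) rather than \(M(s_{i,\ell})\) in \(\sigma(S')\), a contradiction. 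For a spare slot the claim is immediate, since its neighbourhood is all of \(B'\) and its ranking on real goods is \(\pi_i\). Deleting the dummy-good slots does not change the set of available real goods at any step, and the canonical form guarantees that the repeated appearances of a single agent pick goods in decreasing order of preference, so the collapsed sequence is a well-defined picking sequence of length \(m\); thus \(X\) is \seq{}. Finally, the restriction of \(M\) to the real slots and real goods is an \(S\)-perfect matching in \(G_g\) (real slots are matched only to real goods), so by Proposition~\ref{perfect-matchings-goods-wsdprop1} it is a \wsdpropone{} partial allocation, and \(X\), a completion of it obtained by adding whatever real goods landed on spare slots, is \wsdpropone{}. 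Hence \(X\) is a \wsdpropone{}+\seq{} allocation of goods.
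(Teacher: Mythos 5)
Your proof is correct and follows essentially the same route as the paper's: a rank-maximal perfect matching, the canonical within-agent swap justified by the nested neighbourhoods (Condition~\ref{eq:2}/Proposition~\ref{prop:superset_goods}), Lemma~\ref{lemma:rank-maximal-perfect} to extract a picking sequence of slots, and collapsing slots to agents. The only difference is that you work in the extended graph \(G^+_g\) (mirroring the paper's chores proof of Theorem~\ref{thm:seq-chores}), which makes explicit the allocation of leftover real goods and the full length-\(m\) picking sequence, whereas the paper's goods proof runs directly on \(G_g\) with a rank-maximal \(S\)-perfect matching and leaves the completion of the resulting partial allocation implicit.
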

\begin{proof}
Let \(G_g=(S,B,E)\) be the allocation graph of an instance \(\mathcal{I}\). For each slot \(s_{i,\ell}\), we first rank the goods from \(1\) to \(m\) as \(rank(s_{i,\ell},b)=\pi_i(b)\) for all \(b\in B\). For any two slots \(s_{i,p}\) and \(s_{i,q}\) of an agent \(a_i\), if \(p>q\), then \(N(q)\subseteq N(p)\). This is because \(G_g\) satisfies Condition~\ref{eq:2}. Therefore, given a matching \(M\), if \(rank(s_{i,p},M(s_{i,p})<rank(s_{i,q},M(s_{i,q}))\), then we can interchange \(M(s_{i,p})\) and \(M(s_{i,q})\) without altering the signature of the matching. Thus, given a rank-maximal \(S\)-perfect matching \(M\), we can assume w.o.l.g that for any agent \(a_i\in A\), and \(p,q\le m_i\), if \(p>q\) then \(rank(s_{i,p},M(s_{i,p})>rank(s_{i,q},M(s_{i,q}))\).

Given a rank-maximal \(S\)-perfect matching \(M\) in \(G_g\), from Lemma~\ref{lemma:rank-maximal-perfect} we obtain a sequence \(\sigma(S)\) of slots. To construct a sequence of agents, replace each \(s_{i,\ell}\) with the corresponding agent \(a_i\). Therefore, a rank-maximal \(S\)-perfect matching in \(G_g\) gives a \wsdpropone{}+\seq{} allocation. 
\end{proof}

Therefore, using the algorithm to find rank-maximal perfect matchings \cite{michail2007reducing,irving2003greedy}, we can compute a \wsdpropone{}+\seq{} allocation of goods in time \(\mathcal{O}((m+n)^{3.5})\).

\section{Examples}\label{sec:examples}
In this section we give examples and instances for some of our claims.
\subsection{Non-Existence of \wsdpropx{} Allocations}\label{subsec:no_wpropx}
Consider an ordinal instance with three agents \(a_1,a_2,\) and \(a_3\) with equal entitlements and three goods \(b_1,b_2,b_3\). Let the agents have identical ranking \((b_1\succ b_2 \succ b_3)\) over the goods. For this instance, any \wsdpropx{} allocation must allocate one good per agent. Because otherwise, there would be an agent who gets no good and the allocation is not \wpropx{} under the valuation \((1,1,\epsilon)\).

Now consider an allocation where agents gets one item each. Without loss of generality, let agent \(a_1\) get the least valuable good \(b_3\). This allocation, however, is not \wpropx{} under the valuation \((1,\epsilon,0)\) for agent \(a_3\). Therefore, this instance admits no \wsdpropx{} allocation. It is evident that a similar example can be constructed even for the case of chores. 

\subsection{Impossibility of Pareto Optimal Allocations of Goods}\label{subsec:impossibility-goods}
We construct an instance \(\mathcal{I}=\langle A,B,\Pi,\mathcal{F} \rangle\) such that no \wsdpropone{} allocation is \po{} under all \(\Pi\)-respecting valuations. Consider two agents \(A=\{a_1,a_2\}\) with three goods \(B=\{b_1,b_2,b_3\}\). Let both the agents have identical ordinal preference: \(\pi_1(1)=\pi_2(1)=b_1, \pi_1(2)=\pi_2(2)=b_2\) and \(\pi_1(3)=\pi_2(3)=b_3\). Let both the agents have equal entitlements: \(\mathcal{F}=\{0.5,0.5\}\).

From the characterization of \wsdpropone{} allocations, we know that a \wsdpropone{} allocation must satisfy conditions \ref{eq:1} and \ref{eq:2} of Lemma~\ref{lemma:WPROP1-goods}. Applying condition \ref{eq:1}, each agent must receive at least one good in any \wsdpropone{} allocation. Suppose \emph{w.l.o.g} agent \(a_1\) receives good \(b_1\). We then set \(v_1=(1,1,1)\). Agent \(a_2\) must receive at least one of \(b_2,b_3\). Suppose \emph{w.l.o.g} agent \(a_2\) receives \(b_2\), we then set \(v_2=(1,\epsilon,\epsilon)\) where \(\epsilon<1\). The good \(b_3\) can be assigned to any agent (Refer to Table~\ref{tab:counter-example-PO-goods}). We see that this \wsdpropone{} allocation cannot be Pareto optimal as trading the good \(b_1\) with \(b_2\) gives a Pareto improvement.  

Note that in the above example, agent \(a_1\) values all the items equally. However, this equal valuation is not a necessary requirement to show the incompatibility of \wsdpropone{} with cardinal \po{}. Specifically, even if we impose a condition that items with higher rankings must have strictly higher values, we can still create analogous instances. For example, consider an instance with two agents and five goods. Let the agents have identical entitlements and ordinal rankings. It is clear from Condition~\ref{eq:1} that any \wsdpropone{} allocation must give at least two items to each agent. Now, we always can set a consistent cardinal valuation (with unequal values), that incentives trading the \(1^{\text{st}}\) item with two items of the other agent, giving a Pareto improvement.    

\begin{table}
\centering
    \begin{tabular}{lccc}
        \toprule
        \multicolumn{1}{l}{Goods} & $b_1$ & $b_2$ & $b_3$ \\ 
        \midrule
        Agent $a_1$ & \fcolorbox{gray}{gray!20}{$1$} & $1$ & $1$ \\  
        Agent $a_2$ & $1$ & \fcolorbox{gray}{gray!20}{$\epsilon$} & $\epsilon$ \\    
        \bottomrule
    \end{tabular}
    \caption{Exchanging \(b_1\) with \(b_2\) gives a Pareto dominating allocation.}
    \label{tab:counter-example-PO-goods}
\end{table}

\subsection{Impossibility of Pareto Optimal Allocation of Chores}\label{subsec:impossibility-chores}
We construct an instance \(\mathcal{I}=\langle A,B,\Pi,\mathcal{F} \rangle\) such that no \wsdpropone{} allocation is \po{} under all \(\Pi\)-respecting valuations. Consider two agents \(A=\{a_1,a_2\}\) with three chores \(B=\{b_1,b_2,b_3\}\). Let both the agents have identical ordinal preference: \(\pi_1(1)=\pi_2(1)=b_1, \pi_1(2)=\pi_2(2)=b_2\) and \(\pi_1(3)=\pi_2(3)=b_3\). Let both the agents have equal entitlements: \(\mathcal{F}=\{0.5,0.5\}\).

From the characterization of \wsdpropone{} allocations, we know that a \wsdpropone{} allocation must satisfy conditions \ref{eq:3} and \ref{eq:4} of Lemma~\ref{lemma:WPROP1-chores}. Applying condition \ref{eq:3}, we see that no agent should receive more than two chores. That is, each agent must receive at least one chore in any \wsdpropone{} allocation. Suppose \emph{w.l.o.g} agent \(a_1\) receives chore \(b_3\). We then set \(v_1=(1,1,1)\). Agent \(a_2\) must receive at least one of \(b_1,b_2\). Suppose \emph{w.l.o.g} agent \(a_2\) receives \(b_2\), we then set \(v_2=(1,1,\epsilon)\) where \(\epsilon<1\). The chore \(b_1\) can be assigned to any agent (Refer to Table~\ref{tab:counter-example-PO-chore}). We see that this \wsdpropone{} allocation cannot be Pareto optimal as trading the chores \(b_2\) with \(b_3\) gives a Pareto improvement. In Appendix section~\ref{subsec:impossibility-goods} we give a similar example for the case of goods. 

Furthermore, there exist instances where given the cardinal valuations, there does not exist a Pareto optimal allocation that is \wsdpropone{} for the underlying ordinal instance. For example, consider an instance where an agent \(a_i\) values more than \(\alpha_i.m\) many chores at value zero and no other agent values those chores at value zero. Therefore, under any Pareto optimal allocation, agent \(a_i\) gets more than or equal to \(\alpha_i.m\) many chores. However, such an allocation cannot be \wsdpropone{} for the underlying ordinal instance as it violates the Condition~\ref{eq:3}. 

\begin{table}
\centering
    \begin{tabular}{lccc}
        \toprule
        \multicolumn{1}{l}{Chores} & $b_1$ & $b_2$ & $b_3$ \\ 
        \midrule
        Agent $a_1$ & $1$ & $1$ & \fcolorbox{gray}{gray!20}{$1$} \\  
        Agent $a_2$ & $1$ & \fcolorbox{gray}{gray!20}{$1$} & $\epsilon$ \\    
        \bottomrule
    \end{tabular}
    \caption{Exchanging \(b_3\) with \(b_2\) gives a Pareto dominating allocation.}
    \label{tab:counter-example-PO-chore}
\end{table}

\end{appendices}

\end{document}